%% file: main.tex
\documentclass[
superscriptaddress,
nofootinbib,longbibliography,
onecolumn,
aps]
{revtex4-2}

\input{packages}
\input{aliases}

\begin{document}

\title{Mutual Information Across Quantum Reference Frame Perspectives}
\author{Leonardo Valente}
    \affiliation{Institute for Theoretical Physics, ETH Z{\"u}rich, 8093 Z{\"u}rich, Switzerland}
    \affiliation{Department of Theoretical Physics, University of Innsbruck, 6020 Innsbruck, Austria}
\author{Carla Ferradini}
   \affiliation{Institute for Theoretical Physics, ETH Z{\"u}rich, 8093 Z{\"u}rich, Switzerland}
\author{Flaminia Giacomini}
	\affiliation{Institute for Theoretical Physics, ETH Z{\"u}rich, 8093 Z{\"u}rich, Switzerland}
    \affiliation{Dipartimento di Fisica, Universit{\`a} di Roma Tor Vergata and Sezione INFN Roma2, Via della Ricerca Scientifica 1, 00133, Roma, Italy}
\author{Viktoria Kabel}
    \affiliation{Institute for Theoretical Physics, ETH Z{\"u}rich, 8093 Z{\"u}rich, Switzerland}

\begin{abstract}
   In quantum information theory, standard notions typically presuppose a fixed partition of the world into subsystems, encoded in a tensor product structure of the Hilbert space. Yet, in some scenarios, the subsystem structure is not uniquely fixed.
    A case in point is the description of subsystems relative to different quantum reference frames (QRFs), where the tensor product structure, although fixed in a specific QRF, changes across different QRFs. Such a \emph{relativity of subsystems} underlies many interesting features of QRFs, such as the relativity of entanglement, entropy, and coherence. It also raises a broader question: how do we compare the information across different subsystem perspectives, especially when the corresponding subsystem algebras overlap?
    Here, we propose a notion of \emph{quantum mutual information} across different QRF perspectives, which captures the correlations between subsystems relative to different frames of reference. To do so, we focus on an algebraic definition of subsystems. Studying in detail the union and overlap between subsystem algebras relative to different QRFs, we find that the usual definition of mutual information in terms of the chain law overcounts the information associated to the overlap. Based on this insight, we propose a modified chain law and show that the resulting notion of mutual information satisfies subadditivity and connects naturally to other measures found in the literature \cite{Gao_2019}. Our investigations not only provide a way to quantify the relation between different QRF perspectives from an information-theoretic point of view but also sheds light on some important features of quantum information in absence of a fixed tensor product structure.
\end{abstract}

\maketitle

\section{Introduction}

The way we partition the world into subsystems plays a crucial role in quantum information theory. Notions such as entanglement, conditional entropy, and mutual information all depend on the choice of subsystem structure. In non-relativistic quantum mechanics, the partition into subsystems is given by the tensor product structure of the joint Hilbert space. In most contexts, the latter is assumed to be fixed and given from the start. Yet, there are scenarios, in which the subsystem structure is more intricate. For example, one might be interested in comparing subsystems at different times, before and after the application of a unitary time evolution $U\left(t\right)$~\cite{Wu_2025,Glorioso2024,PRXQuantum.5.010201,Andreadakis2025}. Unless the latter is given by a tensor product $U\left(t\right) = U_A\left(t\right) \otimes U_B\left(t\right)$ on the initial subsystems $A$ and $B$, the time evolution will mix the subsystems. As a result, two subsystems at different times cannot straightforwardly be embedded into a single overarching tensor product structure. Another example arises in the context of quantum reference frames (QRFs)~\cite{bartlett2007reference, Giacomini_2017_covariance, Vanrietvelde:2018pgb, delaHamette_2020, delaHamette_2021, Castro_Ruiz_2021, Carette_2023}. There, the subsystem structure depends on the choice of reference frame \cite{Giacomini_2017_covariance, Castro_Ruiz_2021, Ahmad_2021, Hoehn_2023}. As a consequence, we cannot rely on a single tensor product structure if we want to speak about subsystems relative to different QRFs. Finally, once one moves towards the relativistic regime, subsystems are no longer characterised by the tensor-product structure in the first place but by local algebras of observables, associated to possibly overlapping regions of spacetime.\\

This raises a crucial question: to what extent can we generalise standard quantum information measures to regimes in which no unique tensor product structure is available? This question has been addressed to some extent by the algebraic approach to quantum theory, in which a subsystem is characterised by an algebra of (experimentally accessible) observables \cite{Ohya_1993, Zanardi_2004}. While it has long been established how to associate a measure of entropy to each of these generalised subsystems, this does not capture the \emph{correlations} between two different subsystems. In presence of a fixed subsystem structure, the latter can be quantified using the \emph{mutual information}, defined either through the relative entropy $D(\rho,\sigma)$, capturing the difference between two states $\rho$ and $\sigma$, or through the so-called chain rule: 
\begin{equation}
    I\left(A:B\right)_\rho = D\left(\rho_{AB},\rho_A\otimes \rho_B\right) =  H\left(A\right)_\rho + H\left(B\right)_\rho - H\left(AB\right)_\rho.
\end{equation}
Either way, the definition heavily relies on a fixed tensor-product structure and cannot be immediately generalised in absence of it, as it is the case when considering subsystems at different times, associated to overlapping spacetime regions, or relative to different quantum frames of reference.\\

Finding a measure of mutual information across different QRF perspectives would allow, for example, to give a quantitative answer to the question of how much information two observers gain about the overall system by combining their information (cf.~\cite{delaHamette_2025_charges, Doat_2025}). 
While much work has gone into studying the frame-dependence of different properties, such as superposition, entanglement, or the entropy associated to a particular subsystem, questions regarding the \emph{interplay} between different perspectives have so far only been addressed qualitatively. 
As we will see, QRFs come with a rich internal structure, that makes the problem of extending the notion of mutual information beyond a single tensor-product structure more tractable. In this work, we leverage this structure to propose a notion of quantum mutual information across ideal QRF perspectives for arbitrary finite-dimensional symmetry groups. We study both the union and the overlap between subsystems relative to different QRFs and show that a naive extension of the usual definition of mutual information to algebraic subsystems overcounts the information associated to the overlap. Based on this insight, we propose a modification to the chain rule for mutual information. We show that this notion satisfies subadditivity and discuss the relation to similar information measures studied in the context of generalised uncertainty relations \cite{Gao_2019, LaRacuente_2019}. Moreover, we use the same reasoning to come up with a notion of \emph{conditional information} and \emph{conditional mutual information} for algebraic subsystems and prove the strong subadditivity of the latter. This understanding of information across different QRF perspectives also leads to a natural interpretation of existing conserved quantities \cite{cepollaro2024,delaHamette_entropy_2026}, as the information associated to the overlap of the subsystem algebras relative to different reference frames, and can also give rise to additional conservation laws. Finally, to gain a clearer physical intuition for the mutual information, we consider several examples, including \emph{non-entangling states}, on which the QRF transformation acts as a tensor product of unitaries. These examples illustrate that the information accessible when combining two perspectives is strictly larger than the sum of its parts -- with the excess characterised by the mutual information.\\

The paper is structured as follows. \cref{sec:preliminaries} provides the necessary background on algebraic subsystems and subsystem relativity under changes of QRF, as well as information-theoretic tools such as conditional expectations, which play an important role in associating entropy measures to algebraic subsystems, and entropic quantities. In \cref{sec:algebraic_structure_QRFs}, we derive in detail the algebraic structure of the various subsystems relative to different QRF perspectives, first for the simple example of a $\mathbb{Z}_2$-symmetry and then for arbitrary finite-dimensional symmetry groups. This forms the basis for the definition of mutual information across QRF perspectives in \cref{sec:mutual_information_QRFs} through a modification of the chain rule. In this section, we also discuss the relation to the measure for algebraic mutual information proposed in \cite{Gao_2019} and extend our reasoning to yield extended notions of conditional information and conditional mutual information. Finally, in \cref{sec:examples_app}, we consider some concrete examples to illustrate the correlations between subsystems relative to different QRFs and show how recently derived conservation laws \cite{cepollaro2024, delaHamette_entropy_2026,Luppi_2026} can be understood more intuitively in light of the present work. We conclude in \cref{sec:conclusion} with a brief summary of the results and an outlook on future work.

\section{Preliminaries}
\label{sec:preliminaries}

In this section, we review several results that allow us to deal with quantum subsystems in absence of a pre-defined tensor product structure. In~\cref{sec:algebraic_subsystems}, we summarise the main argument of Zanardi et al.~\cite{Zanardi_2004} that the tensor-product and, therefore, the subsystem structure is induced by algebras of accessible observables and illustrate how this manifests in the relativity of subsystems under changes of QRF~\cite{Giacomini_2017_covariance,Castro_Ruiz_2021,Ahmad_2021,Hoehn_2023}. This is followed by a brief overview of how to associate entropies and information to such algebraic subsystems via conditional expectations \cite{Ohya_1993} in~\cref{sec:algebraic_entropy}. Moreover, we introduce the usual definition of mutual information and discuss a few subtleties that arise when trying to extend it to general subsystems in~\cref{sec:mutual_information}. Together, these results form the basis for defining a notion of mutual information across different QRF perspectives in the next section.\\

\textbf{Notation.} Given a Hilbert space $\cH$, we denote the set of bounded linear operators as $\cB(\cH)$ and the set of density operators (positive and trace one) as $\mathcal{S}(\cH)$. As motivated in the next section, a subsystem in $\cH$ is a unital $*$-subalgebra of $\cB(\cH)$. We label subsystems with letters $A$, $B$, $C$ and the corresponding algebras with $\alg{A}$, $\alg{B}$, $\alg{C}$. Given a set of operators $\{O_1, O_2, \dots, O_n\}$, we use $\braket{O_1, O_2, \dots, O_n}$ to denote the algebra they generate (closed with respect to linear combinations, products and adjoints). When we use $A$ as a quantum reference frame, we add to spaces, algebras, operators, states, states, etc., in such description the superscript $^{(A)}$ and similarly when we use $C$ as a QRF. For instance, the algebra of the subsystem $B$ in the perspective of $C$ is $\algpov{B}{C}$. Given an algebra $\alg{A}\subseteq \cB(\pov{\cH_{BC}}{A})$ and a map $\mathcal{E}:\cB(\pov{\cH_{BC}}{A})\mapsto\cB(\pov{\cH_{BC}}{A})$, we denote their pull-back (see~\cref{def:pullback_alg,def:pullback_map}) under the QRF transformation from $A$ to $C$ as, respectively, $\alg{A}_\triangleleft\subseteq \cB(\pov{\cH_{AB}}{C})$ and $\mathcal{E}^{\triangleleft}:\cB(\pov{\cH_{AB}}{C})\mapsto\cB(\pov{\cH_{AB}}{C})$.

\subsection{Algebraic Subsystems and Subsystem Relativity Under Changes of QRF}
\label{sec:algebraic_subsystems}

In standard quantum theory, subsystems arise when the Hilbert space has a tensor-product structure. In fact, it is commonly stated as one of the \emph{postulates} of quantum theory that \enquote{[t]he state space of a composite physical system is the tensor product of the component physical systems} \cite{Nielsen_Chuang_2010}. Yet, this definition presupposes an absolute, pre-existing factorisation of the Hilbert space, and thus, subsystem structure. A more operational approach to subsystems is obtained by moving from a view in which the Hilbert space and the quantum states are primary, to one, which focuses on observables and the algebras they form. This ``algebraic approach'' finds applications in quantum information \cite{Ohya_1993, Viola_2001} as well as quantum field theory \cite{Halvorson_2006, Fewster_2020}. The connection to a more operational understanding in terms of experimentally accessible observables was made by Zanardi et al.~\cite{Zanardi_2001, Zanardi_2004}. They show that, given a finite-dimensional Hilbert space $\cH$, a collection $\left\{\cA_i\right\}_{i=1}^n$ of subalgebras of $\cB(\cH)$ corresponding to sets of controllable observables, such as measurements or Hamiltonians with tunable parameters, and satisfying
\begin{enumerate}
    \item[(i)] subsystem independence: $[\cA_i,\cA_j]=0 \; \forall i\neq j$
    \item[(ii)] completeness: $\vee_{i=1}^n \cA_i\simeq \otimes_{i=1}^n \cA_i \simeq \cB(\cH)$
\end{enumerate}
induces a tensor-product structure (TPS) $\cH = \otimes_{i=1}^n\cH_i$ on the Hilbert space.\footnote{Note that, beyond the established observation that, in the presence of symmetries, different choices of QRF induce equally valid TPS, we are not making any assumptions about whether or not the \emph{dynamics} of a systems singles out a \emph{preferred} TPS (cf. \cite{Cotler_2017, Stoica_2021, Carral_2023, Stoica_2024, Soulas_2025, Stoica_2026}).}  This allows us to identify subsystems through the observables that an experimentalist can independently perform on said subsystems. In other words, what constitutes different subsystems is determined through our operational abilities. This algebraic understanding of subsystems has since been applied in the quantum foundations literature: it forms the basis for studying partitions in quantum theory \cite{Vanrietvelde_2025}, can be derived from causality considerations \cite{Ormrod_2025}, plays an important role in understanding local subsystems in gauge theory and gravity \cite{Donnelly_2012, Casini_2014, Soni_2016,VanAcoleyen_2016,Bianchi_2024,Boulle_2025, Spalvieri_2025}, and allows to highlight the relational nature of thermodynamics \cite{Jha_2026}. In what follows, we call algebraic subsystem a unital $*$-subalgebra $\cA$ of $\cB(\cH)$. This is a generalisation of the conditions (i) and (ii) of Ref.\,\cite{Zanardi_2001, Zanardi_2004} which extends the concept of a subsystem beyond the tensor product structure.

The lack of an \emph{a priori} fixed tensor product structure also underlies the \emph{relativity of subsystems} under changes of QRF \cite{Giacomini_2017_covariance, Vanrietvelde:2018pgb, Ahmad_2021, Hoehn_2021, Castro_Ruiz_2021, Hoehn_2023}: the fact that each QRF perspective is endowed with its own Hilbert space decomposition, but the tensor product factorisation changes when changing the QRF. This feature can be traced back to the fact that subsystems relative to different QRF perspectives are described by different relational algebras. In line with the operational understanding discussed above, these can be understood as the algebras of observables accessible from within a given QRF perspective. While different QRF approaches diverge on the precise set of observables that can be accessed from within a given QRF perspective \cite{delaHamette_2025_charges, Doat_2025}, they all agree on the relationality of these algebras. Here, we will focus on the perspectival approach to QRFs~\cite{Giacomini_2017_covariance} and the closely related perspective-neutral approach~\cite{Vanrietvelde:2018pgb}. However, since we follow an algebraic approach, we expect the general methodology to carry over to the other approaches to QRF, even though the specific subalgebras of observables will differ.

We illustrate the situation with a toy model: three systems with $\mathbb{Z}_2$-symmetry \cite{delaHamette_2020,cepollaro2024, Castro-Ruiz_2025}. In this case, the algebra of observables on a single system is spanned by the Pauli-$Z$ and Pauli-$X$ operators. The former encodes the orientation of the system, the latter can be thought of as its $\mathbb{Z}_2$-momentum (or conjugate variable), which is invariant under the action of the symmetry group. If we look at the system from the outside, the symmetry-invariant properties are given by the relative orientations $Z_A\otimes Z_B$, $Z_A\otimes Z_C, Z_B\otimes Z_C$ as well as the momenta $X_A,X_B, X_C$ (see~\Cref{fig:subsystem_relativity_Z2}). 
\begin{figure}[h!]
    \centering
    \includegraphics[width=0.4\linewidth]{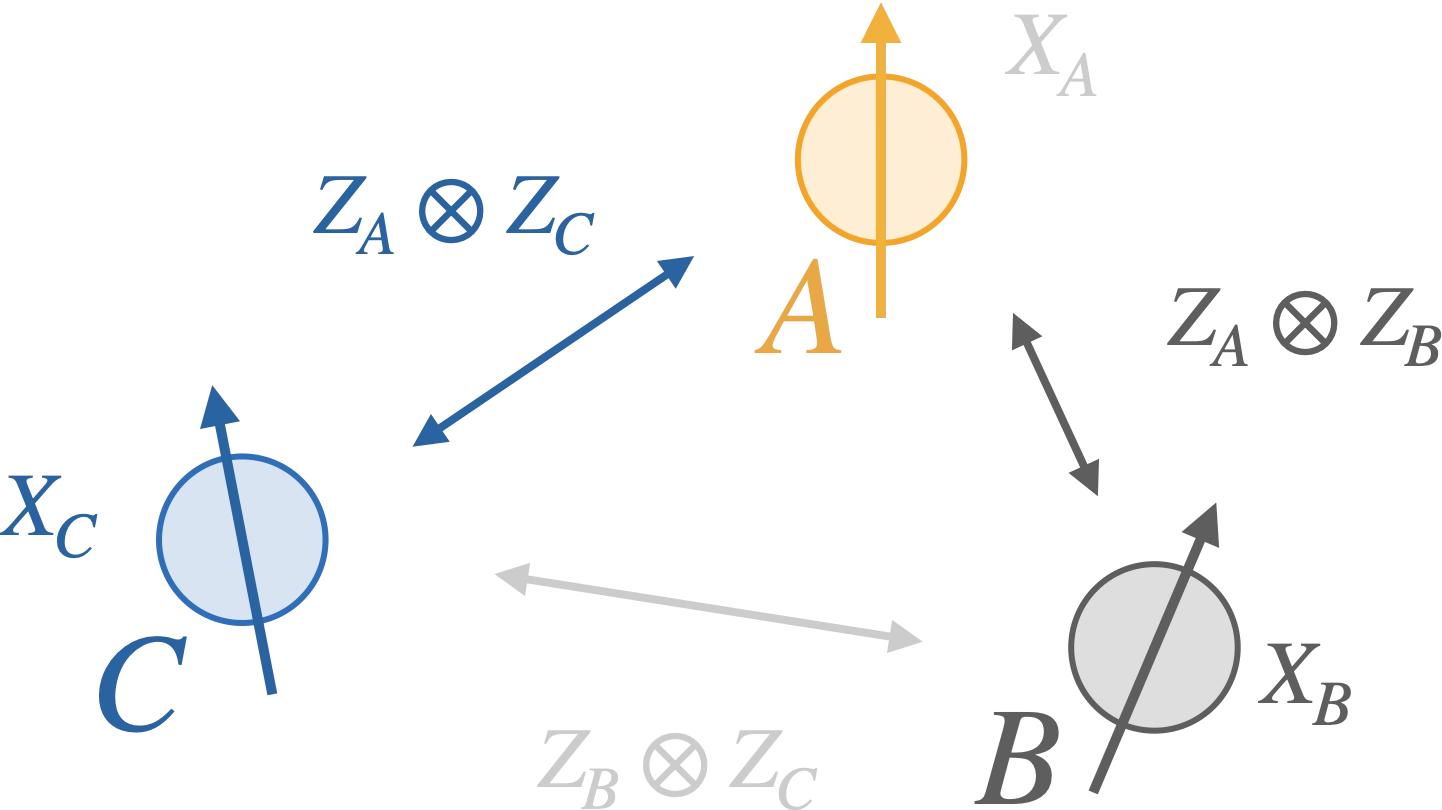}
    \caption{Invariant quantities for a three-qubit system with $\mathbb{Z}_2$-symmetry. The choice of a QRF comes with a preferred partition into subsystems. For example, relative to $A$, the observables on system $B$ are spanned by $Z_A\otimes Z_B, X_B$ and those on $C$ by $Z_A\otimes Z_C, X_C$.}
    \label{fig:subsystem_relativity_Z2}
\end{figure}
In the perspective-neutral and the perspectival approach, one further assumes that the total charge, in this case $X_A\otimes X_B\otimes X_C$, is fixed. Thus, the total algebra of observables of interest can be generated from a subset of only four of the above operators.\footnote{In the extra-particle approach, on the other hand, one includes the total charge in the algebra of observables relative to a chosen QRF.} The choice of QRF can be seen to pick out particular subalgebras of observables. For example, relative to $A$, it is natural to describe system $C$ by its orientation relative to $A$, that is $Z_A\otimes Z_C$, and the canonically conjugate momentum $X_C$. Similarly, system $B$ is characterised by $Z_A\otimes Z_B$ and $X_B$. The formal derivation of these statements is summarised in \cref{app:subsystem_relativity}.

These operators span the subsystem algebras
\begin{equation}
    \algpovemb{B}{A} = \langle X_B, Z_A\otimes Z_B\rangle, \;\; \algpovemb{C}{A} = \langle X_C, Z_A\otimes Z_C\rangle,
\end{equation}
where the subscript ``$\mathrm{emb}$'' signals that these observables are expressed relative to an outside perspective. It is not difficult to see that these algebras mutually commute:
\begin{equation}
\begin{aligned}
    [X_B, X_C] &= 0,\;\; [X_B, Z_A\otimes Z_C]=0,\\
    [Z_A\otimes Z_B, X_C] &= 0,\;\; [Z_A\otimes Z_B, Z_A\otimes Z_C]=0.
\end{aligned}
\end{equation}
They also span the total algebra of observables in the perspective-neutral approach -- the relative orientations can be combined to yield $Z_B\otimes Z_C$ and the momentum of $A$ is not an independent degree of freedom due to the constraint. They thus satisfy the conditions (i) and (ii) required to induce a TPS on the corresponding Hilbert space, called the \emph{physical} Hilbert space \cite{Ahmad_2021}:
\begin{equation}
    \cH_{\textup{phys}} \simeq \pov{\cH_B}{A}\otimes \pov{\cH_C}{A} = \pov{\cH_{BC}}{A}.
\end{equation}
With respect to this TPS, the observables in $\algpov{B}{A}$ act trivially on the space $\pov{\cH_C}{A}$ and vice versa. This captures the partition into subsystems as seen from the perspective of QRF $A$. The same reasoning can be repeated in the perspective of $C$ with
\begin{equation}
    \algpovemb{A}{C} = \langle X_A, Z_A\otimes Z_C\rangle, \;\; \algpovemb{B}{C} = \langle X_B, Z_B\otimes Z_C\rangle
\end{equation}
inducing a TPS on the Hilbert space 
\begin{equation}
    \cH_{\textup{phys}}\simeq \pov{\cH_A}{C}\otimes \pov{\cH_B}{C} = \pov{\cH_{AB}}{C}.
\end{equation}
Importantly, while related by a unitary transformation (the QRF transformation), these two Hilbert spaces carry different TPS and thus come with different partitions into subsystems. This subsystem relativity under changes of QRF has been studied in detail in \cite{Giacomini_2017_covariance, Ahmad_2021, Hoehn_2021, Castro_Ruiz_2021, delaHamette_2021, Hoehn_2023} and extends to arbitrary locally compact symmetry groups $G$~\cite{delaHamette_2021,Carette_2023,Fewster_2024,DeVuyst_2024b}. It also underlies many of the interesting features of QRF transformations, such as the so-called \enquote{paradox of the third particle} \cite{Angelo_2011,Angelo_2012,Krumm_2021,Hoehn_2021, Castro_Ruiz_2021, Hausmann_2026, Brukner_2026, Palumbo_2026} and the frame-dependence of many quantities like entanglement, entropy, and coherence~\cite{Giacomini_2017_covariance,cepollaro2024,Hoehn_2023,deVuyst_2024a}. All of these quantities, however, are derived separately within each QRF perspective. Within each perspective, the TPS is fixed and one can use the common definitions of quantum information measures. Here, we want to study the mutual information \emph{across} the different perspectives. Since the subsystems relative to different QRFs are not defined within the same TPS, we need to go beyond this regime. Given that the subsystems are instead associated with concrete subalgebras of observables, the natural language is a more algebraic approach to quantum information notions, which is commonly used in the context of (algebraic) quantum field theory~\cite{Haag_1996,Witten_2018,Hollands_2018,Araki_1976, Casini_2014} but can also be applied to non-relativistic quantum information \cite{Ohya_1993,Zanardi_2001,Viola_2001,Barnum_2004,bartlett2007reference,Balachandran_2013,Kabernik_2020,Chiribella_2018,Donnelly_2012}.

\subsection{Conditional Expectation}
\label{sec:algebraic_entropy}

Given a subsystem with respect to a fixed TPS, the reduced state of said system is obtained via the partial trace. When dealing with \emph{algebraic subsystems}, defined by more general algebras of observables, the equivalent operation to taking the partial trace is given by the conditional expectation.

\begin{proposition}[Conditional expectation \cite{Ohya_1993}]\label{def:cond_expect1}
  Let there be an algebra of observables $\mathcal{A} \subset\mathcal{M}$. There exists a unique linear map $\cexp:\mathcal{M}  \mapsto \mathcal{A}$  such that
\begin{enumerate}
\vspace{-0.5\baselineskip}    \renewcommand\labelitemi{–} \setlength\itemsep{0.1em}    \setlength\parskip{0pt}
    \item If $O\in \mathcal{M}$ is positive then $E\left(O\right) \in \mathcal{A}$ is also positive.\label{condition_1}
    \item $\cexp\left(A\right)=A$ for every $A \in \mathcal{A}$.\label{condition_2}
    \item $\cexp\left(OA\right)=\cexp\left(O\right)A$ holds for any $O \in \mathcal{M}$ and $A \in \mathcal{A}$.\label{condition_3}
    \item $\Tr[\cexp\left(O\right)]=\Tr[O]$ holds for any $O \in \mathcal{M}$.\label{condition_4}
\end{enumerate}
\end{proposition}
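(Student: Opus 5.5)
The plan is to work in the finite-dimensional setting of the paper, with $\mathcal{M}=\cB(\cH)$ (for a general subalgebra $\mathcal{M}$ one restricts the maps constructed below), and to equip $\mathcal{M}$ with the Hilbert--Schmidt inner product $\langle O_1,O_2\rangle=\Tr[O_1^\dagger O_2]$. I would prove uniqueness first, since it also singles out the only possible candidate for existence.

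\emph{Uniqueness.} Suppose $\cexp$ satisfies conditions 2--4. For $O\in\mathcal{M}$ and $A\in\mathcal{A}$, conditions 3 and 4 give
\begin{equation}
\Tr[\cexp(O)A]=\Tr[\cexp(OA)]=\Tr[OA].
\end{equation}
Since $\mathcal{A}$ is $*$-closed, this says $\langle \cexp(O)^\dagger - O^\dagger, A^\dagger\rangle = 0$ for every $A^\dagger\in\mathcal{A}$. Because $\cexp(O)\in\mathcal{A}$, it follows that $\cexp(O)$ is exactly the Hilbert--Schmidt orthogonal projection $P_{\mathcal{A}}(O)$ of $O$ onto $\mathcal{A}$. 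The restriction of the inner product to $\mathcal{A}$ is nondegenerate, so this projection is unique, and hence so is $\cexp$.

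\emph{Existence of the algebraic properties.} Define $\cexp:=P_{\mathcal{A}}$.
\begin{itemize}
\item Linearity and condition 2 are immediate, since $P_{\mathcal{A}}$ is an orthogonal projection onto $\mathcal{A}$.
\item Condition 4 follows because $\mathcal{A}$ is unital: $\Tr[\cexp(O)]=\langle \mathbb{1},P_{\mathcal{A}}O\rangle=\langle\mathbb{1},O\rangle=\Tr[O]$.
\item For condition 3, fix $A\in\mathcal{A}$. For any $B\in\mathcal{A}$ we have $AB\in\mathcal{A}$, so $\Tr[\cexp(O)AB]=\Tr[OAB]$. Hence $\cexp(O)A\in\mathcal{A}$ satisfies the characterising identity of $P_{\mathcal{A}}(OA)$, and by the uniqueness argument $\cexp(OA)=\cexp(O)A$.
\end{itemize}

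\emph{Positivity (condition 1).} This is the step I expect to be the main obstacle, because positivity is not visible from the projection picture. My first attempt is an averaging representation. Let $\mathcal{A}'$ be the commutant of $\mathcal{A}$, and $U(\mathcal{A}')$ its unitary group, which is compact with Haar measure $\mu$. Set
\begin{equation}
\tilde{\cexp}(O)=\int_{U(\mathcal{A}')} U O U^\dagger \, d\mu(U).
\end{equation}
\begin{itemize}
\item $\tilde{\cexp}$ is manifestly positive and trace preserving.
\item By invariance of the Haar measure, $\tilde{\cexp}(O)$ commutes with every unitary in $\mathcal{A}'$. Unitaries span the finite-dimensional $C^*$-algebra $\mathcal{A}'$, so $\tilde{\cexp}(O)\in\mathcal{A}''$, and $\mathcal{A}''=\mathcal{A}$ by the finite-dimensional double commutant theorem (using that $\mathcal{A}$ is unital).
\item $\tilde{\cexp}$ fixes $\mathcal{A}$ pointwise and satisfies $\tilde{\cexp}(OA)=\tilde{\cexp}(O)A$, since each $U$ commutes with $A$.
\end{itemize}
So $\tilde{\cexp}$ satisfies conditions 2--4, and by uniqueness $\tilde{\cexp}=\cexp$. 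This establishes condition 1.

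\emph{Backup for positivity.} If one prefers to avoid Haar measures, use the structure theorem. Decompose $\cH=\bigoplus_k \cH_k\otimes\mathcal{K}_k$ with $\mathcal{A}\cong\bigoplus_k\cB(\cH_k)\otimes\mathbb{1}_{\mathcal{K}_k}$, and let $P_k$ be the projector onto the $k$-th block. Then one can write explicitly
\begin{equation}
\cexp(O)=\bigoplus_k \frac{1}{\dim\mathcal{K}_k}\Tr_{\mathcal{K}_k}\!\left[P_k O P_k\right]\otimes\mathbb{1}_{\mathcal{K}_k}.
\end{equation}
This is a composition of compressions and partial traces, hence positive, and it is directly checked to satisfy conditions 2--4.
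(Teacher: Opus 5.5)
Your proof is correct and complete. The paper gives no argument of its own here, only a pointer to Proposition~1.12 of Ohya--Petz, so what you supply is a self-contained derivation rather than a variant of something in the text.

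Your uniqueness step proves slightly more than is stated. Conditions 3 and 4 alone, together with $\cexp(\mathcal{M})\subseteq\mathcal{A}$, already force $\Tr[(\cexp(O)-O)A]=0$ for all $A\in\mathcal{A}$, and hence $\cexp=P_{\mathcal{A}}$. Your expression $\langle \cexp(O)^\dagger-O^\dagger,A^\dagger\rangle$ is just $\langle A,\cexp(O)-O\rangle$, so the bookkeeping is roundabout but fine.

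For positivity, the Haar average over $U(\mathcal{A}')$ is valid, and it gives something the statement does not. It writes $\cexp$ as a mixture of unitary conjugations, hence as a unital CPTP map. That is exactly the property the paper later invokes when it applies the data processing inequality to $\cexparg{\pov{B}{C}}$ in the subadditivity proof. It also shows two of the paper's examples as instances of one formula: the partial trace is the average over $U(\mathcal{H}_B)$, and pinching is the average over the phases $\sum_i e^{i\phi_i}P_i$.

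Your remark that positivity is not visible in the projection picture is too pessimistic, though. There is a cheaper route that needs neither Haar measure nor the double commutant theorem:
\begin{itemize}
\item For $O\geq 0$, the operator $\cexp(O)^\dagger$ satisfies the same characterising identity as $\cexp(O)$, so $\cexp(O)$ is self-adjoint.
\item The projector $Q_-$ onto the negative eigenspaces of $\cexp(O)$ is a polynomial in $\cexp(O)$, so it lies in the unital algebra $\mathcal{A}$.
\item Then $\Tr[\cexp(O)Q_-]=\Tr[OQ_-]=\Tr[Q_-OQ_-]\geq 0$. But $\Tr[\cexp(O)Q_-]$ is the sum of the negative eigenvalues of $\cexp(O)$ counted with multiplicity, which is strictly negative unless $Q_-=0$.
\end{itemize}
Complete positivity follows by running the same argument on $\mathcal{A}\otimes\mathbb{C}^{n\times n}\subseteq\mathcal{M}\otimes\mathbb{C}^{n\times n}$. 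There $\cexp\otimes\mathrm{id}$ satisfies conditions 3 and 4, so by your uniqueness step it is the conditional expectation onto that algebra.
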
 
\begin{proof}
    See Proposition 1.12 of \cite{Ohya_1993}.
\end{proof}
The conditional expectation provides the unique projection onto the subalgebra of interest that preserves the trace and thus maps quantum states to quantum states. To gain some intuition, it is helpful to consider what the conditional expectation looks like for a typical tensor-product subsystem.

\begin{example}[Tensor-product subsystem]
\label{ex:TP_Subsystem}
Consider a bipartite quantum system with Hilbert space $\cH = \cH_A\otimes \cH_B$ and algebra $\mathcal{M}= \cB\left(\cH\right)$. Define the subsystem algebra associated to $A$ as $\alg{A} = \left\{O_A\otimes \mathbb{I}_B|O_A\in \cB\left(\cH_A\right)\right\}$. The conditional expectation onto this subalgebra is given by
\begin{equation}
\begin{aligned}
    \cexparg{A}: \mathcal{M} &\mapsto \alg{A}\\
    O &\mapsto \Tr_B\left(O\right) \otimes \tfrac{\id_B}{d_B}
\end{aligned}
\end{equation}
where $d_B$ is the dimension of $\cH_B$. Indeed, one can easily see that this satisfies the properties \ref{condition_1} to \ref{condition_4} above.
\end{example}

However, the conditional expectation allows one to project onto more general subalgebras of observables, such as those invariant under a the action of a symmetry group.

\begin{example}[Invariant subsystem]
    Let $\mathcal{M}$ be an algebra of observables on Hilbert space $\cH$ and $G$ a finite-dimensional symmetry group with unitary representation $U: G \mapsto \cB(\cH)$. Consider the algebra of invariant observables
    \begin{equation}
        \cA_{\textup{inv}} = \left\{O\in \mathcal{M}|[U,O]=0\right\}.
    \end{equation}
    The conditional expectation onto this algebra is given by the \emph{incoherent G-twirl}
    \begin{equation}
    \begin{aligned}
        \cexp:\mathcal{M} &\mapsto \cA_{\textup{inv}}\\
            O&\mapsto \mathcal{G}_{\textup{inc}}\left(O\right)\coloneqq\frac{1}{|G|}\sum_{g\in G}  U\left(g\right) O U^\dagger \left(g\right)
    \end{aligned}
    \end{equation}
    It is straightforward to see that it satisfies the properties of the conditional expectation: 
    \begin{enumerate}
        \item Conjugation by a unitary preserves positivity and a convex combination of positive operators is positive.
        \item We have:
        \begin{equation}
            \begin{split}
                E\left(E\left(O\right)\right) &= \frac{1}{|G|^2}\sum_{g,h\in G} U\left(h\right)U\left(g\right) O U^\dagger\left(g\right) U^\dagger\left(h\right) \\
                &= \left(\frac{1}{|G|}\sum_{h\in G}1\right)\frac{1}{|G|}\sum_{g'\in G}  U\left(g'\right) O U^\dagger\left(g'\right) = E\left(O\right),
            \end{split}
        \end{equation}
        since $U(h)U(g) = U(hg)=:U(g')$.
        \item For any $O\in \mathcal{M}$, $A\in \mathcal{A}_{\mathrm{inv}}$:
        \begin{equation}
            \begin{split}
                E\left(OA\right) = \frac{1}{|G|}\sum_{g\in G}U\left(g\right) O A U^\dagger \left(g\right) = \frac{1}{|G|}\sum_{g\in G}U\left(g\right) O U^\dagger \left(g\right) A = E\left(O\right) A                
            \end{split}
        \end{equation}
        since by definition $A$ is invariant under the action of the symmetry group.
        \item By linearity and cyclicity of the trace $\Tr[E\left(O\right)] = \frac{1}{|G|}\sum_{g\in G}\Tr[U\left(g\right) O U^\dagger\left(g\right)] = \Tr[O]$.
    \end{enumerate}
\end{example}

\noindent Finally, let us give one more example, which will play an important role throughout the rest of the paper: the pinched subsystem.

\begin{example}[Pinched subsystem (cf.~\cite{Bertoni_2020})]
\label{ex:pinched_subsystem}
     Let $\mathcal{M}=\cB\left(\mathcal{H}\right)$ for a Hilbert space $\cH$ of dimension $d$. Let $X \in \cB\left(\mathcal{H}\right)$ be an observable with spectral decomposition $X = \sum_i \lambda_i P_i$, where $P_i$ is the projector onto the eigenspace with eigenvalue $\lambda_i$. Then, the pinching map
     \begin{equation}
    \begin{aligned}
        \mathcal{P}_{X}: \mathcal{M} &\mapsto \cA\\
        O &\mapsto \sum_i P_i O P_i
    \end{aligned}
    \end{equation}
    is the conditional expectation onto the subalgebra $\cA \subset \mathcal{M}$ that is block-diagonal with respect to the eigenspaces of $X$. To see that it is a conditional expectation, note that pinching is a projective CPTP map, implying conditions \ref{condition_1}, \ref{condition_2}, and \ref{condition_4}, while condition \ref{condition_3} can easily be checked explicitly, $E\left(OA\right) = \sum_i P_i OA P_i =\sum_i P_i O P_i A = E\left(O\right)A$ for $A\in \mathcal{A}$ since $[A,P_i]=0$.
\end{example}

Let us close this subsection by noting a convenient property of the relative entropy under the restriction onto a subalgebra of observables. Recall that the relative entropy is defined as follows.

\begin{definition}[Relative entropy]
\label{def:rel_entropy}
    The relative entropy between two states $\rho,\sigma \in \mathcal{S}\left(\cH\right)$ is 
    \begin{equation}
        D\left(\rho,\sigma\right) = 
        \begin{cases} \Tr\left[\rho \left(\log \rho - \log \sigma\right)\right] &\text{ if } \mathrm{supp}\left(\rho\right) \subseteq \mathrm{supp}\left(\sigma\right)\\ \infty &\text{ otherwise} \end{cases}
    \end{equation}
\end{definition}

Given a subalgebra $\mathcal{A}\subset \mathcal{M}$ and a state $\sigma$ that is invariant under the application of the corresponding conditional expectation $E$, we can decompose the relative entropy into two parts: the relative entropy of $\rho$ compared to its reduced version $E\left(\rho\right)$ and the relative entropy between $E\left(\rho\right)$ and $\sigma$.

\begin{theorem}[Decomposition of relative entropy \cite{Ohya_1993}]
\label{th:dec_rel_entropy}
    Let $\alg{A} \subset \mathcal{M}$ be a subalgebra with conditional expectation $\cexp: \mathcal{M}\mapsto \cA$ and $\sigma$ an invertible state of $\mathcal{M}$ such that $\cexp\left(\sigma\right)= \sigma$. Then,
    \begin{equation}
        D\left(\rho,\sigma\right) =  D\left(\rho, \cexp\left(\rho\right)\right) +D\left(\cexp\left(\rho\right),\sigma\right)
    \end{equation}
    for all states $\rho \in \mathcal{M}$.\label{th:decomposition_rel_entropy}
\end{theorem}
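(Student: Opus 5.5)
The idea is to expand both sides using the definition of relative entropy (\cref{def:rel_entropy}). The cross terms then cancel because of the module property (condition \ref{condition_3}) and the trace preservation (condition \ref{condition_4}) of the conditional expectation in \cref{def:cond_expect1}. Writing out the right-hand side, with supports assumed compatible for now,
\begin{equation}
    D\left(\rho,\cexp(\rho)\right)+D\left(\cexp(\rho),\sigma\right) = \Tr[\rho\log\rho] - \Tr[\rho\log \cexp(\rho)] + \Tr[\cexp(\rho)\log\cexp(\rho)] - \Tr[\cexp(\rho)\log\sigma].
\end{equation}
Comparing with $D(\rho,\sigma)=\Tr[\rho\log\rho]-\Tr[\rho\log\sigma]$, it suffices to prove the general identity
\begin{equation}
    \Tr[\cexp(\rho) A] = \Tr[\rho A] \quad \text{for all } A\in\alg{A}.
\end{equation}
This identity would be applied twice: once with $A=\log\cexp(\rho)$ and once with $A=\log\sigma$.

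To prove the identity, apply condition \ref{condition_3} to get $\cexp(\rho A)=\cexp(\rho)A$. Taking the trace and using condition \ref{condition_4} then gives $\Tr[\rho A]=\Tr[\cexp(\rho A)]=\Tr[\cexp(\rho)A]$. It remains to check that both logarithms lie in $\alg{A}$.
\begin{itemize}
    \item For $\log\sigma$: $\sigma=\cexp(\sigma)\in\alg{A}$ is invertible. Since $\alg{A}$ is a finite-dimensional unital $*$-algebra, it is closed under continuous functional calculus, so $\log\sigma$ is a polynomial in $\sigma$ and therefore belongs to $\alg{A}$.
    \item For $\log\cexp(\rho)$: $\cexp(\rho)$ is positive by condition \ref{condition_1} and lies in $\alg{A}$. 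We interpret $\log$ on its support, with the convention $0\log 0=0$. The support projection and the restricted logarithm are functions of $\cexp(\rho)$ and hence still lie in $\alg{A}$.
\end{itemize}

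The main obstacle is the bookkeeping of supports, so that no infinite quantities are manipulated illegitimately. Since $\sigma$ is invertible, $D(\rho,\sigma)$ and $D(\cexp(\rho),\sigma)$ are finite. We still have to show that $\mathrm{supp}(\rho)\subseteq\mathrm{supp}(\cexp(\rho))$, so that the first term on the right is finite and the expansion above is valid. To see this, let $Q$ be the projector onto $\ker\cexp(\rho)$; it lies in $\alg{A}$ as a spectral projection. Then $\Tr[\rho Q]=\Tr[\cexp(\rho)Q]=0$ by the identity above. Since $\rho\ge0$, this forces $Q\rho Q=0$, hence $\rho Q=0$, which gives the support inclusion.

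With finiteness settled, the two cross terms $\Tr[\rho\log\cexp(\rho)]$ and $\Tr[\cexp(\rho)\log\cexp(\rho)]$ cancel by the identity. Likewise $\Tr[\cexp(\rho)\log\sigma]=\Tr[\rho\log\sigma]$, and the right-hand side reduces to $D(\rho,\sigma)$, as claimed.
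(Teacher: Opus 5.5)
Your proof is correct and follows essentially the standard argument behind the result the paper cites (Ohya--Petz, Theorem 1.13; the paper gives no proof beyond this reference). That argument expands both sides and cancels the cross terms using $\Tr[\rho A]=\Tr[E(\rho)A]$ for $A\in\mathcal{A}$, applied to $A=\log E(\rho)$ and $A=\log\sigma$, which both lie in $\mathcal{A}$; your check that $\mathrm{supp}(\rho)\subseteq\mathrm{supp}(E(\rho))$, via the kernel projection of $E(\rho)$ lying in $\mathcal{A}$, supplies the finiteness bookkeeping that the citation leaves implicit.
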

\begin{proof}
    See Theorem 1.13 of \cite{Ohya_1993}.
\end{proof}

\subsection{Mutual Information}
\label{sec:mutual_information}

For tensor-product subsystems, the conditional mutual information is defined as follows.

\begin{definition}[Conditional mutual information \cite{Cerf_1998}]
\label{def:mutual_information}
Let $\rho_{ABC}\in\mathcal{S}\left(\cH_A\otimes \cH_B\otimes \cH_C\right)$ be a quantum state. The conditional mutual information of $A$ and $B$ given $C$ is defined as 
\begin{equation}
    \minfo\left(A:B|C\right)\coloneqq H\left(A|C\right) + H\left(B|C\right) - H\left(AB|C\right),
\end{equation}
where the conditional entropy is defined through the chain rule, e.g., $H\left(A|C\right)=H\left(AC\right)-H\left(C\right)$.
\end{definition}

If the conditioning system is trivial, the mutual information can be written as a relative entropy~(\cref{def:rel_entropy}), 
\begin{equation}
     \minfo\left(A:B\right)\coloneqq H\left(A\right) + H\left(B\right) - H\left(AB\right) = D\left(\rho_{AB}, \rho_{A}\otimes \rho_{B}\right),
\end{equation}
where $\rho_{A}\coloneqq\Tr_{B}[\rho_{AB}]$ and $\rho_{B}\coloneqq \Tr_A[\rho_{AB}]$ are the reduced states of the state $\rho_{AB}$. 
This shows that the mutual information measures the difference between having access to the whole bipartite system, characterised by the state $\rho_{AB}$ with entropy $H\left(AB\right)$, and the information available when accessing each subsystem separately, described by the state $\rho_A\otimes \rho_B$ having entropy $H\left(A\right)+H\left(B\right)$. By subadditivity of entropy, the mutual information is always positive, $\minfo\left(A:B\right)\ge0$. More generally, one can prove that the same holds for the conditional mutual information.

\begin{proposition}[Strong subadditivity~\cite{Nielsen_Chuang_2010}]
For any state $\rho_{ABC} \in \mathcal{S}\left(\cH_A\otimes \cH_B\otimes \cH_C\right)$, the mutual information satisfies $\minfo\left(A:B|C\right)\ge0$.
\end{proposition}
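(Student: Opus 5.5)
To prove $\minfo(A:B|C)\ge 0$, I would rewrite the conditional mutual information as a difference of two relative entropies and then invoke monotonicity of the relative entropy under partial trace (the data-processing inequality). This is the standard route and uses only tools already introduced above: the relative-entropy representation of mutual information and the tensor-product conditional expectation of \cref{ex:TP_Subsystem}.

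The first step is to expand the definition using the chain rule:
\begin{equation}
    \minfo(A:B|C) = H(AC)+H(BC)-H(ABC)-H(C).
\end{equation}
Next, I would compare $\rho_{ABC}$ with $\rho_{AC}\otimes \tfrac{\id_B}{d_B}$. Writing $\log(\rho_{AC}\otimes \id_B/d_B) = \log\rho_{AC}\otimes \id_B - \log d_B\,\id$ and using the partial trace, one gets
\begin{equation}
    D\bigl(\rho_{ABC},\rho_{AC}\otimes \tfrac{\id_B}{d_B}\bigr) = -H(ABC)+H(AC)+\log d_B .
\end{equation}
The same calculation applied to $\rho_{BC}$ and $\rho_C\otimes \id_B/d_B$ gives
\begin{equation}
    D\bigl(\rho_{BC},\rho_C\otimes \tfrac{\id_B}{d_B}\bigr) = -H(BC)+H(C)+\log d_B .
\end{equation}
Subtracting the second identity from the first yields exactly
\begin{equation}
    \minfo(A:B|C) = D\bigl(\rho_{ABC},\rho_{AC}\otimes \tfrac{\id_B}{d_B}\bigr) - D\bigl(\rho_{BC},\rho_C\otimes \tfrac{\id_B}{d_B}\bigr).
\end{equation}
Both relative entropies are finite, because $\rho_{ABC}$ is supported in $\mathrm{supp}(\rho_{AC})\otimes\cH_B$, so the subtraction is legitimate.

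Positivity then reduces to showing that the relative entropy does not increase under $\Tr_A$. The second pair of states is precisely the image of the first pair under this partial trace. This monotonicity is the one genuinely nontrivial ingredient, and \cref{th:dec_rel_entropy} does not provide it directly. That theorem only gives monotonicity under a conditional expectation $E$ when the second argument is $E$-invariant, via $D(\rho,\sigma)\ge D(E(\rho),\sigma)$.

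I would first try to use that theorem with $E=\cexparg{BC}$, the map $O\mapsto \Tr_A(O)\otimes \id_A/d_A$. This requires $\sigma=\rho_{AC}\otimes\id_B/d_B$ to be $E$-invariant, which fails in general. The reduction would therefore only work after also replacing $\sigma$, which is circular.

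For this reason, I expect the main obstacle to be the data-processing inequality itself. My plan is to obtain it from Lieb's concavity theorem, equivalently joint convexity of $D$. Partial trace over $A$ can be written as an average of unitary conjugations, $\Tr_A(\rho)\otimes \tfrac{\id_A}{d_A}=\tfrac{1}{d_A^2}\sum_{j} (W_j\otimes\id)\rho (W_j\otimes\id)^\dagger$, where the $W_j$ are the Weyl–Heisenberg unitaries on $A$. Joint convexity together with unitary invariance of $D$ then gives $D(\Tr_A\rho,\Tr_A\sigma)\le D(\rho,\sigma)$.

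Alternatively, since the statement is cited from \cite{Nielsen_Chuang_2010}, at that point I would simply invoke joint convexity of the relative entropy (Lieb–Ruskai) as a standard result and not rederive it.
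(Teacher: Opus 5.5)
Your argument is correct. The paper does not prove this proposition; it quotes it as background from \cite{Nielsen_Chuang_2010}, so the meaningful comparison is with the textbook proof it defers to, which takes a different route.

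That proof also starts from joint convexity of $D$ (Lieb), but uses it first to get concavity of the conditional entropy, via $H(A|B)=\log d_A - D(\rho_{AB},\tfrac{\id_A}{d_A}\otimes\rho_B)$. It then notes that $H(A)+H(B)-H(AC)-H(BC)$ is convex in $\rho_{ABC}$ and vanishes on pure states, hence is $\le 0$. Purifying $\rho_{ABC}$ turns this into $H(ABC)+H(C)\le H(AC)+H(BC)$. Monotonicity of $D$ under partial trace appears there only as a separate consequence of joint convexity, proved by the same unitary-averaging trick you use.

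You instead go straight through monotonicity, writing $\minfo(A:B|C)$ as the drop $D(\rho_{ABC},\rho_{AC}\otimes\tfrac{\id_B}{d_B})-D(\rho_{BC},\rho_C\otimes\tfrac{\id_B}{d_B})$ under $\Tr_A$. Neither route is more elementary, since both rest on the same nontrivial input. The textbook route works only with finite entropies, so it needs no support bookkeeping. Yours makes the data-processing content explicit and has the same structure the paper later uses for \cref{SA_BA_BC}: there $\minfo(\pov{B}{A}:\pov{B}{C})$ is likewise written as a difference of two relative entropies and bounded by data processing under a conditional expectation.

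You are also right that \cref{th:dec_rel_entropy} cannot supply this step, because the reference state is not invariant under the relevant conditional expectation. One step you leave implicit: the twirl yields $D(\Tr_A\rho\otimes\tfrac{\id_A}{d_A},\Tr_A\sigma\otimes\tfrac{\id_A}{d_A})$. This equals $D(\Tr_A\rho,\Tr_A\sigma)$ because tensoring both arguments with the same state leaves $D$ unchanged.
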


Beyond its role as a correlation measure, the mutual information has several operational interpretations which further justify its use in information theory. For instance, in distributed compression tasks $\minfo\left(A:B\right)$ is the optimal rate of classical communication required to merge the correlations between $A$ and $B$ using shared entanglement~\cite{Horodecki_2005,Horodecki_2006,Schumacher_2006}. The mutual information also provides bounds in correlations that can be exploited for tasks such as distributed computation~\cite{Cleve_1999}, dense coding~\cite{Bennett_1992}, and secret-key extraction~\cite{Csiszar_1978,Christandl_2004}, and it is used to mark phase transitions and long-range correlations in quantum many-body systems~\cite{Wolf_2008,Furukawa_2009,Singh_2011,Alcaraz_2013}. In relation with classical correlations, $\minfo\left(A:B\right)$ also bounds the amount of classical correlations that can be extracted between $A$ and $B$ through local measurements~\cite{Henderson_2001,Ollivier_2001}.\\

The first subtlety that one encounters when extending the mutual information to general subsystems 
is immediately apparent: in absence of a single, fixed TPS, there is no obvious equivalent of the states $\rho_{AB}$ and $\rho_A\otimes \rho_B$. This is true in particular for subsystems relative to different QRFs. \emph{Within} a given QRF perspective, we can define the mutual information as usual since the choice of QRF induces a unique TPS. 
However, the definition of mutual information \emph{across} perspectives, such as $\minfo\left(\pov{B}{A}:\pov{B}{C}\right)$ poses a deeper challenge since $\pov{B}{A}$ and $\pov{B}{C}$ are defined with respect to different tensor-product structures. In this case, there is no analogue of the state $\rho_A\otimes \rho_B$, so we cannot directly apply the definition of mutual information in terms of the relative entropy, $\minfo\left(A:B\right) = D\left(\rho_{AB}, \rho_A\otimes \rho_B\right)$. Instead, we will base our construction on the chain rule $\minfo\left(A:B\right) = H\left(A\right) + H\left(B\right) - H\left(AB\right)$. Since the conditional expectation acts on both observables and quantum states, it allows to generalise the notion of a reduced state and thus define the entropy associated to these algebraic subsystems.

\begin{definition}[Entropy of an algebraic subsystem]
    Given a subalgebra $\cA\subseteq \mathcal{M}$ with conditional expectation $\cexp: \mathcal{M}\mapsto \cA$, and a state $\rho\in \mathcal{S}\left(\cH\right)$, the von Neumann entropy associated to the algebraic subsystem $\cA$ is given by
    \begin{equation}
    H\left(\cA\right)_\rho \coloneqq -\Tr [\cexp\left(\rho\right)\log \cexp\left(\rho\right)].
    \end{equation}
\end{definition}

Note that this definition of the entropy of an algebraic subsystem does not exactly reproduce the usual definition of subsystem entropy, $H\left(A\right)_{\rho_A} = -\Tr \rho_A\log\rho_A$, when restricting to a tensor-product subsystem. The reason is that the conditional expectation maps a state $\rho\in \mathcal{S}\left(\cH_A\otimes \cH_B\right)$ to $\cexparg{A}\left(\rho\right) = \rho_A\otimes \tfrac{\mathbb{I}_B}{d_B}$, which carries additional entropy due to the maximally mixed state on $B$. In other words, we have $H\left(\cA\right)_\rho = H\left(A\right)_{\rho} +\log d_A$. This suggests the use of \emph{information}~\cite{Everett_1956, Everett_1973} instead of entropy.
\begin{definition}[Information of an algebraic subsystem]
    Consider an algebra $\mathcal{M}$ of bounded operators on a Hilbert space $\cH$ of dimension $d$. Given a subalgebra $\cA\subseteq \mathcal{M}$ with conditional expectation $\cexp: \mathcal{M}\mapsto \cA$, and a state $\rho\in \mathcal{S}\left(\cH\right)$, the information associated to the algebraic subsystem is given by
    \begin{equation}
        \info\left(\cA\right)_\rho \coloneqq D\left(\cexp\left(\rho\right),\tfrac{\mathbb{I}}{d}\right)=\log d - H\left(\cA\right)_\rho,
    \end{equation}
    where $D$ is the relative entropy.\label{def:algebraic_information}
\end{definition}

This definition indeed reduces to the usual notion of information when restricting to tensor-product subsystems: $\info\left(\cA\right)_\rho = \log d_A - H\left(A\right)_\rho$. This is because it is defined in terms of a \emph{relative entropy}, which takes into account explicitly a reference state -- the maximally mixed state of the whole system -- and thus provides a reference of maximal entropy. In the following, we will therefore focus on the information of an algebraic subsystem rather than its entropy. When unambiguous, we might omit the subscript $\rho$ and denote the entropy and information of an algebraic system $\cA$ respectively as $H(\cA)$ and $I(\cA)$.\\

We can express the chain rule for mutual information equivalently in terms of information through $\minfo\left(A:B\right) = \info\left(AB\right) - \info\left(A\right) - \info\left(B\right)$. To evaluate these quantities, we need to know the joint system $AB$ even when the subsystems $A$ and $B$ are defined relative to different perspectives. Fortunately, this is relatively straightforward given the algebraic understanding of subsystems, and will be discussed in more detail in~\cref{sec:algebraic_structure_QRFs}. There, we also derive the conditional expectations onto all subsystem algebras and their unions. Given these maps, we can then use the algebraic definitions from the previous subsection to associate information and entropy to the various subsystems. In~\cref{sec:mutual_information_QRFs}, we will then combine them into a measure of mutual information across perspectives using a slightly modified version of the chain rule. The modification is motivated by another subtlety that arises when extending mutual information to general subsystems: unlike TP-subsystems, algebraic subsystems may have non-trivial overlap.

\section{The Algebraic Structure of Subsystems Relative to Different QRFs}
\label{sec:algebraic_structure_QRFs}

The algebraic notion of subsystems allows us to leave the realm of a fixed tensor-product-structure and thereby consider subsystems relative to different QRFs within the same framework. In this section, we work out the algebraic structure of these subsystems, first for the simple example of a $\mathbb{Z}_2$-reference frame, and then for general finite-dimensional symmetry groups. 
More concretely, we will identify the corresponding subalgebras, construct their unions and intersections, and determine the conditional expectation onto each subalgebra. Note that the algebraic structure of subsystems relative to different QRFs also features in several other works \cite{Castro_Ruiz_2021, Hoehn_2023, Araujo-Regado_2025, delaHamette_2025_charges, Doat_2025, Garmier_2025_non_ideal}. Here, we build on these insights and provide a complete characterisation of all the possible subsystems including the conditional expectations connecting the various subalgebras. This will lay the foundation for finding an appropriate notion of mutual information across different perspectives in the next section.

\subsection{Warm-Up: $\mathbb{Z}_2$ Reference Frames}
\label{sec:algebraic_structure_QRFs_Z2}

Let us return to the example of three qubits $A,B,$ and $C$ with symmetry group $G=\mathbb{Z}_2$, briefly discussed in the previous section. There, we established that the choice of QRF induces a TPS on the perspectival Hilbert space. That is, the Hilbert spaces relative to $A$ and $C$ are given, respectively, by
\begin{equation}
    \pov{\cH_{BC}}{A}= \pov{\cH_B}{A}\otimes \pov{\cH_C}{A}, \qquad \pov{\cH_{AB}}{C}= \pov{\cH_A}{C}\otimes \pov{\cH_B}{C}.
\end{equation}
In the following, we will sometimes speak of observers Alice and Charlie with access to QRFs $A$ and $C$ respectively, to refer to these two perspectives. 
The map between these two Hilbert spaces is given by the unitary \emph{QRF transformation} \cite{delaHamette_2020,cepollaro2024}
\begin{equation}
    S_{A\to C} = \ket{0}_A\bra{0}_C\otimes \mathbb{I}_B + \ket{1}_A\bra{1}_C\otimes X_B,
\end{equation}
where $X_B$ is the Pauli-$X$ operator on system $B$. This acts both on states and on observables, through the unitary channel $\mathcal{S}_{A\to C}\left(O\right) \coloneqq S_{A\to C} O S_{A\to C}^\dagger$ for all $O \in \cB\left(\pov{\cH_{BC}}{A}\right)$.
\emph{Within} each Hilbert space, the split into subsystems is straightforward: we can associate each subsystem with a tensor-product-factor and the corresponding algebra of observables as in~\cref{ex:TP_Subsystem}. That is, relative to $A$, we can identify 
\begin{align}
    \algpov{B}{A} &= \left\{O\otimes \pov{\mathbb{I}_C}{A}|O\in \cB\left(\pov{\cH_B}{A}\right)\right\} = \langle \pov{X_B}{A}\otimes \pov{\mathbb{I}_C}{A},\pov{Z_B}{A}\otimes \pov{\mathbb{I}_C}{A}\rangle\subseteq \cB\left(\pov{\cH_{BC}}{A}\right),\\
    \algpov{C}{A} &= \left\{ \pov{\mathbb{I}_B}{A}\otimes O|O\in \cB\left(\pov{\cH_C}{A}\right)\right\} = \langle \pov{\mathbb{I}_B}{A}\otimes \pov{X_C}{A},\pov{\mathbb{I}_B}{A}\otimes \pov{Z_C}{A}\rangle\subseteq \cB\left(\pov{\cH_{BC}}{A}\right). 
\end{align}
Similarly, relative to $C$, we have the algebras
\begin{align}
    \algpov{A}{C} &= \langle \pov{X_A}{C}\otimes \pov{\id_B}{C},\pov{Z_A}{C}\otimes \pov{\id_B}{C}\rangle\subseteq \cB\left(\pov{\cH_{AB}}{C}\right),\label{eq:algebra_AC}\\
    \algpov{B}{C} &= \langle \pov{\id_A}{C}\otimes \pov{X_B}{C},\pov{\id_A}{C}\otimes \pov{Z_B}{C}\rangle\subseteq \cB\left(\pov{\cH_{AB}}{C}\right).\label{eq:algebra_BC}
\end{align}
Now, in order to compare subsystems across different perspectives, we need to embed both algebras into a joint Hilbert space. While we could describe everything from an external point of view, as in the previous section, we will follow a construction that works purely on the perspectival level and thus does not depend on whether such an external description is available. In the following, we will choose to describe everything within the Hilbert space relative to Charlie's perspective, but throughout the construction it will be evident that this is merely a matter of convention. Due to the relativity of subsystems under changes of QRF, the subsystems relative to $A$ will look different when viewed from Charlie's perspective. Let us take as an example the observable $\pov{Z_B}{A}\otimes \pov{\id_C}{A}$. This will be mapped to $\mathcal{S}_{A\to C}\left(\pov{Z_B}{A}\otimes \pov{\id_C}{A}\right) = \pov{Z_A}{C}\otimes \pov{Z_B}{C}$ under the QRF transformation. Thus, what looks like an observable that acts non-trivially only on subsystem $B$ from Alice's point of view, becomes an observable on both $A$ and $B$ from Charlie's perspective. This is not surprising: when Alice measures the orientation of system $B$, she actually measures the orientation of the system relative to her own QRF. In order to determine the full subsystem structure relative to $A$ as seen from Charlie's perspective, we need to apply the QRF transformation to the entire algebra of observables that defines each subsystem. We can think of this as \emph{pulling back} the algebra relative to $A$ \emph{into} the perspective of Charlie.\footnote{This is, in fact, the same strategy as the one used in~\cite{Sannino_2025}, with the goal of embedding subsystems at different times into a joint Hilbert space.} 

\begin{definition}[Pull-back of an algebra across QRFs]
\label{def:pullback_alg}
Let $\cA \subseteq \cB\left(\pov{\cH_{BC}}{A}\right)$ be a subalgebra of observables on $B$ and $C$ relative to $A$. The pull-back of $\cA$ under the QRF transformation $S_{A\to C}$ is defined as
\begin{equation}
    \cA_{\triangleleft} \coloneqq \mathcal{S}_{A\to C}[\cA] = \left\{\mathcal{S}_{A\to C}\left(O\right)|O\in \cA\right\}.
\end{equation}
Similarly, for a subalgebra $\bar{\cA} \subseteq \cB\left(\pov{\cH_{AB}}{C}\right)$ of observables on $A$ and $B$ relative to $C$, we define push forward of $\bar{\cA}$ under the transformation $S_{C\to A} = S_{A\to C}^{-1}$ as $\bar{\cA}_{\triangleright} \coloneqq \mathcal{S}_{C\to A}[\bar{\cA}] \subseteq \cB\left(\pov{\cH_{BC}}{A}\right)$.
The two operations are mutually inverse, i.e., $(\cA_{\triangleleft})_{\triangleright} = \cA$ and $(\bar{\cA}_{\triangleright})_{\triangleleft} = \bar{\cA}$.
\end{definition}

\begin{figure}[h]
    \centering
    \input{fig2}
    \caption{The subsystem algebras $\algpov{B}{A}$ and $\algpov{B}{C}$ can be understood as the algebras of observables accessible to observers Alice (orange) and Charlie (blue), who can only access system $B$ relative to their own QRF $A$ respectively $C$. These subalgebras are easily defined on the respective Hilbert spaces $\pov{\cH}{A}$ and $\pov{\cH}{C}$, which are equipped with different tensor product structures. To compare and combine the two subalgebras, we \enquote{pull-back} $\algpov{B}{A}$ into Charlie's perspective with the QRF transformation $S_{A\to C}$, obtaining $\algpovpb{B}{A}$.}
    \label{fig:pull-back_algebras}
\end{figure}
In our simple $\mathbb{Z}_2$-example, the pull-back of Alice's subalgebras into Charlie's perspective leads to
\begin{align}
    \algpovpb{B}{A} &= \langle \pov{\id_A}{C}\otimes \pov{X_B}{C},\pov{Z_A}{C}\otimes \pov{Z_B}{C}\rangle\subseteq \cB\left(\pov{\cH_{AB}}{C}\right), \label{eq:algebra_BApb}\\
    \algpovpb{C}{A} &= \langle \pov{X_A}{C}\otimes \pov{X_B}{C}, \pov{Z_A}{C}\otimes \pov{\id_{B}}{C}\rangle\subseteq \cB\left(\pov{\cH_{AB}}{C}\right).\label{eq:algebra_CApb}
\end{align}
This is illustrated in~\Cref{fig:pull-back_algebras} for $\algpovpb{B}{A}$. Note that both the $X$-orientation of $B$ and the $Z$-orientation of the reference frame remain unchanged -- we can think of these as the $\mathbb{Z}_2$-momentum of the system under consideration and the orientation of the QRF, both of which are conserved under a change of perspective.

Given the algebras $\algpov{B}{A}$ and $\algpov{B}{C}$, we can also make sense of the \emph{joint} algebra of observables on system $B$, as accessible from the union of the two perspectives. This is important for the definition of mutual information across perspectives, since the latter relies on comparing the information contained in each subsystem with the information contained in the joint system. If we think of the algebras $\algpov{B}{A}$ and $\algpov{B}{C}$ as the observables accessible from Alice's and Charlie's perspective respectively, we can think of this joint algebra as those observables that can be accessed when they combine their abilities \cite{Doat_2025}. Denoting by $\alg{A} \vee \alg{B} \coloneqq \braket{\alg{A} \cup \alg{B}}$ the algebra generated by the union of two algebras, we have
\begin{equation}
    \begin{split}
        \algpovpb{B}{A} \vee \algpov{B}{C}&= \; \langle  \pov{\id_A}{C}\otimes \pov{X_B}{C}, \pov{\id_A}{C}\otimes \pov{Z_B}{C}, \pov{Z_A}{C}\otimes \pov{\id_B}{C}\rangle\\
       &\overset{A\leftrightarrow C}{=} \algpov{B}{A}\vee \algpovpf{B}{C}=   \langle  \pov{X_B}{A}\otimes \pov{\id_C}{A} ,  \pov{Z_B}{A}\otimes \pov{\id_C}{A}, \pov{\id_B}{A}\otimes \pov{Z_C}{A}\rangle.
    \end{split}
\end{equation}
Note that the joint algebra is, up to relabeling, independent of whether we choose to embed the subalgebras in the perspective of Charlie or Alice. 
Compared to the total algebra $\algpov{A}{C}\vee \algpov{B}{C}$ of observables relative to $C$, the joint algebra only allows for a subset of measurements on the reference frame $A$, namely its orientation. Formally, this means that any element of $\algpovpb{B}{A} \vee \algpov{B}{C}$ is block-diagonal in $Z_A$. The joint algebra can thus be obtained from the total algebra relative to $C$ through a pinching in $Z_A$:
\begin{equation}
    \algpovpb{B}{A} \vee \algpov{B}{C}=\mathcal{P}_{\pov{Z_A}{C}}\left(\algpov{A}{C}\vee \algpov{B}{C}\right)
\end{equation}
As shown in~\cref{ex:pinched_subsystem}, this map constitutes the unique conditional expectation onto the subalgebra that is diagonal in $\pov{Z_A}{C}$. We will thus write $\cexparg{\pov{AB}{C}\to \pov{B}{C}\pov{B}{A}} \coloneqq \mathcal{P}_{\pov{Z_A}{C}}$.

We may now ask how to go from the joint algebra $\algpovpb{B}{A} \vee \algpov{B}{C}$ back to the subalgebras $\algpov{B}{A}$ and $\algpov{B}{C}$ separately. Again, this step proceeds via appropriate pinching maps. In order to obtain $\algpov{B}{C}$, we merely need to project out any remaining information associated to system $A$. This can be achieved via an additional pinching in the complementary $X$-basis since for any $O\in \cB\left(\pov{\cH_{AB}}{C}\right)$
\begin{equation}
    \mathcal{P}_{\pov{X_A}{C}}\circ \mathcal{P}_{\pov{Z_A}{C}}\left(O\right) = \tfrac{1}{2}\, \pov{\id}{C}_A\otimes \pov{O_B}{C} \text{ where } \pov{O_B}{C} = \Tr_A\left(O\right) .\label{eq:full_pinchingA}
\end{equation}

This follows by first observing that the two pinching maps commute, $\mathcal{P}_{\pov{X_A}{C}}\circ \mathcal{P}_{\pov{Z_A}{C}}=\mathcal{P}_{\pov{Z_A}{C}}\circ \mathcal{P}_{\pov{X_A}{C}}$.
Hence, the resulting operator must be an element of both $\braket{\pov{X_A}{C}} \otimes \cB\left(\pov{\cH_B}{C}\right)$ and of $\braket{\pov{Z_A}{C}} \otimes \cB\left(\pov{\cH_B}{C}\right)$. Since the intersection between $\braket{\pov{Z_A}{C}}$ and $\braket{\pov{X_A}{C}}$ is $\braket{\pov{\id_A}{C}}$, and the pinching maps act trivially on $B$, the resulting operator must be in the form $c \, \pov{\id_A}{C} \otimes \pov{O_B}{C}$. Since the pinching maps are trace-preserving, we must have $c=\tfrac{1}{2}$. Together, the pinching in $X$ and $Z$ recover the usual restriction onto a single tensor-product-subsystem, as expected when going into the algebra associated to $\pov{B}{C}$ within the perspective of Charlie. Hence, we can define $\cexparg{\pov{B}{C}\pov{B}{A}\to \pov{B}{C}}\coloneqq\mathcal{P}_{\pov{X_A}{C}}$. To obtain the correct map onto the subalgebra associated with $\pov{B}{A}$, we use the symmetry of the setup. In the perspective of Alice, we can go from the joint algebra to the subsystem $\pov{B}{C}$ via $\mathcal{P}_{\pov{X_C}{A}}$. Now, let us introduce the pull-back of a map across QRFs.
\begin{definition}[Pull-back of a map across QRFs]
\label{def:pullback_map}
Let $\mathcal{F}:\cB\left(\pov{\cH_{BC}}{A}\right) \mapsto \cB\left(\pov{\cH_{BC}}{A}\right)$ be a linear map. The pull-back of $\mathcal{F}$ under the quantum reference frame transformation $S_{A\to C}$ is defined as
\begin{equation}
\begin{aligned}
    \mappb{\mathcal{F}}: \cB\left(\pov{\cH_{AB}}{C}\right)&\mapsto\cB\left( \pov{\cH_{AB}}{C}\right)\\
    \mappb{\mathcal{F}}\left(\rho\right) &= S_{A\to C}\mathcal{F}\left(S_{A\to C}^\dagger\rho S_{A\to C}\right) S_{A\to C}^\dagger.
\end{aligned}
\end{equation}
Analogously, for a linear map $\mathcal{G}: \cB\left(\pov{\cH_{AB}}{C}\right) \to  \cB\left(\pov{\cH_{AB}}{C}\right)$ we define the push-forward under the QRF transformation $S_{C\to A} = S^{-1}_{A\to C}$, $\mathcal{G}^{\triangleright}: \cB\left(\pov{\cH_{BC}}{A}\right)  \to \cB\left(\pov{\cH_{BC}}{A}\right) $, as  $\mathcal{G}^{\triangleright}\left(\sigma\right) = S_{C\to A}\,\mathcal{G}\left(S_{C\to A}^\dagger\sigma S_{C\to A}\right) S_{C\to A}^\dagger$.
The two operations are mutually inverse, i.e., $\left(\mappb{\mathcal{F}}\right)^{\triangleright} = \mathcal{F}$ and $\left(\mathcal{G}^{\triangleright}\right)^{\triangleleft} = \mathcal{G}$.
\end{definition}
Applied to our pinching-map, we find 
\begin{equation}    \mappb{\mathcal{P}}_{\pov{X_C}{A}}= \frac{1}{2}\left[\pov{\id_{AB}}{C}\left(\cdot\right)\pov{\id_{AB}}{C} + \left(\pov{X_A}{C}\otimes \pov{X_B}{C}\right) \left(\cdot\right) \left(\pov{X_A}{C}\otimes \pov{X_B}{C}\right)\right] =: \mathcal{P}_{\pov{X_A}{C}\otimes \pov{X_B}{C}}.\label{eq:P_XX}
\end{equation}
The derivation is provided in \cref{app:proof_pullback_mapsZ2}. This gives us the conditional expectation $\cexparg{\pov{B}{C}\pov{B}{A}\to \pov{B}{A}}^\triangleleft \coloneqq \mappb{\mathcal{P}}_{\pov{X_C}{A}}=\mathcal{P}_{\pov{X_A}{C}\otimes \pov{X_B}{C}}$. The conditional expectations onto the various subalgebras are depicted in the right branch of~\Cref{fig:encoded-decomposition}.
\begin{figure}[h]
\centering
\[\begin{tikzcd}
	&&& {\algpov{A}{C}\vee\algpov{B}{C}} &&& \\
	& {\algpovpb{B}{A}\vee\algpov{B}{C}}  &&&& {\algpov{A}{C}\vee\algpovpb{C}{A}} \\
	\\
	{\algpov{B}{C}} && {\algpovpb{B}{A}} && {\algpov{A}{C}} && {\algpovpb{C}{A}}
	\arrow["{\mathcal{P}_{\pov{X_B}{C}}}"{description}, from=1-4, to=2-6]
	\arrow["{\mathcal{P}_{\pov{Z_A}{C}}}"{description}, from=1-4, to=2-2]
	\arrow["{\mathcal{P}_{\pov{Z_B}{C}}}"{description}, from=2-6, to=4-5]
	\arrow["{\mappb{\mathcal{P}}_{\pov{Z}{A}_B}}"{description}, from=2-6, to=4-7]
	\arrow["{\mathcal{P}_{\pov{X_A}{C}}}"{description}, from=2-2, to=4-1]
	\arrow["{\mappb{\mathcal{P}}_{\pov{X}{A}_C}}"{description}, from=2-2, to=4-3]
\end{tikzcd}\]
\caption{Conditional expectations from the total algebra of observables relative to $C$ onto the various subalgebras for $G=\mathbb{Z}_2$.}
\label{fig:encoded-decomposition}
\end{figure}

Analogous reasoning leads to the subalgebra $\algpov{A}{C}$ associated to $A$ relative to $C$ and the subalgebra $\algpovpb{C}{A}$ of $C$ relative to $A$, pulled back into the perspective of Charlie. Together, they form the joint algebra $\algpov{A}{C}\vee \algpovpb{C}{A}$. The conditional expectation onto this algebra is a pinching in the $\mathbb{Z}_2$-momentum of $B$, that is $\cexparg{\pov{AB}{C}\to \pov{A}{C}\pov{C}{C}}\coloneqq \mathcal{P}_{\pov{X_B}{C}}$ while the subalgebras are obtained via $\cexparg{\pov{A}{C}\pov{C}{A}\to \pov{A}{C}} \coloneqq \mathcal{P}_{\pov{Z_B}{C}}$ and $\cexparg{\pov{A}{C}\pov{C}{A}\to \pov{C}{A}}^{\triangleleft} \coloneqq \mathcal{P}_{\pov{Z_A}{C}\otimes \pov{Z_B}{C}}$. Details of the construction can be found in \cref{app:proof_pullback_mapsZ2}.

Finally, let us note that -- unlike the subsystems defined within a fixed tensor product structure -- the subsystem algebras relative to different QRFs have non-trivial overlap \cite{Castro_Ruiz_2021,Hoehn_2023}. Let us define the intersection between two algebras $\alg{A}, \alg{B}$ as $\alg{A} \wedge\alg{B} \coloneqq \braket{\alg{A} \cap\alg{B}}$. Then,
\begin{equation}
\begin{aligned}
   & \algpovpb{B}{A}\wedge \algpov{B}{C} = \Big\langle \pov{\id_A}{C}\otimes \pov{X_B}{C}\Big\rangle,\\
    &\algpov{A}{C}\wedge \algpovpb{C}{A} = \Big\langle \pov{Z_A}{C}\otimes \pov{\id_{B}}{C}\Big\rangle.
\end{aligned}
\end{equation}
Thus, the overlaps are given by the $\mathbb{Z}_2$-invariant $X$-orientation of system $B$ and the relative orientation of $A$ and $C$ -- both quantities that are preserved under the QRF transformation and thus shared across the perspectives. Starting from the joint algebra, the projector onto the intersection is obtained by composing the conditional expectations onto the two subalgebras of interest:
\begin{equation}
\label{eq:cond_exp_overlap_Z2_2}
    \begin{split}
        \cexparg{\pov{B}{A}\pov{B}{C}\to \algpovpb{B}{A}\wedge \algpov{B}{C}}\left(O\right) &= \left(\mathcal{P}_{\pov{X_A}{C}}\circ \mathcal{P}_{\pov{X_A}{C}\otimes \pov{X_B}{C}}\right)\left(O\right) 
    \\&= \left(\mathcal{P}_{\pov{X_A}{C}\otimes \pov{X_B}{C}}\circ \mathcal{P}_{\pov{X_A}{C}}\right)\left(O\right) = \tfrac{1}{2}\pov{\id_A}{C}\otimes \mathcal{P}_{\pov{X_B}{C}}\left(\Tr_AO\right),
\end{split}
\end{equation}
and
\begin{equation}
    \begin{split}
\label{eq:cond_exp_overlap_Z2_1}
    \cexparg{\pov{A}{C}\pov{C}{A}\to \pov{A}{C}\wedge \pov{C}{A}}\left(O'\right) &= \left(\mathcal{P}_{\pov{Z_B}{C}}\circ \mathcal{P}_{\pov{Z_A}{C}\otimes \pov{Z_B}{C}}\right)\left(O'\right) \\
    &= \left(\mathcal{P}_{\pov{Z_A}{C}\otimes \pov{Z_B}{C}}\circ \mathcal{P}_{\pov{Z_B}{C}}\right)\left(O'\right) = \mathcal{P}_{\pov{Z_A}{C}}\left(\Tr_BO'\right)\otimes \tfrac{1}{2}\pov{\id_{B}}{C}
    \end{split}
\end{equation}
for $O \in \algpovpb{B}{A} \vee \algpov{B}{C}$ and $O'\in \algpovpb{C}{A} \vee \algpov{A}{C}$. The derivation is provided in \cref{app:conditional_exp_construction_proofs}. Note that the commutativity of the conditional expectations is not a given -- it holds for the subsystems defined relative to different QRFs but may fail for more general subsystems. This feature of QRFs will play an important role in the definition of mutual information across perspectives.

\subsection{General Finite-Dimensional QRFs}

Since we are interested in quantities such as entropy and information, which quickly diverge for continuous groups, we focus on the case of QRFs for finite-dimensional symmetry groups. The algebraic construction presented in this section should, however, extend relatively straightforwardly to locally compact groups if one takes care in defining the trace \cite{Sorce_2024}. The case of general finite-dimensional QRFs proceeds along very similar lines as the construction for $\mathbb{Z}_2$ reference frames. We we will thus keep the discussion short and draw upon the previous section for physical intuition, but we will highlight a few subtleties that arise when working with QRFs for general and, in particular, non-abelian symmetry groups. The results are summarised in~\Cref{fig:subsystem_structure_general_groups}.

Consider a finite symmetry group $G$ and three systems $A, B, C$ with perspectival Hilbert spaces isomorphic to $L^2\left(G\right)$ with dimension $|G|$. The QRF transformation from $C$ to $A$ is given by \cite{delaHamette_2020}
\begin{equation}
    S_{C\to A} = \sum_{g\in G}\ket{g^{-1}}_C\bra{g}_A\otimes U_B^\dagger\left(g\right).
\end{equation}
The Hilbert space of $A$ and $B$ relative to $C$ takes the form $\pov{\cH_{AB}}{C}\simeq \pov{\cH_A}{C}\otimes \pov{\cH_B}{C}$ while that of $B$ and $C$ relative to $A$ comes with a different TP-decomposition $\pov{\cH_{BC}}{A}\simeq\pov{\cH_{B}}{A}\otimes \pov{\cH_C}{A}$. Again without loss of generality, we describe everything from the point of view of Charlie. The subalgebras of $A$ and $B$ relative to $C$ can be defined as before
\begin{equation}
\begin{aligned}
    \algpov{A}{C} &= \left\{O\otimes \pov{\id_B}{C}|O\in \cB\left(\pov{\cH_A}{C}\right)\right\}\subseteq \cB\left(\pov{\cH_{AB}}{C}\right),\\ 
    \algpov{B}{C} &= \left\{ \pov{\id_A}{C}\otimes O|O\in \cB\left(\pov{\cH_B}{C}\right)\right\}\subseteq \cB\left(\pov{\cH_{AB}}{C}\right).
\end{aligned}
\end{equation}
To obtain the subalgebras of $B$ and $C$ relative to $A$ in the perspective of Charlie, we use the pull-back of the subsystem algebras defined relative to Alice to obtain
\begin{equation}
\begin{aligned}
     \algpovpb{B}{A} &= \left\{\mathcal{S}_{A\to C}\left(O\otimes \pov{\id_C}{A}\right)|O\in \cB\left(\pov{\cH_B}{A}\right)\right\} 
     = \left\{ \sum_g \ketbra{g}{g}_A\otimes U_B\left(g\right)OU_B^\dagger\left(g\right)|O\in \cB\left(\pov{\cH_B}{C}\right) \right\},\\
    \algpovpb{C}{A} &= \left\{ \mathcal{S}_{A\to C}\left(\pov{\id_B}{A}\otimes O\right)|O\in \cB\left(\pov{\cH_C}{A}\right)\right\} = \left\{ \sum_{g,g'}\ketbra{g}{g'}_A \langle g^{-1}|O|g'^{-1}\rangle\otimes U_B\left(g g'^{-1}\right)|O\in \cB\left(\pov{\cH_A}{C}\right)\right\}.
\end{aligned}
\end{equation}
Based on this, we can construct the joint algebra as the algebra spanned by the observables in both $\algpovpb{B}{A}$ and $\algpov{B}{C}$:
\begin{equation}
    \algpovpb{B}{A} \vee \algpov{B}{C}=\Big \langle \pov{\id_A}{C}\otimes O , \sum_g \pov{\ketbra{g}{g}_A}{C}\otimes U_B\left(g\right)OU_B^\dagger\left(g\right) \Big\rangle_{O\in \cB\left(\pov{\cH_B}{C}\right)} = \mathcal{P}_{\pov{G_A}{C}}\left(\algpov{A}{C}\vee \algpov{B}{C}\right)\label{eq:joint_algebra_BABC}
\end{equation}
where $\mathcal{P}_{\pov{G_A}{C}}\left(\cdot\right) \coloneqq \sum_{g\in G} \pov{\ketbra{g}{g}}{C}_A\otimes \langle g| \cdot \pov{|g \rangle}{C}_A$. The proof that the joint algebra is indeed obtained from the total algebra relative to $C$ via the pinching of $A$ in the group basis is provided in~\cref{app:joint_algebras}. There, we also show that $\mappb{\mathcal{P}}_{\pov{G_C}{A}}=\mathcal{P}_{\pov{G_A}{C}}$, which implies that the definition of the joint algebra is independent of whether it is constructed in Charlie's or Alice's perspective. Notice that the pinching operation makes the joint algebra diagonal in the group basis of system $A$. Operationally, we can think of this as the orientation of $A$ being effectively \enquote{classical} for an agent, who only has access to observables in $\algpovpb{B}{A} \vee \algpov{B}{C}$, since they cannot tell whether $A$ is in a classical mixture or a superposition of different orientations \cite{Vanrietvelde_2025}. As the pinching map defines the unique conditional expectation onto the corresponding block-diagonal subalgebra, we have $\cexparg{\pov{AB}{C}\to \pov{B}{C}\pov{B}{A}}= \mathcal{P}_{\pov{G_A}{C}}$.

Given the joint algebra, we can also construct the conditional expectations into the respective subalgebras. To go into the subalgebra associated to $B$ relative to $C$, we want to eliminate all information about $\pov{A}{C}$. In the case of $\mathbb{Z}_2$-QRFs, we achieved this by combining the pinching in $Z$ with a pinching in $X$. The counterpart of $\mathcal{P}_X$ for general symmetry groups is the \emph{incoherent $G$-twirl} on subsystem $A$ relative to $C$. In particular, we have
\begin{equation}
    \mathcal{G}_{\inc,\pov{A}{C}}\circ \mathcal{P}_{\pov{G_A}{C}}\left(\pov{O_{AB}}{C}\right) = \frac{1}{|G|}\sum_{g,h \in G}U_A\left(g\right)\pov{\ketbra{h}{h}_A}{C}U_A^\dagger\left(g\right)\otimes \braket{h|\pov{O_{AB}}{C}|h}_A = \frac{1}{|G|}\pov{\id_A}{C}\otimes \Tr_A\left(\pov{O_{AB}}{C}\right).\label{eq:partial_trace_genQRF}
\end{equation}
Thus, the conditional expectation is given by $\cexparg{\pov{B}{C}\pov{B}{A}\to \pov{B}{C}}= \mathcal{G}_{\inc,\pov{A}{C}}$. The same reasoning holds when we restrict to $\pov{B}{A}$ in the perspective of Alice. Importing the corresponding map into Charlie's perspective, we obtain $\cexparg{\pov{B}{C}\pov{B}{A}\to \pov{B}{A}}^{\triangleleft} = \mappb{\mathcal{G}}_{\inc,\pov{C}{A}}$ where
\begin{equation}
    \mappb{\mathcal{G}}_{\inc,\pov{C}{A}} = \frac{1}{|G|}\sum_{g\in G} \left(S_{A\to C} U_C\left(g\right)S_{A\to C}^\dagger\right)  \rho \left(S_{A\to C} U_C\left(g\right) S_{A\to C}^\dagger\right)^\dagger.
\end{equation}
In general, we have 
\begin{equation}
    S_{A\to C} \left(\pov{\id_B}{A}\otimes U_C\left(g\right)\right)S^\dagger_{A\to C}= \sum_h \pov{\ketbra{hg^{-1}}{h}_A}{C}\otimes U_B\left(hg^{-1} h^{-1}\right).
\end{equation}
This expression simplifies significantly in the abelian case where $S_{A\to C} \left(\pov{\id_B}{A}\otimes U_C\left(g\right)\right)S^\dagger_{A\to C}=U_A^\dagger\left(g\right)\otimes U_B^\dagger\left(g\right)$ and thus $\mappb{\mathcal{G}^{\mathrm{abelian}}_{\inc,\pov{C}{A}}}\left(\rho\right)= \mathcal{G}_{\inc,\pov{AB}{C}}\left(\rho\right)$. The example of $\mathbb{Z}_2$ falls into this category, see \cref{eq:P_XX}.

\begin{figure}[h]
\centering
\[\begin{tikzcd}
	&&& {\algpov{A}{C}\vee\algpov{B}{C}} &&& \\
	& {\algpovpb{B}{A}\vee\algpov{B}{C}}  &&&& {\algpov{A}{C}\vee\algpovpb{C}{A}}\\
	\\
	 {\algpov{B}{C}} && {\algpovpb{B}{A}} &&{\algpov{A}{C}} && {\algpovpb{C}{A}} 
	\arrow["{\mathcal{G}_{\inc_R,\pov{B}{C}}}"{description}, from=1-4, to=2-6]
	\arrow["{\mathcal{P}_{G,\pov{A}{C}}}"{description}, from=1-4, to=2-2]
	\arrow["{\mathcal{P}_{G,\pov{B}{C}}}"{description}, from=2-6, to=4-5]
	\arrow["{\mappb{\mathcal{P}}_{G,{\pov{B}{A}}}}"{description}, from=2-6, to=4-7]
	\arrow["{\mathcal{G}_{\inc,\pov{A}{C}}}"{description}, from=2-2, to=4-1]
	\arrow["{\mappb{\mathcal{G}}_{\inc,\pov{C}{A}}}"{description}, from=2-2, to=4-3]
\end{tikzcd}\]
\caption{Conditional expectations from the total algebra of observables relative to $C$ onto the various subalgebras for general finite-dimensional symmetry groups $G$. Here, $\mathcal{P}_G$ denotes pinching in the group basis while $\mathcal{G}_{\inc_R}$ is the right incoherent $\mathcal{G}$-twirl (the distinction between left and right incoherent $\mathcal{G}$-twirl only matters in the non-abelian case). The subscripts further specify on which subsystem the map acts non-trivially. Finally, $\mappb{\mathcal{P}}_{G}$ and $\mappb{\mathcal{G}}_{\inc_R}$ denote the pull-backs of the corresponding maps in Alice's perspective via the QRF transformation.}
\label{fig:subsystem_structure_general_groups}
\end{figure}

Similar reasoning (see~\cref{app:joint_alg_AC}) leads to the subalgebras $\algpov{A}{C}$ describing $A$ relative to $C$ and $\algpovpb{C}{A}$ describing $C$ relative to $A$ -- both in Charlie's perspective. From these, we can construct the joint perspective $\algpov{A}{C}\vee \algpovpb{C}{A}$. The latter is obtained from the total algebra of observables through the \emph{right} incoherent G-twirl on $B$, that is $\cexparg{\pov{AB}{C}\to \pov{A}{C}\pov{C}{C}}=\mathcal{G}_{\inc_R,\pov{B}{C}}$, where
\begin{equation}
    \mathcal{G}_{\inc_R} = \frac{1}{|G|}\sum_{g\in G} R\left(g\right)\cdot R^\dagger\left(g\right)
\end{equation}
with $R\left(h\right)\ket{g} = \ket{gh^{-1}}$ the right-regular representation of $G$. Note that for abelian groups, the left and right incoherent $\mathcal{G}$-twirl are equivalent: their action on an arbitrary basis element $\ketbra{h}{h'}$ is 
\begin{equation}
    \mathcal{G}_{\inc_R}\left(\ketbra{h}{h'}\right) =\frac{1}{|G|} \sum_{g\in G} \ketbra{hg^{-1}}{h'g^{-1}} = \frac{1}{|G|}\sum_{g\in G} \ketbra{hg}{h'g} \overset{\text{abelian}}{=} \frac{1}{|G|}\sum_{g\in G}\ketbra{gh}{gh'} = \mathcal{G}_{\inc}\left(\ketbra{h}{h'}\right)
\end{equation}
for all $h,h'\in G$. 
From the joint algebra, we can go into the different subalgebras via $\cexparg{\pov{A}{C}\pov{C}{A}\to \pov{A}{C}} = \mathcal{P}_{G,\pov{B}{C}}$ and $\cexparg{\pov{A}{C}\pov{C}{A}\to \pov{C}{A}} = \mappb{\mathcal{P}}_{G,\pov{B}{A}}$ where 
\begin{equation}
    \mappb{\mathcal{P}}_{G,\pov{B}{A}}(\pov{\rho_{AB}}{C}) = \sum_{g \in G} \left(S_{A\to C}\left(\pov{\ketbra{g}{g}_B}{A}\otimes \pov{\id_C}{A}\right)S_{A\to C}^\dagger\right) \pov{\rho_{AB}}{C}\left(S_{A\to C} \left(\pov{\ketbra{g}{g}_B}{A}\otimes \pov{\id_C}{A}\right)S_{A\to C}^\dagger \right)^\dagger
\end{equation}
with
\begin{equation}
    S_{A\to C}\left(\pov{\ketbra{g}{g}_B}{A}\otimes \pov{\id_C}{A}\right)S_{A\to C}^\dagger = \sum_{h \in G}\pov{\ketbra{h}{h}_A}{C}\otimes \pov{\ketbra{hg}{hg}_B}{C}.
\end{equation}
The various different subalgebras and corresponding conditional expectations for general finite symmetry groups are depicted in~\Cref{fig:subsystem_structure_general_groups}.\\

Finally, the intersections between the different subalgebras are given by
\begin{equation}
\begin{aligned}
    &\algpovpb{B}{A}\wedge \algpov{B}{C} = \left\{\pov{\id_A}{C}\otimes O \mid O \in \cB\left(\pov{\cH_B}{C}\right) \text{ s.t. } [O,U_B\left(g\right)] =0 \;\; \forall g\in G \right\},\\
    &\algpov{A}{C}\wedge \algpovpb{C}{A} = \left\{O\otimes \pov{\id_{B}}{C}\mid O\in \cB\left(\pov{\cH_A}{C}\right) \text{ s.t. } \braket{g|O|g'} \sim \delta_{gg'} \;\; \forall g,g'\in G\right\}.
\end{aligned}
\end{equation}
Again, these contain the $G$-invariant properties of system $B$ and the relative orientation between $A$ and $C$, which are conserved under the QRF transformation. We can obtain the projectors onto these subalgebras by composing the conditional expectations. In the case of a general symmetry group, it will be easier to work with the conditional expectations from the total algebra $\algpov{A}{C}\vee \algpov{B}{C}$ onto the respective subsystems rather than starting from the joint algebras. Let us thus define $\cexparg{\pov{B}{C}} \coloneqq \tfrac{1}{|G|}\pov{\id_A}{C}\otimes \Tr_A$ as in \cref{eq:partial_trace_genQRF} and $\cexparg{\pov{B}{A}} \coloneqq \Tr_C \otimes \tfrac{1}{|G|}\id _C$.  
Then,
\begin{align}
    \cexparg{\pov{B}{A}\wedge \pov{B}{C}}\left(O\right) \coloneqq \left(\cexparg{\pov{B}{A}}^{\triangleleft}\circ \cexparg{\pov{B}{C}}\right)\left(O\right) = \left(\cexparg{\pov{B}{C}}\circ \cexparg{\pov{B}{A}}^{\triangleleft}\right)\left(O\right) = \frac{1}{|G|}\pov{\id_A}{C}\otimes \mathcal{G}_{\inc,\pov{B}{C}}\left(\Tr_A\left(O\right)\right) 
    \label{eq:CE_overlap_B}
\end{align}
for any $O\in\cB\left(\pov{\cH_{AB}}{C}\right)$. The proof is provided in~\cref{app:overlap_algebras}. As expected, the conditional expectation removes everything but the invariant operators on $B$. This map also recovers the $\mathbb{Z}_2$-case upon identifying the incoherent G-twirl with pinching in $\pov{X_B}{C}$. Similarly, we can define $\cexparg{\pov{A}{C}}\coloneqq \Tr_{\pov{B}{C}}\otimes \frac{1}{|G|}\pov{\id_{B}}{C}$ and $\cexparg{\pov{C}{A}} \coloneqq \frac{1}{|G|}\pov{\id_B}{A}\otimes \Tr_{\pov{B}{A}}$ to obtain
\begin{equation}
    \cexparg{\pov{A}{C}\wedge \pov{C}{A}}\left(O\right) \coloneqq \left(\cexparg{\pov{A}{C}}\circ \cexparg{\pov{C}{A}}^{\triangleleft}\right)\left(O\right) = \left(\cexparg{\pov{C}{A}}^{\triangleleft}\circ \cexparg{\pov{A}{C}}\right)\left(O\right) = \mathcal{P}_{G,\pov{A}{C}}\left(Tr_{\pov{B}{C}}\left(O\right)\right)\otimes \frac{1}{|G|}\pov{\id_{B}}{C}\label{eq:CE_overlap_AC}
\end{equation}
for any $O\in\cB\left(\pov{\cH_{AB}}{C}\right)$. This map projects out everything but the relative orientation of $A$ and $C$. Again, let us stress that the conditional expectations commuting and yielding the projector onto the intersection is a non-trivial feature of subsystems defined relative to different QRFs, which will be crucial in the definition of mutual information below.

\section{Mutual Information Across QRF Perspectives}
\label{sec:mutual_information_QRFs}

Having understood the algebraic structure of all the different subsystems, the joint algebras, and their overlaps, we can now finally turn to our original question: what is the mutual information across the different perspectives? As argued earlier, we will base our construction on the chain rule for mutual information, which, in the tensor-product case, is given by $\minfo\left(A:B\right) = \info\left(AB\right) - \info\left(A\right) - \info\left(B\right)$. In \cref{sec:modified_chain_rule} we will argue that, to account for the non-trivial overlap between the subsystem algebras this rule needs to be slightly modified. This modification vanishes for subsystems defined by a tensor-product structure, as well as for algebras with trivial overlap, while also ensuring the subadditivity of mutual information for the more general subsystems relative to different QRFs. Our arguments lead to a definition of mutual information that agrees with a notion of conditional mutual information put forward in the literature on entropic uncertainty relations \cite{LaRacuente_2019,Gao_2019}. We briefly discuss in \cref{sec:relation_other_work} how our definition relates to these works.
Finally, in \cref{sec:cond_MI}, we apply the same arguments to define conditional information as well as conditional mutual information and prove that the latter satisfies strong subadditivity for the subsystems relative to different QRFs. 

\subsection{Definition via Modified Chain Rule}
\label{sec:modified_chain_rule}

To gain a clearer intuition for the problem at hand, let us begin by focusing on a concrete example: the mutual information $\minfo\left(\pov{B}{A}:\pov{B}{C}\right)$ between $B$ relative to $A$ and $B$ relative to $C$. Returning to~\Cref{fig:pull-back_algebras}, we can imagine two observers Alice and Charlie, who only have access to system $B$ but not to the other's frame of reference. In this case, $\minfo\left(\pov{B}{A}:\pov{B}{C}\right)$ should quantify how much information is shared across their perspectives. A naive application of the chain rule for mutual information suggests that $\minfo\left(\pov{B}{A}:\pov{B}{C}\right)\overset{\text{tentative}}{=}\info\left(\pov{B}{C}\pov{B}{A}\right) - \info\left(\pov{B}{A}\right) - \info\left(\pov{B}{C}\right)$ where, for a given state $\rho \in \cB\left(\pov{\cH_{AB}}{C}\right)$,
\begin{equation}
\begin{aligned}
    \info\left(\pov{B}{C}\pov{B}{A}\right)_\rho &= \info\left(\algpovpb{B}{A}\vee \algpov{B}{C}\right)_\rho 
    =D\left(\cexparg{\pov{B}{A}\vee \pov{B}{C}}\left(\rho\right),\pov{\pi_{AB}}{C}\right)
    ,\\
    \info\left(\pov{B}{A}\right)_\rho &= \info\left(\algpovpb{B}{A}\right)_\rho 
    =D\left(\cexparg{\pov{B}{A}}^{\triangleleft}\left(\rho\right),\pov{\pi_{AB}}{C}\right)
    ,\\
    \info\left(\pov{B}{C}\right)_\rho &= \info\left(\algpov{B}{C}\right)_\rho =
    D\left(\cexparg{\pov{B}{C}}\left(\rho\right),\pov{\pi_{AB}}{C}\right).
\end{aligned}
\end{equation}
Here, we used the algebraic definition of information~(\cref{def:algebraic_information}). 
As shown in the previous section and illustrated in~\Cref{fig:subsystem_structure_general_groups}, the conditional expectations for general finite-dimensional symmetry groups are given by
\begin{align}
    \cexparg{\pov{B}{A}\vee \pov{B}{C}} &= \mathcal{P}_{G,\pov{A}{C}}, \qquad
    \cexparg{\pov{B}{A}}^{\triangleleft} = \mappb{\mathcal{G}}_{\inc,\pov{C}{A}} \circ\mathcal{P}_{G,\pov{A}{C}}, \qquad
    \cexparg{\pov{B}{C}}=\mathcal{G}_{\inc,\pov{A}{C}}\circ\mathcal{P}_{G,\pov{A}{C}}.
\end{align}
We also defined the maximally mixed state on the total Hilbert space relative to Charlie as $\pov{\pi_{AB}}{C}\coloneqq\tfrac{1}{|G|^2}\pov{\id_{AB}}{C}$. The mutual information ought to characterise the difference in the information accessible through the joint algebra $\algpovpb{B}{A} \vee \algpov{B}{C}$, i.e., when Alice and Charlie can combine their operations on system $B$, versus the information accessible when each of them performs operations on system $B$ separately, through $\algpovpb{B}{A}$ and $\algpov{B}{C}$. As already stressed at several points, however, the algebras $\algpovpb{B}{A}$ and $\algpov{B}{C}$ overlap non-trivially (cf.~\cite{Castro_Ruiz_2021, Hoehn_2023}). This is illustrated in~\cref{fig:overlap} for the example of $G=\mathbb{Z}_2$.

\begin{figure}
    \centering        \includegraphics[width=0.33\linewidth]{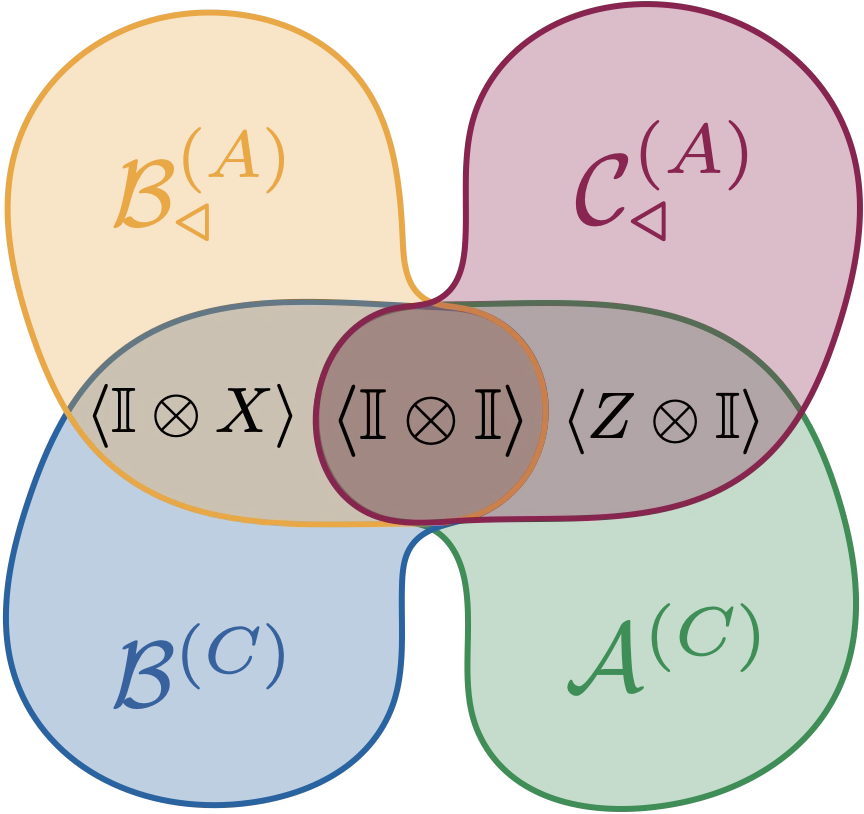}
    \caption{Non-trivial overlap between the algebras relative to different frames of reference for the example $G=\mathbb{Z}_2$. Within one perspective, the subsystem algebras only share multiples of the identity operator -- this is the case for $\algpov{B}{C} \wedge \algpov{A}{C}$ as well as $\algpovpb{B}{A}\wedge \algpovpb{C}{A}$. Across different perspectives, however, the overlap becomes non-trivial. In order to avoid overcounting, we need to subtract the information associated with the overlap in the definition of mutual information.}
    \label{fig:overlap}
\end{figure}
In particular, both Alice and Charlie have access to the $G$-invariant properties of system $B$. When simply adding $\info\left(\algpovpb{B}{A}\right)$ and $\info\left(\algpov{B}{C}\right)$, we run the risk of overcounting the information associated to this overlap. This can be made more precise by explicitly separating the information associated to the overlap from the overall information associated to $B$ relative to $A$ or $C$ through \cref{th:dec_rel_entropy}. Applying this to the information associated to $\algpovpb{B}{A}$ and $\algpov{B}{C}$ with overlap $\algpovpb{B}{A}\wedge \algpov{B}{C}$, we obtain
\begin{align}
    \info\left(\algpovpb{B}{A}\right)_\rho &=   D\left(\cexparg{\pov{B}{A}}^{\triangleleft}\left(\rho\right),\cexparg{\algpovpb{B}{A}\wedge \algpov{B}{C}}\left(\rho\right)\right)+\underset{= \info\left(\algpovpb{B}{A}\wedge \algpov{B}{C}\right)}{\underbrace{D\left(\cexparg{\algpovpb{B}{A}\wedge \algpov{B}{C}}\left(\rho\right), \pov{\pi_{AB}}{C}\right)}}, \label{eq:info_dec1}\\
    \info\left(\algpov{B}{C}\right)_\rho &= D\left(\cexparg{\pov{B}{C}}\left(\rho\right),\cexparg{\algpovpb{B}{A}\wedge \algpov{B}{C}}\left(\rho\right)\right)+\underset{= \info\left(\algpovpb{B}{A}\wedge \algpov{B}{C} \right)}{\underbrace{D\left(\cexparg{\pov{B}{A} \wedge \pov{B}{C}}\left(\rho\right), \pov{\pi_{AB}}{C}\right)}},\label{eq:info_dec2}
\end{align}
where we made use of the fact that the two conditional expectations onto the respective subalgebras commute and yield the projector onto their overlap. Thus, when we naively add $\info(\algpovpb{B}{A})$ and $\info(\algpov{B}{C})$, we are overcounting the information $\info\left(\algpovpb{B}{A}\wedge \algpov{B}{C}\right)$ associated to their overlap. This motivates the following definition for mutual information across perspectives.

\begin{definition}[Mutual information between different QRF perspectives]
\label{def: mutual_info_QRFs}
We define the mutual information between $B$ seen from $A$ and $B$ seen from $C$, as:
\begin{equation}
    \minfo\left(\pov{B}{A}:\pov{B}{C}\right)_\rho \coloneqq \info\left(\algpovpb{B}{A}\vee \algpov{B}{C}\right) - \info\left(\algpovpb{B}{A}\right)-\info\left(\algpov{B}{C}\right) + \info\left(\algpovpb{B}{A}\wedge \algpov{B}{C}\right).
\end{equation}
This quantity is symmetric under exchange of $\pov{B}{A}$ and $\pov{B}{C}$ and independent of whether we construct the subalgebras and conditional expectations in Charlie's perspective or in Alice's perspective.  
\end{definition}

The proof that this quantity is independent of the embedding into Alice's or Charlie's perspective is given in \cref{app:mutual_info}. Moreover, the above definition of mutual information always yields a non-negative quantity, a property that is known as subadditivity. 

\begin{proposition}[Subadditivity]\label{SA_BA_BC}
    The mutual information between $\pov{B}{A}$ and $\pov{B}{C}$ satisfies $\minfo\left(\pov{B}{A}:\pov{B}{C}\right)_\rho \geq 0$ for all states $\rho$.
\end{proposition}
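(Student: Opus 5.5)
The plan is to rewrite the mutual information as a difference of relative entropies and conclude via the data-processing inequality. Abbreviate $E_J \coloneqq \cexparg{\pov{B}{A}\vee \pov{B}{C}}=\mathcal{P}_{G,\pov{A}{C}}$, $E_1\coloneqq \cexparg{\pov{B}{A}}^{\triangleleft}$, $E_2\coloneqq\cexparg{\pov{B}{C}}$ and $E_\wedge\coloneqq \cexparg{\pov{B}{A}\wedge\pov{B}{C}}$. All four are conditional expectations onto nested algebras, and $E_\wedge = E_1\circ E_2 = E_2\circ E_1$ by \cref{eq:CE_overlap_B}. The maximally mixed state $\pov{\pi_{AB}}{C}$ is invariant under every one of them, since each preserves the identity.

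First apply \cref{th:dec_rel_entropy} to each algebra relative to its subalgebras, as in \cref{eq:info_dec1,eq:info_dec2}. This gives $\info(\algpovpb{B}{A}) - \info(\wedge) = D(E_1\rho, E_\wedge\rho)$. Similarly, decomposing the joint algebra with respect to $\algpov{B}{C}$ gives $\info(\vee) - \info(\algpov{B}{C}) = D(E_J\rho, E_2\rho)$. Here the reference state is $E_2(\rho)$, which is invariant under $E_2$ but need not be invertible; I would handle this by a support argument, or by perturbing with $\epsilon\pi$ and taking the limit. Substituting into \cref{def: mutual_info_QRFs} yields
\begin{equation}
\minfo\left(\pov{B}{A}:\pov{B}{C}\right)_\rho = D\left(E_J\rho, E_2 E_J\rho\right) - D\left(E_1 E_J\rho, E_\wedge E_J\rho\right),
\end{equation}
using that $E_1, E_2, E_\wedge$ all factor through $E_J$.

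Next, set $\sigma\coloneqq E_J\rho$. Then $E_1\sigma$ and $E_\wedge\sigma = E_1(E_2\sigma)$ are the images of $\sigma$ and $E_2\sigma$ under the same CPTP map $E_1$. By monotonicity of relative entropy under $E_1$, $D(E_1\sigma, E_1E_2\sigma)\le D(\sigma, E_2\sigma)$, which gives $\minfo\ge 0$. The commutation $E_1E_2=E_2E_1=E_\wedge$ is exactly what allows the second argument to be written as $E_1$ applied to $E_2\sigma$; in fact, only $E_\wedge = E_1 E_2$ is needed.

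I expect the main obstacle to be justifying the decomposition $\info(\vee)-\info(\algpov{B}{C}) = D(E_J\rho, E_2\rho)$, and its analogue for $E_1$, cleanly. \cref{th:dec_rel_entropy} as stated requires an invertible reference state, so I would instead derive these identities directly. The key fact is that for a trace-preserving conditional expectation $E$ and $\rho$ in the larger algebra, $\Tr[\rho\log E\rho] = \Tr[E\rho\log E\rho]$, which holds because $\log E\rho$ lies in the subalgebra and $E$ is trace preserving with the bimodule property. With this, $D(\sigma, E_2\sigma) = -H(\sigma) + H(E_2\sigma) = \info(\vee) - \info(\algpov{B}{C})$ follows by direct computation, and the same applies to $E_1$ and $E_\wedge$. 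This removes any invertibility issue.
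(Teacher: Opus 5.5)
Your proposal is correct and is essentially the paper's argument with the roles of the two subalgebras exchanged. The paper decomposes the joint information relative to $\algpovpb{B}{A}$ and $\info(\algpov{B}{C})$ relative to the overlap, then applies data processing under $\cexparg{\pov{B}{C}}$, whereas you decompose relative to $\algpov{B}{C}$ and apply data processing under $\cexparg{\pov{B}{A}}^{\triangleleft}$, which works equally well thanks to the commuting-square relation. Your direct derivation of the chain identities from $\Tr[\rho\log\cexp(\rho)]=\Tr[\cexp(\rho)\log\cexp(\rho)]$, together with $\mathrm{supp}\,\rho\subseteq\mathrm{supp}\,\cexp(\rho)$ (true because the kernel projector of $\cexp(\rho)$ lies in the subalgebra), is slightly more careful than the paper: the paper invokes \cref{th:dec_rel_entropy} even though its reference state $\cexparg{\pov{B}{A}}^{\triangleleft}(\rho)$ need not be invertible.
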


\begin{proof}
Using~\cref{th:decomposition_rel_entropy} to decompose the information associated to the joint algebra,
\begin{equation}
    \info\left(\algpovpb{B}{A}\vee \algpov{B}{C}\right)_\rho = D\left(\cexparg{\pov{B}{C}\pov{B}{A}}(\rho),\cexparg{\pov{B}{A}}^{\triangleleft}(\rho)\right)+\info\left(\algpovpb{B}{A}\right)\label{eq:info_dec3},
\end{equation}
and combining this with \cref{eq:info_dec2}, we obtain
\begin{align}
    \minfo\left(\pov{B}{A}:\pov{B}{C}\right)_\rho = D\left(\cexparg{\pov{B}{C}\pov{B}{A}}\left(\rho\right), \cexparg{\pov{B}{A}}^{\triangleleft}\left(\rho\right)\right) - D\left(\cexparg{\pov{B}{C}}\left(\rho\right), \cexparg{\pov{B}{A} \wedge \pov{B}{C}}\left(\rho\right)\right).
\end{align}
Then, the statement follows directly from the \emph{data processing inequality}~\cite{Nielsen_Chuang_2010}. Note first that $\cexparg{\pov{B}{A} \wedge \pov{B}{C}}= \cexparg{\pov{B}{C}}\circ \cexparg{\pov{B}{A}}^{\triangleleft}$. Next, define $\tilde{\rho} \coloneqq \cexparg{\pov{B}{C}\pov{B}{A}}\left(\rho\right)$ and $\sigma \coloneqq \cexparg{\pov{B}{A}}^{\triangleleft}\left(\rho\right)$. Since $\algpov{B}{C}\subset \algpovpb{B}{A} \vee \algpov{B}{C}$, we also have $\cexparg{\pov{B}{C}}\left(\rho\right) = \cexparg{\pov{B}{C}}\left(\tilde{\rho}\right)$. Thus, we can write 
    \begin{equation}
        \minfo\left(\pov{B}{A}:\pov{B}{C}\right)_\rho = D\left(\tilde{\rho},\sigma\right) - D\left(\cexparg{\pov{B}{C}}\left(\tilde{\rho}\right), \cexparg{\pov{B}{C}}\left(\sigma\right)\right)
    \end{equation}
    Now, the data processing inequality states that $D\left(\rho,\sigma\right) \geq D\left(\Phi\left(\rho\right),\Phi\left(\sigma\right)\right)$ for any CPTP map $\Phi$. Identifying $\Phi = \cexparg{\pov{B}{C}}$ and using the fact that this is indeed a CPTP map, we obtain $D\left(\tilde{\rho},\sigma\right) \geq D\left(\cexparg{\pov{B}{C}}\left(\tilde{\rho}\right), \cexparg{\pov{B}{C}}\left(\sigma\right)\right)$ and thus
\begin{equation}
    \minfo\left(\pov{B}{A}:\pov{B}{C}\right)_\rho \geq 0
\end{equation}
for all states $\rho$.
\end{proof}

The derivation of the modified chain rule is illustrated in Fig.~\ref{fig:information_diagram}. The definition can be extended freely to other combinations of subsystems across different perspectives. For instance, we also have
\begin{equation}
   \minfo\left(\pov{A}{C}:\pov{C}{A}\right)_\rho \coloneqq \info\left(\algpov{A}{C}\vee \algpovpb{C}{A}\right) - \info\left(\algpov{A}{C}\right)-\info\left(\algpovpb{C}{A}\right) + \info\left(\algpov{A}{C}\wedge \algpovpb{C}{A}\right),
\end{equation}
representing the mutual information between each perspective on the other.
In the special case of different subsystems within the same QRF perspective, we recover the usual definition of mutual information. In this case, the systems are in a tensor product, their overlap is trivial and, thus, the associated information is zero while the joint algebra is simply the total algebra of observables in that perspective. Moreover, the conditional expectations onto the subalgebras associated to $\pov{A}{C}$ and $\pov{B}{C}$ are those of the usual tensor-product subsystems (\cref{ex:TP_Subsystem}). By noticing that $\info\left(\algpov{A}{C}\wedge \algpov{B}{C}\right)=D\left(\pov{\pi_{AB}}{C},\pov{\pi_{AB}}{C}\right)=0$, we see that the algebraic notion reduces to
\begin{equation}
\begin{aligned}
    \minfo\left(\pov{A}{C}:\pov{B}{C}\right)_\rho &= \info\left(\algpov{A}{C}\vee \algpov{B}{C}\right) - I \left(\algpov{A}{C}\right)  - \info\left(\algpov{B}{C}\right)\\
    &=\info\left(\pov{AB}{C}\right)-\info\left(\pov{A}{C}\right)-\info\left(\pov{B}{C}\right)\\
    &= H\left(A\right)_{\rho} + H\left(B\right)_{\rho} - H\left(AB\right)_\rho,
\end{aligned}
\end{equation}
namely, the mutual information for generalised subsystems reduces to the usual definition of the mutual information. Naturally, the same arguments apply to the mutual information $\minfo\left(\pov{B}{A}:\pov{C}{A}\right)$ between $B$ and $C$ relative to $A$.

\begin{figure}
    \centering
    \includegraphics[width=0.45\linewidth]{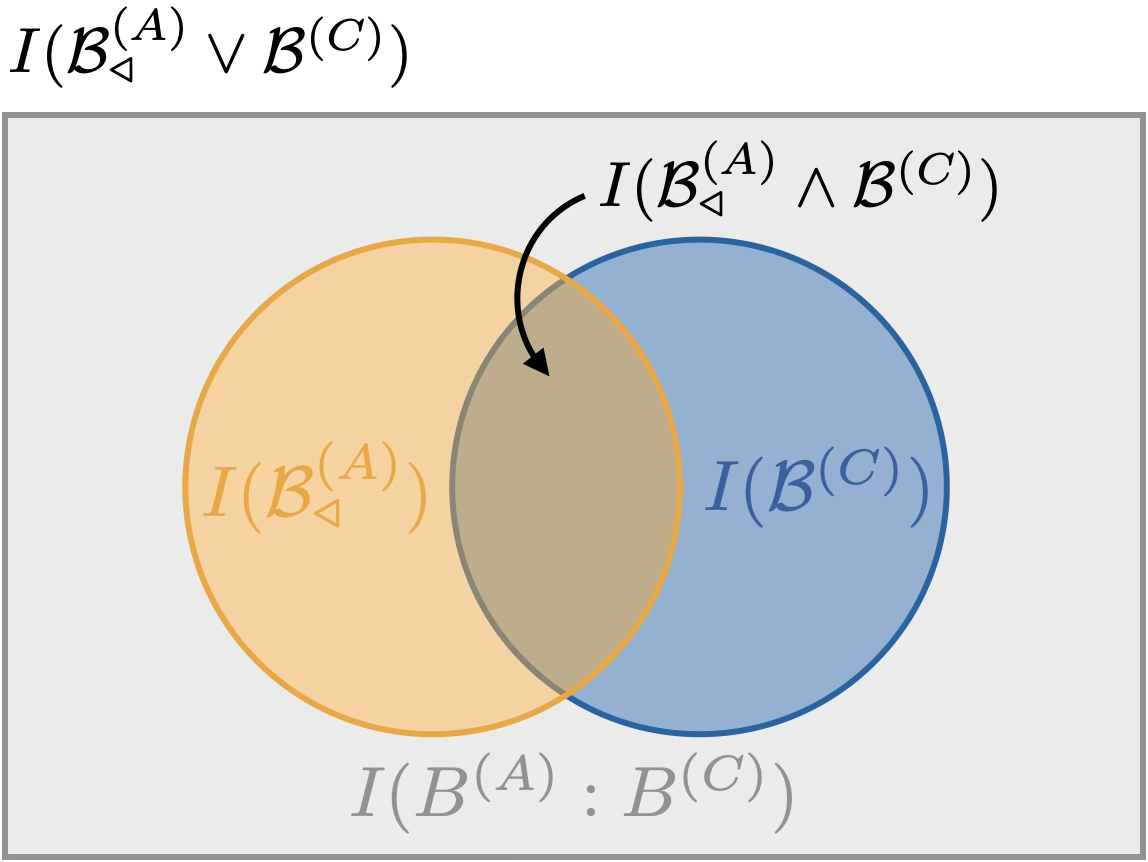}
    \caption{Information diagram illustrating the modified chain rule. The mutual information $I(B^{(A)}:B^{(C)})$ corresponds to the grey area and is obtained by subtracting from the total information $I(\algpovpb{B}{A}\vee \algpov{B}{C})$ the information of each subsystem separately, $I(\algpovpb{B}{A})$ and $I(\algpov{B}{C})$, and correcting for the overlap $I(\algpovpb{B}{A}\wedge \algpov{B}{C})$.}
    \label{fig:information_diagram}
\end{figure}

\subsection{Relation to Other Work}
\label{sec:relation_other_work}
A notion of mutual information in terms of algebraic quantities was also introduced in~\cite{Gao_2019}, in the context of entropic uncertainty relations\footnote{We thank Joe Renes for pointing out this connection through \cite{Bertoni_2020}.}:
\begin{definition}[Mutual information for algebraic subsystems (cf.~\cite{Gao_2019})]\label{DEF:GMI}
Let $\alg{A}$ and $\alg{B}$ be two algebras with conditional expectations $\cexparg{A}:\alg{A} \vee \alg{B} \mapsto \alg{A}$ and $\cexparg{B}:\alg{A} \vee \alg{B} \mapsto \alg{B}$ that satisfy the \emph{commuting square condition} $[\cexparg{A},\cexparg{B}]=0$. Then, the \emph{mutual information} in~\cite{Gao_2019} is given by
\begin{equation}
    I_{\cA}\left(A:B\right) =  \info\left(\alg{A} \vee \alg{B}\right)- \info\left(\alg{A}\right)- \info\left(\alg{B}\right) + \info\left(\alg{A} \wedge \alg{B}\right),\label{eq:mI_general}
\end{equation}
where $\info\left(\cA\right)_\rho = D\left(\cexparg{A}\left(\rho\right), \pi\right)$ denotes the information evaluated on the state associated with the algebra $\cA$. 
\end{definition}
To be precise, this quantity is termed ``conditional mutual information'' in~\cite{Gao_2019}, where the conditioning is intended on the overlap $\alg{A}\wedge\alg{B}$. Because this is better aligned with our notion of mutual information and to avoid confusion with the conditional mutual information defined in the next subsection, we will not refer to it as ``conditional'' here. A crucial element of the above definition is the commuting square condition, illustrated in~\Cref{fig:commuting_square}, whose origin can be traced back to an earlier work by Petz~\cite[Theorem 12]{Petz_1991}. In fact, the commuting square condition is equivalent to non-negativity of the mutual information in~\cref{DEF:GMI} for all states $\rho$: Petz's theorem shows that the commuting square condition implies subadditivity of the entropy and hence non-negativity of the mutual information, while the converse, namely that non-negativity for all states implies the commuting square condition, was established in~\cite{Gao_2019}.

\begin{figure}
    \centering
    \begin{tikzpicture}
  \node (AB) at (0,2) {$\alg{A}\vee \alg{B}$};
  \node (A) at (-2,0) {$\alg{A}$};
  \node (B) at (2,0) {$\alg{B}$};
  \node (AwB) at (0,-2) {$\alg{A}\wedge \alg{B}$};

  \draw[->] (AB) -- node[above left]  {$\cexparg{A}$} (A);
  \draw[->] (AB) -- node[above right] {$\cexparg{B}$} (B);
  \draw[->] (A) -- node[below left]  {$\cexparg{B}$} (AwB);
  \draw[->] (B) -- node[below right] {$\cexparg{A}$} (AwB);
\end{tikzpicture}
    \caption{\emph{Commuting square condition~\cite{Gao_2019, LaRacuente_2019}.} The conditional expectations onto subalgebras $\alg{A}$ and $\alg{B}$ both lead to the same subalgebra, the overlap $\alg{A}\wedge \alg{B}$, independently of the order in which they apply: $[\cexparg{A},\cexparg{B}]=0$. This is satisfied for subsystems relative to different QRFs.}
    \label{fig:commuting_square}
\end{figure}

\Cref{DEF:GMI} is strongly connected to the modified chain rule of~\cref{def: mutual_info_QRFs} which we obtained from independent arguments concerning the overlap between the algebras of two QRF perspectives. For algebras of subsystems defined relative to different (ideal) QRFs, the commuting square condition is satisfied. Indeed, we have from~\cref{eq:CE_overlap_B}:
\begin{equation}
    \cexparg{\pov{B}{A} \wedge \pov{B}{C}} = \cexparg{\pov{B}{A}}^{\triangleleft}\circ \cexparg{\pov{B}{C}} = \cexparg{\pov{B}{C}}\circ \cexparg{\pov{B}{A}}^{\triangleleft}
\end{equation}
and, similarly from~\cref{eq:CE_overlap_AC} 
\begin{equation}
    \cexparg{\pov{A}{C}\wedge \pov{C}{A}} = \cexparg{\pov{A}{C}}\circ \cexparg{\pov{C}{A}}^{\triangleleft}= \cexparg{\pov{C}{A}}\circ \cexparg{\pov{A}{C}}^{\triangleleft}. 
\end{equation}
Thus, in our case the modified chain rule leads to a well-defined, non-negative, and symmetric measure of mutual information across the different perspectives. It is non-trivial for this condition to be satisfied. For completeness, we show a simple case where the conditional expectations of two algebras do not commute and the mutual information is negative.

\begin{example}
Consider a qubit algebra $\cA = \cB\left(\mathbb{C}^2\right)$ and the subalgebras $\cA_Z = \langle Z \rangle$ and $\cA_H = \langle H \rangle$ generated, respectively, by the Pauli-Z and Hadamard gate $H=\frac{1}{\sqrt{2}}\left(X+Z\right)$.
The conditional expectations to these two subalgebras, $\mathcal{P}_Z$ and $\mathcal{P}_H$, act on a state $\rho$ as
\begin{equation}
\begin{aligned}
    \mathcal{P}_{H}\mathcal{P}_{Z} [\rho]&= \frac{1}{4}\left(\rho + H\rho H + Z \rho Z+ H Z \rho Z H\right),\\
    \mathcal{P}_{Z}\mathcal{P}_{H} [\rho]&= \frac{1}{4}\left(\rho + H\rho H + Z \rho Z+ ZH \rho HZ\right),
\end{aligned}
\end{equation}
thus,
\begin{align}
    [\mathcal{P}_H,\mathcal{P}_Z ]\left(\rho\right)= HZ \rho ZH-ZH \rho HZ = \frac{1}{4}\left(XZ \rho - \rho XZ\right)=\frac{i}{4}[\rho, Y].
\end{align}
For all states that are not of the form $\rho = \frac{\id}{2} +\frac{r_y}{2} Y$, and thus do not commute with $Y$, the commuting square condition is violated. Thus, there is no longer a guarantee that the mutual information evaluated according to~\cref{eq:mI_general} is positive. Indeed, the intersection between the algebras is trivial, $\braket{H}\wedge \braket{Z}=\braket{\id}$, and the union gives the whole algebra $\cA$, $\langle H\rangle \vee \langle Z\rangle = \cA$, thus for the state $\ketbra{0}$ we have:
\begin{equation}
    \begin{aligned}
        I\left(\cA_H\vee \cA_Z\right) =& D\left(E_{\cA}\left(\ketbra{0}\right), \frac{\id}{2}\right) =D\left(\ketbra{0}, \frac{\id}{2}\right)  \\
        I\left(\cA_H\wedge \cA_Z\right) =& D\left(E_{\langle \id\rangle}\left(\ketbra{0}\right), \frac{\id}{2}\right) =D\left(\frac{\id}{2}, \frac{\id}{2}\right) = 0\\ 
        I\left(\cA_H\right) =&  D\left(E_{\cA_H}\left(\ketbra{0}\right), \frac{\id}{2}\right) =D\left(\frac{\ketbra{0}+\ketbra{+}}{2}, \frac{\id}{2}\right) \\ 
        I\left(\cA_Z\right) =& D\left(E_{\cA_Z}\left(\ketbra{0}\right), \frac{\id}{2}\right) =D\left(\ketbra{0}, \frac{\id}{2}\right)  
    \end{aligned}
\end{equation}
Thus, 
\begin{align*}
    I_\mathcal{A}\left(\cA_Z :\cA_H\right) = -D\left(\frac{\ketbra{0}+\ketbra{+}}{2}, \frac{\id}{2}\right)\approx -0.4 < 0.
\end{align*}

Notice that for a qubit a pinching in orthogonal bases, such as $\mathcal{P}_X$ and $\mathcal{P}_Z$, satisfies the commuting square condition while non-orthogonal bases, such as $\mathcal{P}_X$ and $\mathcal{P}_H$, lead to its violation (see \cref{fig:bloch}).
\end{example}

\begin{figure}
    \centering
    \begin{subfigure}[t]{0.49\textwidth}
    \resizebox{0.75\textwidth}{0.75\textwidth}{
        \input{fig7a}

    }
         \caption{Commuting square condition satisfied.}
    \label{fig:bloch_cs}
    \end{subfigure}
    \begin{subfigure}[t]{0.49\textwidth}
        \resizebox{0.75\textwidth}{0.75\textwidth}{
        \input{fig7b}

    }
         \caption{Commuting square condition violated.}
    \label{fig:bloch_ncs}
    \end{subfigure}
\caption{\emph{Commuting square condition (CSC) for a qubit.} (a) The projectors $\mathcal{P}_X$ and $\mathcal{P}_Z$ are orthogonal and, as a consequence, satisfy the CSC. (b) If one considers pinching in two non-orthogonal bases, however, the CSC is violated.}
\label{fig:bloch}
\end{figure}

\subsection{Conditional Mutual Information and Strong Subadditivity}
\label{sec:cond_MI}
In addition to quantum mutual information, there is a variety of quantum information measures, such as entropy, conditional entropy, and conditional mutual information. In this section we extend some of these notions to algebraic subsystems. We start from the usual von Neumann quantities~\cite{Cerf_1998, Renes_2022}, but incorporate the insight of~\cref{sec:modified_chain_rule} that, for algebraic subsystems, the information associated to the overlap is often overcounted. 
To extend the notion of conditional mutual information, we first define the measure of conditional information.

\begin{definition}[Conditional information for algebraic subsystems]
    Consider an algebra $\mathcal{M}$ of bounded operators on a Hilbert space $\cH$ of dimension $d$. Given the subalgebras $\alg{A}, \alg{B}\subseteq \mathcal{M}$  and an overall state $\rho\in \mathcal{S}\left(\cH\right)$, the \emph{conditional information} of $\alg{B}$ given $\alg{A}$ is given by
    \begin{equation}
        \info\left(\alg{B}|\alg{A}\right)_\rho \coloneqq \info\left(\alg{A} \vee\alg{B}\right)_\rho - \info\left(\alg{A}\right)_\rho.
    \end{equation}
\end{definition}
This definition of conditional information can be understood as removing all the information related to the algebra of $\alg{A}$ from the global algebra $\alg{B} \vee \alg{A}$. Notice that this quantity is never associated to a subalgebra since the conditioning removes the identity operator, which is then not part of $\{\alg{B}\setminus\alg{A}\}$.  If there is an overlap between $\alg{A}$ and $\alg{B}$, the information associated to the overlapping part of $\alg{B}$ with $\alg{A}$ is removed. 

In the case of a fixed TPS, the conditional information is equal to 
\begin{equation}
    \info\left(B|A\right)_\rho= \info\left(AB\right)-\info\left(A\right)= D\left(\rho_{AB}, \rho_A \otimes \pi_B\right).
\end{equation}
It can be easily seen that, like the notion of information, this measure is related to conditional entropy by the relation $H\left(B|A\right)_\rho=\info\left(B|A\right)_\rho + \log d_B$. In this definition, there is no risk of overcounting the overlap between the algebras, unlike in the definition of mutual information, as we are only subtracting the information associated to one and not both subsystems.

In light of extending conditional mutual information to an algebraic setting, we first notice that, in the usual tensor product case, it is possible to write this measure in terms of information quantities as follows:
\begin{equation}
\label{eq:conditional_I_info}    \info\left(A:B|C\right)=H\left(A|C\right)+ H\left(B|C\right)-H\left(AB|C\right)=\info\left(AB|C\right)-\info\left(A|C\right)-\info\left(B|C\right),
\end{equation}
where we denoted with letters $A$, $B$ and $C$, distinct tensor product subsystems of a joint Hilbert space.  When extending such a definition to algebraic subsystems, we need to carefully account for possible overlaps of the subalgebras. For arbitrary algebras $\alg{A}$, $\alg{B}$ and $\alg{C}$ there is an overlap between $\info\left(\alg{A}|\alg{C}\right)=\info\left(\alg{A}\vee\alg{C}\right)-\info\left(\alg{C}\right)$ and $\info\left(\alg{B}|\alg{C}\right)=\info\left(\alg{B}\vee\alg{C}\right)- \info\left(\alg{C}\right)$ which we should not overcount. The overlapping information between $\info\left(\alg{A}\vee\alg{C}\right)$ and $\info\left(\alg{B}\vee \alg{C}\right)$ is $\info\left((\alg{A}\vee\alg{C})\wedge (\alg{B}\vee\alg{C})\right)$. This overlap contains the algebra $\alg{C}$, and can be strictly larger than it for a general algebraic case. This can be seen in cases where $\alg{A}$ and $\alg{B}$ share elements that do not belong to $\alg{C}$ (see \cref{fig:overlap_condMI}). When considering the overlap between $\info\left(\alg{A}|\alg{C}\right)$ and $\info\left(\alg{B}|\alg{C}\right)$, the information associated with $\alg{C}$ is not overcounted, however, the sum of the two overcounts the overlap between $\alg{A}|\alg{C}$ and $\alg{B}|\alg{C}$, which exists if $\alg{A}$ and $\alg{B}$ overlap non-trivially (see Fig.\,\ref{fig:overlap_condMI} for a graphical representation of this). To account for this, we therefore subtract the information contained in $\alg{C}$ from $\info((\alg{A}\vee\alg{C})\wedge (\alg{B}\vee\alg{C}))$. We define 
\begin{equation}
    \info\left(\alg{A}|\alg{C}\wedge \alg{B}|\alg{C}\right)\coloneqq \info\left((\alg{A}\vee\alg{C})\wedge (\alg{B}\vee\alg{C})\right) -\info\left(\alg{C}\right),
\end{equation}
as the final relevant modification to~\cref{eq:conditional_I_info}.
The considerations above lead us to define conditional mutual information as follows.
\begin{figure}[h]
    \centering
    \input{fig9}
    \caption{Overlap between the algebras $\alg{A} \vee \alg{C}$ and $\alg{B} \vee \alg{C}$. This intersection always contains the algebra $\alg{C}$, but may be be strictly larger, particularly when $\alg{A}$ and $\alg{B}$ have non-trivial overlap.}
    \label{fig:overlap_condMI}
\end{figure}

\begin{definition}[Conditional mutual information for algebraic subsystems]\label{CMIfAS}
Consider an algebra $\mathcal{M}$ of bounded operators on a Hilbert space $\cH$ of dimension $d$. Given the subalgebras $\alg{A}, \alg{B},\alg{C}\subseteq \mathcal{M}$  and an overall state $\rho\in \mathcal{S}\left(\cH\right)$, the conditional mutual information between $\alg{A}$ and $\alg{B}$, given the information of $\alg{C}$, is given by
    \begin{equation}
        \minfo\left(\alg{A}:\alg{B}|\alg{C}\right)\coloneqq\info\left(\alg{A} \vee \alg{B}|\alg{C}\right)-\info\left(\alg{A}|\alg{C}\right)-\info\left(\alg{B}|\alg{C}\right)+ \info\left(\alg{A}|\alg{C}\wedge \alg{B}|\alg{C}\right) ,
    \end{equation}
    where $\info\left(\alg{A}|\alg{C}\wedge \alg{B}|\alg{C}\right)\coloneqq \info\left(\left(\alg{A}\vee \alg{C}\right)\wedge \left(\alg{B}\vee \alg{C}\right)\right)-\info\left(\alg{C}\right)$.
\end{definition}

Notice that in the case of a fixed TPS, the algebra $(\alg{A}\vee\alg{C})\wedge (\alg{B}\vee\alg{C})$ coincides with $\alg{C}$. Thus, \cref{CMIfAS} reduces to the standard conditional mutual information. Furthermore, using the chain law of conditional information, all the terms capturing the information of $\alg{C}$ cancel out, giving
\begin{equation}\label{SSA_P1}
   \minfo\left(\alg{A}:\alg{B}|\alg{C}\right)\coloneqq\info\left(\alg{A} \vee \alg{B}\vee\alg{C}\right)-\info\left(\alg{A}\vee\alg{C}\right)-\info\left(\alg{B}\vee\alg{C}\right)+ \info\left(\left(\alg{A}\vee\alg{C}\right)\wedge \left(\alg{B}\vee\alg{C}\right)\right)= \minfo\left(\alg{A} \vee\alg{C} :\alg{B}\vee\alg{C}\right).
\end{equation}
We notice that through this approach of defining information measures, conditional mutual information is a type of mutual information. This immediately implies that $I(\mathcal{A}:\mathcal{B}|\mathcal{C})$ satisfies strong subadditivity as long as the subsystems $\mathcal{A}\vee \mathcal{C}$ and $\mathcal{B} \vee \mathcal{C}$ satisfy the commuting square condition. 
\begin{proposition}[Strong subadditivity] Given a Hilbert space $\cH$ of dimension $d$, consider three subalgebras $\alg{A}, \alg{B},\alg{C}\subseteq \cB(\cH)$. Then 
$\minfo\left(\alg{A}:\alg{B}|\alg{C}\right)_{\rho} \ge 0$ for all states $\rho\in \mathcal{S}(\cH)$ if and only if the algebras $\alg{A} \vee\alg{C}$ and $ \alg{B} \vee\alg{C} $ satisfy the commuting square relation (see \cref{DEF:GMI}). 
\end{proposition}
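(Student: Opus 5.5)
The plan is to reduce the statement to the unconditional case via the identity \eqref{SSA_P1}. Set $\alg{A}'\coloneqq \alg{A}\vee\alg{C}$ and $\alg{B}'\coloneqq\alg{B}\vee\alg{C}$. Since $\info(\alg{X}|\alg{C})=\info(\alg{X}\vee\alg{C})-\info(\alg{C})$ and $\alg{A}\vee\alg{B}\vee\alg{C}=\alg{A}'\vee\alg{B}'$, the $\info(\alg{C})$ terms cancel: one is subtracted from the first term, two are added back by the subtracted conditional informations, and one is subtracted in the overlap term. This gives
\[
\minfo(\alg{A}:\alg{B}|\alg{C})_\rho=\info(\alg{A}'\vee\alg{B}')-\info(\alg{A}')-\info(\alg{B}')+\info(\alg{A}'\wedge\alg{B}')=I_{\cA}(\alg{A}':\alg{B}'),
\]
which is the quantity in \cref{DEF:GMI}. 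The statement is therefore the equivalence ``$I_{\cA}(\alg{A}':\alg{B}')_\rho\ge0$ for all $\rho$ $\iff$ $[\cexparg{A'},\cexparg{B'}]=0$'' applied to the pair $\alg{A}',\alg{B}'$. I will prove the two directions separately.

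For the ``if'' direction I will copy the proof of \cref{SA_BA_BC} with $\alg{A}',\alg{B}'$ in place of $\algpovpb{B}{A},\algpov{B}{C}$. Write $E_{\vee}$ for the conditional expectation onto $\alg{A}'\vee\alg{B}'$. Since all the subalgebras involved lie inside $\alg{A}'\vee\alg{B}'$, we have $E_{A'}=E_{A'}\circ E_{\vee}$ and $E_{B'}=E_{B'}\circ E_{\vee}$. Apply \cref{th:dec_rel_entropy} twice with reference state $\pi=\id/d$, which is invariant under every conditional expectation. First, $\info(\alg{A}'\vee\alg{B}')=D(E_\vee\rho,E_{A'}\rho)+\info(\alg{A}')$. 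Second, using the commuting square hypothesis, $E_{A'}E_{B'}$ is the conditional expectation onto $\alg{A}'\wedge\alg{B}'$, so $\info(\alg{B}')=D(E_{B'}\rho,E_{B'}E_{A'}\rho)+\info(\alg{A}'\wedge\alg{B}')$. To justify the second application I must check that $E_{A'}E_{B'}$ is the conditional expectation onto $\alg{A}'\wedge\alg{B}'$. It is idempotent because the two maps commute. Its range is contained in both algebras, and it fixes the intersection. Trace preservation, positivity and the bimodule property follow by composition, so uniqueness in \cref{def:cond_expect1} applies. Substituting both decompositions gives
\[
I_{\cA}(\alg{A}':\alg{B}')=D(\tilde\rho,\sigma)-D(E_{B'}\tilde\rho,E_{B'}\sigma),\qquad \tilde\rho=E_\vee\rho,\ \sigma=E_{A'}\rho .
\]
This is non-negative by the data processing inequality, since $E_{B'}$ is CPTP.

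The ``only if'' direction is the main obstacle, because it is the converse established in \cite{Gao_2019}. My first choice is to invoke that result for the pair $(\alg{A}',\alg{B}')$ directly; the reduction above makes this legitimate. If a self-contained argument is required, I would proceed by contraposition. Suppose $E_{A'}E_{B'}\ne E_{B'}E_{A'}$. Then I would look for a state of the form $\rho=\pi+\epsilon X$, with $X$ self-adjoint and traceless, for which the second-order expansion of $I_{\cA}$ in $\epsilon$ is negative. All four terms vanish at $\epsilon=0$, and each relative entropy to $\pi$ is, to second order, $\tfrac{d}{2}\epsilon^2\|E(X)\|_2^2$ in the Hilbert–Schmidt norm. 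The leading coefficient is therefore proportional to
\[
\|E_\vee X\|^2-\|E_{A'}X\|^2-\|E_{B'}X\|^2+\|E_\wedge X\|^2 .
\]
This expression is a quadratic form in orthogonal projections. It is non-negative for every $X$ exactly when the ranges of $E_{A'}$ and $E_{B'}$ within $\mathrm{ran}\,E_\vee$ intersect orthogonally, that is, when their projections commute. Non-commutation should therefore give an $X$ for which it is negative; the example with $\cA_Z,\cA_H$ is a check of this. The delicate point is that $\alg{A}'\wedge\alg{B}'$ is an algebra intersection while the geometric statement involves subspace intersections. These agree here, so $E_\wedge$ is the orthogonal projection onto $\mathrm{ran}E_{A'}\cap\mathrm{ran}E_{B'}$, and the standard two-projection argument then completes the proof.
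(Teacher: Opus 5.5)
Your proposal is correct and follows the paper's proof: reduce via \eqref{SSA_P1} to the unconditional quantity $\minfo(\alg{A}\vee\alg{C}:\alg{B}\vee\alg{C})$, then invoke the equivalence of \cite{Gao_2019} for the pair $\alg{A}\vee\alg{C}$, $\alg{B}\vee\alg{C}$. Your additional self-contained arguments are also sound and go beyond the paper, which cites \cite{Gao_2019} for both directions: the ``if'' direction is the data-processing argument of \cref{SA_BA_BC}, with your uniqueness check that $E_{A'}E_{B'}=E_{\wedge}$, and the perturbative converse can be completed with $X=u+v$, where $u\in\mathrm{ran}\,E_{A'}\ominus\mathrm{ran}\,E_\wedge$ and $v\in\mathrm{ran}\,E_{B'}\ominus\mathrm{ran}\,E_\wedge$ are self-adjoint unit vectors with positive Hilbert--Schmidt overlap $c$, so that $E_\vee X=X$, $E_\wedge X=0$, and the quadratic form equals $-2c-\|E_{A'}v\|_2^2-\|E_{B'}u\|_2^2<0$.
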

\begin{proof}
    We have established in \cref{SSA_P1} that $\minfo\left(\alg{A}:\alg{B}|\alg{C}\right)$ equals $\minfo\left(\alg{A} \vee\alg{C} :\alg{B}\vee\alg{C}\right)$. It follows from~\cite{Gao_2019} that $\minfo\left(\alg{A} \vee\alg{C} :\alg{B}\vee\alg{C}\right)\ge0$ if and only if the algebras $\alg{A} \vee\alg{C}$ and $ \alg{B} \vee\alg{C} \subseteq \alg{A} \vee\alg{B}\vee\alg{C}$ satisfy the commuting square relation. Hence this condition applies as well to the  conditional mutual information for algebraic subsystems.
\end{proof}

\section{Examples and Application}
\label{sec:examples_app}

In this section, we consider some concrete examples and applications of the theoretical insights gained in the previous sections. In~\cref{sec:non_entangling}, we provide physical intuition for the mutual information across different QRFs by considering a concrete example: a product state that does not entangle under a change of QRF. Moreover, we show how the algebraic subsystems and their associated information provide a natural interpretation of previously established conservation laws under QRF transformations \cite{cepollaro2024,delaHamette_entropy_2026, Luppi_2026} in~\cref{sec:conservation_laws}. 

\subsection{Example: Non-Entangling States}
\label{sec:non_entangling}

To gain some intuition for the proposed notion of mutual information across perspectives, it will be helpful to determine its value for concrete examples. A particularly interesting class of states is given by those that do not entangle under a change of QRF. For example, in the case of $G=\mathbb{Z}_2$, states of the form
\begin{equation}
    \pov{\ket{\psi}_{AB}}{C}= \frac{1}{\sqrt{2}}\left(\alpha \ket{0}_A^{\left(C\right)}+\beta \ket{1}_A^{\left(C\right)}\right)\otimes \pov{\ket{\pm}_B}{C}
\end{equation}
retain their product-form under a change of QRF, which, at most, causes a phase-kickback depending on the value of $X_B$:
\begin{equation}
    \pov{\ket{\psi}_{BC}}{A}= S_{C\to A}\left(\pov{\ket{\psi}_{AB}}{C} \right)  =   \pov{\ket{\pm}_B}{A}\otimes \frac{1}{\sqrt{2}}\left(\alpha \ket{0}_C^{\left(A\right)}\pm\beta \ket{1}_C^{\left(A\right)}\right).
\end{equation}
More generally, we can characterise states that do not entangle under a change of QRF by the following definition. 
\begin{definition}[Non-entangling states]\label{defNES}
Consider a tripartite system $A,B,C$ with a pure product state of $A$ and $B$ in the perspective of $C$, $\ket{\Psi}_{AB}^{\left(C\right)} = \ket{\psi}_A^{\left(C\right)} \otimes \ket{\phi}_B^{\left(C\right)}$. This state is called \emph{non-entangling} if for all $s, s' \in G $ such that  $\braket{s|\psi}_A^{\left(C\right)} \ne 0 $ and $\braket{s'|\psi}_A^{\left(C\right)} \ne 0 $, it holds
\begin{equation}\label{4.1} e^{i\theta\left(s\right)} U_B^{\dagger}\left(s\right) \ket{\phi} = e^{i\theta(s')} U_B^{\dagger}\left(s'\right) \ket{\phi},
\end{equation}
where $\theta\left(s\right) \in [0, 2\pi) \subset \mathbb{R}$ is an arbitrary real-valued function and $U_B\left(s\right)$ is the regular left representation of the group $G$ acting on the vector representation of itself.
\end{definition}
These states are closely related to the so-called ``TPS-invariant operators'' of \cite{Hoehn_2023} but provide a more concrete criterion that can be checked easily. Moreover, both coherence and entanglement entropy of these states  are preserved under changes of QRF (see \cref{Non_Ent_Sta_App}).

For states of this type, it would be natural to think that the subsystem $B$ remains the same under a change of QRF -- after all, its state factorises from the state of $A$ in both frames of reference. Yet, this state-based thinking can be misleading. In particular, it would lead one to conclude that the joint system of $\pov{B}{A}$ and $\pov{B}{C}$ is simply given by $B = \pov{B}{A}=\pov{B}{C}$.  When understanding subsystems in terms of observational capabilities and the associated subalgebras of observables, we established that the algebra $\algpovpb{B}{A} \vee \algpov{B}{C}$ (relative to $C$) contains not only the observables on system $B$ but also the orientation $\pov{Z_A}{C}$ of the other reference frame. This has an important effect on the mutual information associated to the state.
\begin{example}
    Consider $G=\mathbb{Z}_2$ and the following pure state relative to $C$:
    \begin{equation}
        \pov{\ket{\psi}_{AB}}{C}=\pov{\ket{0}_A}{C} \otimes\pov{\ket{+}_B}{C}
    \end{equation}
    This state is non-entangling and, in fact, entirely invariant under the change of QRF:
    \begin{equation}
        \pov{\ket{\psi}_{BC}}{A}= \pov{\ket{+}_B}{A}\otimes \pov{\ket{0}_C}{A} .
    \end{equation}
    The mutual information of $\pov{B}{A}$ and $\pov{B}{C}$ associated to  $\rho^{(A)}_{BC} = \ketbra{\psi}{\psi}^{(A)}_{BC}$ is 
    \begin{align*}
         \minfo\left(\pov{B}{C}:\pov{B}{A}\right) = D\left( \mathcal{P}_{\pov{Z_A}{C}}[\pov{\rho_{AB}}{C}], \pov{\pi_{AB}}{C}\right) 
         -D\left(\pov{\rho_B}{C}, \pov{\pi_B}{C}\right) 
         -D\left(\pov{\rho_B}{A}, \pov{\pi_B}{A}\right) +D\left(\mathcal{P}_{\pov{X_B}{C}}[\pov{\rho_B}{C}], \pov{\pi_B}{C}\right).
    \end{align*}
    Computing explicitly the pinched and reduced states within this definition, we obtain
    \begin{equation}
    \begin{split}
             \minfo\left(\pov{B}{C}:\pov{B}{A}\right)
             =\underset{=2}{\underbrace{D\left(\pov{\ketbra{0}{0}_{A}}{C}\otimes\pov{\ketbra{+}{+}_{B}}{C}, \pov{\pi_{AB}}{C}\right)}} 
        & -\underset{=1}{\underbrace{D\left(\pov{\ketbra{+}{+}_B}{C}, \pov{\pi_B}{C}\right)}}+ \\
         &
         -\underset{=1}{\underbrace{D\left(\pov{\ketbra{+}{+}_B}{A}, \pov{\pi_B}{A}\right)}} +\underset{=1}{\underbrace{D\left(\pov{\ketbra{+}{+}_B}{C}, \pov{\pi_B}{C}\right)}}
         =1.
    \end{split}
    \end{equation}
    The mutual information being non-zero highlights that the global algebra $\algpovpb{B}{A} \vee \algpov{B}{C}$ entails access to additional observables, such as $\pov{Z_A}{C} \otimes \pov{\id_{B}}{C}$, which provides information about $\pov{\ket{\psi}_{AB}}{C}$ that cannot be obtained by operations in $\algpovpb{B}{A}$ and $\algpov{B}{C}$ separately.
\end{example}

In the concrete setup, illustrated in~\Cref{fig:pull-back_algebras}, we can imagine that Charlie measures the orientation of $B$ relative to his frame of reference via $\pov{\id_A}{C}\otimes \pov{Z_B}{C}$ while Alice measures the orientation of $B$ relative to her own frame of reference through $\pov{Z_A}{C}\otimes \pov{Z_B}{C}$ (as seen from Charlie's point of view). On their own, with access to $\algpov{B}{C}$ respectively $\algpovpb{B}{A}$, none of them can infer the orientation of the other frame of reference. Yet, when they combine their resources, they can access $\pov{Z_A}{C}\otimes \pov{\id_{B}}{C}=\left(\pov{\id_A}{C}\otimes \pov{Z_B}{C}\right)\left(\pov{Z_A}{C}\otimes \pov{Z_B}{C}\right)$ and thus the relative orientation between their frames.
If the relative orientation of the reference frames is maximally mixed, there is no information to be gained about this quantity. Hence, in this situation, we expect that the mutual information should be zero. We now show that this is the case.
\begin{example}
    Consider $G=\mathbb{Z}_2$ and the following state relative to $C$:
    \begin{equation}
        \pov{\rho_{AB}}{C}=\frac{1}{2}\pov{\id_A}{C} \otimes\pov{\ketbra{+}{+}_B}{C}.
    \end{equation}
    Under a change of QRF, this becomes
    \begin{equation}
        \pov{\rho_{BC}}{A}= \pov{\ketbra{+}{+}_B}{A}\otimes \frac{1}{2}\pov{\id_C}{A} .
    \end{equation}
    The mutual information of $\pov{B}{A}$ and $\pov{B}{C}$ associated to this state is
    \begin{equation}
    \begin{split}
         \small \minfo\left(\pov{B}{C}:\pov{B}{A}\right)=\underset{=1}{\underbrace{D\left(\pov{\pi_A}{C}\otimes \pov{\ketbra{+}{+}_B}{C}, \pov{\pi_{AB}}{C}\right)}} 
         &-\underset{=1}{\underbrace{D\left(\pov{\ketbra{+}{+}_B}{C}, \pov{\pi_B}{C}\right)}} 
         \\
         &-\underset{=1}{\underbrace{D\left(\pov{\ketbra{+}{+}_B}{A}, \pov{\pi_B}{A}\right)}} +\underset{=1}{\underbrace{D\left(\pov{\ketbra{+}{+}_B}{C}, \pov{\pi_B}{C}\right)}}
         =0.
    \end{split}
    \end{equation}
\end{example}

\subsection{Conservation Laws Under Changes of QRF}
\label{sec:conservation_laws}

Understanding the algebraic and information-theoretic structure of subsystems relative to different QRFs also allows us to give a clear interpretation to recently discovered conserved quantities under changes of QRF. In particular, it was shown in \cite{cepollaro2024} that the sum of coherence and entanglement for pure states is conserved under changes of ideal QRFs:
\begin{equation}
    \pov{\mathcal{C}_e}{C}+\pov{\mathcal{E}_e}{C}=\pov{\mathcal{C}_e}{A}+\pov{\mathcal{E}_e}{A},
\end{equation}
where $\pov{\mathcal{E}_e}{C} \coloneqq H\left(\pov{A}{C}\right)_{\rho}= -\Tr\left[\pov{\rho_A}{C} \log\pov{\rho_A}{C}\right]$ is the entanglement entropy and $\pov{\mathcal{C}_e}{C}  \coloneqq H\left(\pov{A}{C}\right)_{\rho} - H\left(\pov{A}{C}\right)_{\rho_{d}}$ is the relative entropy of coherence of $A$ relative to $C$. Here, $\pov{\rho_{d,A}}{C} = \mathcal{P}_{G,A}\left(\pov{\rho_A}{C}\right)$ is the state of $A$ upon removing all off-diagonal terms in the group basis. The above conservation law then follows from the fact that $H\left(\pov{A}{C}\right)_{\rho_d} = -\Tr\left[\pov{\rho_{d,A}}{C}\log \pov{\rho_{d,A}}{C}\right]$ is conserved under changes of QRF.  Generalisations of this result to an arbitrary number of subsystems and the non-ideal case were subsequently studied in \cite{delaHamette_entropy_2026}.\\

Now, the state $\pov{\rho_{d,A}}{C} = \mathcal{P}_{G,\pov{A}{C}}\left(\pov{\rho_A}{C}\right)$ is already familiar to us from the study of the algebraic subsystems relative to different QRFs. In particular, as shown in \cref{eq:CE_overlap_AC}, projecting onto the overlap between subsystems $\pov{A}{C}$ and $\pov{C}{A}$ from Charlie's point of view gives
\begin{equation}
    \cexparg{\pov{A}{C}\wedge \pov{C}{A}}\left(\rho\right) = \mathcal{P}_{G,\pov{A}{C}}\left(\rho\right)\otimes \tfrac{1}{|G|}\pov{\id_{B}}{C}.
\end{equation}
Given that this state lies in the overlap of the two perspectives, it is naturally preserved under changes of QRFs. As such, the information associated to it,
\begin{equation}
    \info\left(\algpov{A}{C}\wedge \algpovpb{C}{A}\right) = D\left(\mathcal{P}_{G,\pov{A}{C}}\left[\rho\right]\otimes \tfrac{1}{|G|}\pov{\id_{B}}{C}, \pov{\pi_{AB}}{C}\right) = D\left(\mathcal{P}_{G,\pov{A}{C}}\left[\rho\right], \pov{\pi_{A}}{C}\right) = \log d_A - H\left(A\right)_{\pov{\rho_{d,A}}{C}}.
\end{equation}
is invariant, too. Thus, the sum of entanglement and coherence really is the \emph{entropy associated to the overlap of $\pov{A}{C}$ and $\pov{C}{A}$} and, as such, naturally preserved under changes of QRF.\footnote{Note that we are using here only the definition of entropy for standard tensor-product subsystems. As discussed in~\cref{sec:algebraic_subsystems}, the entropy associated to the algebraic subsystems additionally tracks the maximal entropy of system $B$.}\\

This reasoning also suggests a second conserved quantity associated to the other non-trivial overlap between subsystems relative to different QRFs: $\algpovpb{B}{A}\wedge \algpov{B}{C}$. As shown in \cref{eq:CE_overlap_B}, the conditional expectation onto this subalgebra is given by $\cexparg{\pov{B}{A} \wedge \pov{B}{C}}\left(\pov{\rho_{AB}}{C}\right)= \tfrac{1}{|G|}\pov{\id_A}{C}\otimes \mathcal{G}_{\inc}\left(\pov{\rho_B}{C}\right)$. The information associated to it is given by
\begin{equation}\label{Entropy_BA_BC}
    \info\left(\algpovpb{B}{A}\wedge \algpov{B}{C}\right) =  D\left(\mathcal{G}_{\inc}\left[\pov{\rho_B}{A}\right],\pov{\pi_B}{A}\right) = D\left(\mathcal{G}_{\inc}\left[\pov{\rho_B}{C}\right],\pov{\pi_B}{C}\right)= \log d_B - H\left(B\right)_{\mathcal{G}_{\inc}[\rho]}.
\end{equation}
Thus, we obtain another conserved quantity -- the information or entropy associated to the incoherent $G$-twirl of $B$. This quantity is related to the conservation law proven in \cite[Th.~3.8]{Luppi_2026}. There, it was shown that the sum of coherence and entanglement of the system $B$, 
\begin{equation}\label{Thm_3.8_Luppi}
    \pov{\mathcal{C}_{e,B}}{C}+\pov{\mathcal{E}_B}{C}=\pov{\mathcal{C}_{e,B}}{A}+\pov{\mathcal{E}_B}{A},
\end{equation}
where $C_{e,B} = H(B)_{\rho_d} - H(B)_{\rho}$ and $\rho_{B,d}$ = $\sum_n P_n \rho_{B}P_n$, is preserved if the reduced QRF transformation on system $B$ is \emph{strongly block-preserving} with respect to the set of non-degenerate PVMs $\{P_n\}_n$. This condition essentially requires that the QRF transformation leads to at most a relabeling of the projectors. As before, the conservation law in Eq.~\eqref{Thm_3.8_Luppi} captures the conservation of the entropy associated to the pinched state $\rho_{B,d}$.
To see how this relates to the entropy associated to the overlap, note that in the abelian case, all representation operators $U(g)$ pairwise commute and therefore admit a common orthonormal eigenbasis. Accordingly, the incoherent $\mathcal G$-twirl can be expressed as a pinching in the charge sectors, $\mathcal G_{\mathrm{inc}}(\rho)=\sum_q P_q\rho P_q$ \cite{bartlett2007reference}, where the projectors $\{P_q\}$ are non-degenerate. Moreover, the projectors onto the different charge sectors are invariant under a change of QRF, so that the block-preserving condition is satisfied. We can thus see the conservation of the information associated to the overlap $\algpovpb{B}{A}\wedge \algpov{B}{C}$ as a special case of \cref{Thm_3.8_Luppi} for abelian groups.

In the non-abelian case, the incoherent G-twirl no longer takes this simple form \cite{bartlett2007reference}. Nevertheless, we can also define a conservation law for this case, using the concept of \textit{$G$-asymmetry} \cite{Frameness}, which is defined as $A_G[\rho]\coloneqq H(\mathcal{G}_{\inc}[\rho])-H(\rho)$. This quantity is usually described as a measure of \textit{frameness} and represents the entropy increase resulting from erasing information about a state's relation to a QRF. By considering that, for a bipartite pure state $\pov{\rho_{AB}}{C}$, the equality $H(\pov{B}{C})=H(\pov{A}{C})=\pov{\mathcal{E}_e}{C}$ always holds, and that, by \cref{Entropy_BA_BC}, $H\left(\pov{B}{C}\right)_{\mathcal{G}_{\inc}[\rho]}=H\left(\pov{B}{A}\right)_{\mathcal{G}_{\inc}[\rho]}$ is invariant under the choice of QRF, we obtain
\begin{equation}
    \pov{A_G}{C}+ \pov{\mathcal{E}_e}{C}= \pov{A_G}{A}+ \pov{\mathcal{E}_e}{A},
\end{equation}
where $\pov{A_G}{C}\coloneqq A_G[\pov{\rho_B}{C}]$. This law shows a trade-off between entanglement and the relative information (frameness) held by the system when changing QRF.

\section{Conclusion}
\label{sec:conclusion}

Due to the relativity of subsystems, we cannot directly apply the standard notion of mutual information to quantify the correlations between subsystems relative to different quantum reference frames. In this paper, we showed how following an algebraic approach and taking into account the overlap between different subsystem algebras nevertheless leads to a well-defined notion of mutual information across QRF perspectives for ideal QRFs and arbitrary finite-dimensional symmetry groups. This notion recovers the usual definition of mutual information for subsystems relative to the same frame of reference, is non-negative for arbitrary states, and relates naturally to other proposals in the literature.

In order to derive this measure, we studied in detail the various subsystem algebras relative to different frames of reference, including their union and their overlap. We found that these algebras satisfy a particularly convenient property: the projectors onto different subsystem algebras commute. This ``commuting square condition'' \cite{Gao_2019} allows to separate the information associated to the overlap of two subsystem algebras from the remainder unambiguously and is what ensures a well-motivated and non-negative measure of mutual information across QRFs. This  suggests a particular type of compatibility between the perspectives associated to different QRFs. We extended our algebra-based strategy to define additional quantum information measures, namely conditional information and conditional mutual information, and showed that the latter satisfies strong-subadditivity for subsystems relative to different QRFs. Finally, we considered the implications of our definition for a special class of states, which do not entangle under a change of QRF, and explained how recently proposed conservation laws \cite{cepollaro2024, delaHamette_entropy_2026, Luppi_2026} reflect the fact that the information associated to the overlap between subsystems relative to different QRFs remain unaffected by a change of reference frame.

The present work suggests many routes for further investigation. Staying within the field of QRFs, it would be interesting to extend the analysis to non-ideal frames \cite{delaHamette_2021, Garmier_2025_non_ideal} and perhaps even systems relative to QRFs modelled by different frameworks. The latter can be captured quite naturally within the algebraic approach, as different approaches to modelling QRFs can be understood to reflect different assumptions about which algebras of observables are accessible within a given perspective \cite{Castro-Ruiz_2025, delaHamette_2025_charges, Doat_2025}. In this context, it would also be interesting to understand better the operational meaning of measuring in a given QRF perspective \cite{Vanrietvelde_2026}. 

From a quantum information perspective, the next natural step is to provide our measure with operational meaning. Indeed, the introduction of entropic quantities is commonly justified as figure of merit of an information processing task or protocol. While the definitions in this work are motivated by the intuition behind their structure, an operational interpretation a crucial next step. A further open question is whether this quantity satisfies usual compositional properties. For instance, the chain rule $I(A:BC) = I(A:B)+I(A:C|B)$, states that correlations between $A$ and the joint system $BC$ can be accounted for sequentially: first those between $A$ and $B$ and then those between $A$ and $C$ given $B$. Whether such decomposition holds, possibly as an inequality, across QRF perspective is not obvious, since the joint system $BC$ may include overlaps between $B$ and $C$. More generally, it would be valuable to relate our measure to known entropic quantities and examine whether information diagrams can jointly summarize their relations.  
In~\cite{Rothlin_2026,carrozza_2025}, a concrete correspondence between the perspective neutral formalism to QRFs and quantum error correction (QEC) was formulated. This connection suggests that the algebraic mutual information can be carried over to QEC, for instance to quantify correlations between logical subsystems associated with different codes on the same space. Thus, our measures might prove useful for quantitative comparison between quantum error correction codes.

Finally, this work has stayed firmly within the realm of finite-dimensional and non-relativistic quantum information theory. Going beyond the first restriction to infinite-dimensional symmetry groups is a realistic next step, although one might have to accept certain divergences in the information measures in this context -- after all, a perfectly localised position reference frame contains infinite information. Moving towards the relativistic regime, even more care has to be taken to avoid divergences. Here, it would be interesting to see whether one can utilise the results on type reduction through the inclusion of a QRF \cite{Witten_2021, Chandrasekaran_2022, Fewster_2024, deVuyst_2024a, DeVuyst_2024b}, which have already been shown to cure divergences in the entropy associated to local algebras of observables in spacetime. Combining these results with the strategy outlined in the present work might allow one to associate more general information measures, such as mutual information, to local and perhaps even overlapping subsystems in spacetime.

\section{Acknowledgments}
All authors would like to thank the QIT group at ETH, in particular Joseph Renes, Lorenzo Sannino, and Renato Renner for insightful discussions. In addition, we thank Renato Renner for providing feedback on the thesis. We also acknowledge the assistance of Claude Sonnet 5.5 and Opus 5 -- 5.5 in developing and streamlining some of the proofs for arbitrary finite-dimensional symmetry groups. This research was funded in part by the
Austrian Science Fund (FWF) 10.55776/PIN1152626.  CF and VK acknowledge support through the SNSF grant No.~200021E\_232425, from ETH Zurich Quantum Center and NCCR SwissMAP. FG acknowledges support from the Swiss National Science Foundation via the Ambizione Grant PZ00P2-208885, from the ETH Zurich Quantum Center, and from the Italian Ministry of University and Research (MUR) under the Funding for Research Projects (FIS 2), pursuant to Ministerial Decree No. 23314 of 11/12/2024 (Project Q-GraSp, FIS-2023-02629). 
This work was made possible through the support of the WOST, WithOut SpaceTime project (\href{https://withoutspacetime.org}{withoutspacetime.org}), led by the Center for Spacetime and the Quantum (CSTQ), and supported by Grant ID\#~63683 from the John Templeton Foundation (JTF). The opinions expressed in this work are those of the author(s) and do not necessarily reflect the views of the John Templeton Foundation or those of the Center for Spacetime and the Quantum. This work contributes to the European Union COST Action CA23130 \textit{Bridging high and low energies in search of quantum gravity} and to the European Union COST Action CA23115 \textit{Relativistic Quantum Information}.

\section{Author Contributions}
LV developed the main results of this work as part of his master’s thesis, supervised by VK. The thesis  grew from an earlier project by LV, supervised by FG and VK, some of whose results are presented in Sec.~V; this earlier work provided a starting point for the present project. FG, VK, and LV discussed the general direction of the work, whose development was led by VK together with LV, with FG advising, and CF taking part in some discussions during LV’s thesis. Some aspects of the algebraic approach build on ideas CF had developed in a related work, including a semester project she supervised. CF and FG contributed complementary expertise, on quantum information measures and on QRFs respectively. All authors contributed to the development and interpretation of the results and reviewed and revised the manuscript.

\bibliography{bibliography}
\bibliographystyle{apsrev4-2.bst}

\newpage
\singlespacing

\appendix
\addappheadtotoc
\addtocontents{toc}{\protect\setcounter{tocdepth}{1}}

\crefalias{section}{appendix}
\crefalias{subsection}{appendix}

\section{Subsystem Relativity}
\label{app:subsystem_relativity}

Here, we give the formal justification for the intuitive explanation of subsystem relativity provided in~\cref{sec:algebraic_subsystems} within the framework of the perspective-neutral approach, following \cite{Ahmad_2021,Hoehn_2021, Hoehn_2023}, and highlighting differences to the extra-particle approach throughout. In both cases, the starting point is the so-called \emph{kinematical Hilbert space} associated to an external point of view. In the perspective-neutral approach, the external frame is a mathematical tool that is useful to recover the internal perspectives; in the extra-particle approach, however, its role is more nuanced, and it can be either a physical frame in which one can make operations, or an abstract mathematical structure.  When considered from an outside perspective, we can distinguish all orientations perfectly and describe the qubits in terms of the Hilbert space $\cH_{\textup{kin}}= \cH_A\otimes \cH_B\otimes \cH_C$
and $\cH_X = \mathbb{C}^2$ for $X\in \{A,B,C\}$. At this level, we can associate to each of the subsystems an algebra of observables that an external observer could perform on said subsystem,
\begin{equation}
\begin{aligned}
    \algkin{A} &= \langle X_A\otimes\mathbb{I}_{BC}, Z_A\otimes \mathbb{I}_{BC}\rangle,\\
    \algkin{B} &= \langle \mathbb{I}_A\otimes X_B\otimes\mathbb{I}_{C}, \mathbb{I}_A\otimes Z_B\otimes \mathbb{I}_{C}\rangle,\\
    \algkin{C} &= \langle \mathbb{I}_{AB}\otimes X_C, \mathbb{I}_{AB}\otimes Z_C\rangle.
\end{aligned}
\end{equation}
In the presence of symmetries, we no longer have access to all of these observables. The precise algebra of accessible observables depends on the particular approach to QRFs. In the quantum-information approaches \cite{bartlett2007reference, Castro_Ruiz_2021}, one considers all invariant observables. The algebra of these observables is obtained from the total algebra of observables through the so-called \emph{incoherent G-twirl}:
\begin{equation}
    \mathcal{G}_{\textup{inc}}\coloneqq\frac{1}{|G|} \sum_{g\in G}U_{ABC}\left(g\right)\cdot U_{ABC}^\dagger\left(g\right),
\end{equation}
for a general finite symmetry group $G$. In the perspective-neutral approach, one requires also the states to be invariant under the action of the symmetry group. When considering gauge symmetries, this follows from the requirement that the constraint generating the gauge symmetry vanishes. This restriction is implemented through the \emph{coherent G-twirl}
\begin{equation}
    \mathcal{G}_{\textup{coh}}\coloneqq \frac{1}{|G|^2}\sum_{g,g'\in G} U_{ABC}\left(g\right) \cdot U_{ABC}^\dagger\left(g'\right).
\end{equation}
In the simple example of $\mathbb{Z}_2$, the difference between these two approaches lies in how they deal with the total charge $ X_A\otimes X_B\otimes X_C$ \cite{Castro-Ruiz_2025}. While this is part of the invariant observables of the extra-particle approach, it is constrained when acting on states within the \emph{physical} Hilbert space obtained via the coherent G-twirl. The corresponding physical algebra of observables, used in the perspective-neutral approach, is
\begin{equation}
    \mathcal{G}_{\textup{coh}}\left(\algkin{A}\right)=\Pi_{\textup{phys}}\left(\cB\left(\cH_A\otimes \cH_B\otimes \cH_C\right)\right),
\end{equation}
where $\Pi_{\textup{phys}}$ projects onto the phyisical Hilbert space $\cH_{\textup{phys}}=\Pi_{\textup{phys}}\left(\cH_{\textup{kin}}\right)$.
In our example, the algebra of physical observables contains the relative orientations $Z_A\otimes Z_B$, $Z_B\otimes Z_C$, $Z_A\otimes Z_C$ of the qubits as well as their ``momenta'' $X_A, X_B, X_C$, constrained by the condition $X_A\otimes X_B\otimes X_C\ket{\psi}_{\textup{phys}} = \ket{\psi}_{\textup{phys}}$ on states $\ket{\psi}_{\textup{phys}}\in\Pi_{\textup{phys}}\left(\cH_{\textup{kin}}\right)$. Within this algebra, we can identify subalgebras of observables associated to different QRFs by applying the \emph{trivialisation map}\cite{Hoehn_2021, Castro-Ruiz_2025}
\begin{equation}
    \mathcal{T}_{A,BC} \coloneqq \ketbra{0}{0}_A\otimes \id_{BC} + \ketbra{1}{1}\otimes X_B\otimes X_C.
\end{equation}
This ensures that $\mathcal{T}_{A,BC}\left(X_A\otimes X_B\otimes X_C\right)= X_A\otimes \id_{BC}$, \enquote{trivialising} the constraint to only act on the chosen QRF. Moreover, the relative orientations become $\mathcal{T}_{A,BC}\left(Z_A\otimes Z_B\right) = Z_B$ and $\mathcal{T}_{A,BC}\left(Z_A\otimes Z_C\right) = Z_C$ while the ``momenta'' $X_B$ and $X_C$ are left invariant. Discarding the degree of freedom associated to the constraint, which is fixed to a single classical value, we obtain an algebra generated by $X_B\otimes \id_C,Z_B\otimes \id_C,\id_B\otimes X_C,\id_B\otimes Z_C$. Pulling back the algebras $\algpov{B}{A}\coloneqq \langle X_B\otimes \id_C, Z_B\otimes \id_C\rangle$ and $\algpov{C}{A}\coloneqq \langle \id_B\otimes X_C,\id_B\otimes Z_C\rangle $ into the kinematical Hilbert space via the trivialisation map (which, in the case of $\mathbb{Z}_2$, is its own inverse), one obtains
\begin{equation}
    \algpovemb{B}{A} = \langle X_B\otimes \id_C, Z_A\otimes Z_B\rangle, \;\; \algpovemb{C}{A} = \langle \id_B\otimes X_C, Z_A\otimes Z_C\rangle,
\end{equation}
Analogous reasoning using the trivialisation map on $\mathcal{T}_{C,AB} \coloneqq \id_{AB}\otimes \ketbra{0}{0}_C+X_A\otimes X_B\otimes \ketbra{1}{1}_C$ yields  
\begin{equation}
    \algpovemb{A}{C} = \langle X_A\otimes \id_B, Z_A\otimes Z_C\rangle, \;\; \algpovemb{B}{C} = \langle \id_A\otimes X_B, Z_B\otimes Z_C\rangle.
\end{equation}
Since the perspectival approach \cite{Giacomini_2017_covariance, delaHamette_2020} is formally equivalent to the perspective-neutral approach for ideal QRFs, these are also the algebras associated with the perspectives of this approach. Let us note, however, that the derivation differs. Instead of deriving the subsystem algebras relative to different QRFs from the perspective-neutral structure, one starts from within a given perspective and determines how the subsystem observables transform under a change of QRF \cite{Giacomini_2017_covariance}.

\section{Algebraic Subsystems for $\mathbb{Z}_2$-QRFs}
\label{app:algebras_Z2}

In this section we prove several claims relative to~\cref{sec:algebraic_structure_QRFs_Z2}.

\subsection{Pull-back of pinching maps}
\label{app:proof_pullback_mapsZ2}

In~\cref{eq:P_XX}, we claim:
\begin{equation}
    \mappb{\mathcal{P}}_{\pov{X_C}{A}}= \frac{1}{2}\left[\pov{\id_{AB}}{C}\left(\cdot\right)\pov{\id_{AB}}{C} + \left(\pov{X_A}{C}\otimes \pov{X_B}{C}\right) \left(\cdot\right) \left(\pov{X_A}{C}\otimes \pov{X_B}{C}\right)\right] =: \mathcal{P}_{\pov{X_A}{C}\otimes \pov{X_B}{C}}.
\end{equation}
This follows immediately from the following lemma.
\begin{lemma}
\label{lem:pullback_pinching}
    Consider an observable in Alice's perspective $O\in\cB\left(\pov{\cH_{BC}}{A}\right)$ with spectral decomposition $O= \sum_i \lambda_i P_i$, and the pinching map $\mathcal{P}_O$ (see~\cref{ex:pinched_subsystem}). Then, the pull-back into Charlie's perspective satisfies
    \begin{equation}
        \mappb{\mathcal{P}}_{O} = \mathcal{P}_{\left(S_{A\to C}OS_{A\to C}^\dagger\right)},\label{eq:pinch_trafo}
    \end{equation}
    i.e., the pinching in the eigenbasis of the operator $S_{A\to C}OS_{A\to C}^\dagger$.
\end{lemma}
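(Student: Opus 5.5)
Write $S \equiv S_{A\to C}$ for brevity; it is a unitary. The plan is to unfold the definition of the pull-back (\cref{def:pullback_map}) and insert $S^\dagger S=\id$ between the factors of each term of the pinching map. Starting from $\mathcal{P}_O(\sigma)=\sum_i P_i\sigma P_i$ (\cref{ex:pinched_subsystem}), for any $\rho\in\cB(\pov{\cH_{AB}}{C})$ we get
\begin{equation*}
    \mappb{\mathcal{P}}_O(\rho) = S\Big(\sum_i P_i\, S^\dagger \rho S\, P_i\Big)S^\dagger = \sum_i \big(S P_i S^\dagger\big)\,\rho\,\big(S P_i S^\dagger\big),
\end{equation*}
which holds by linearity of conjugation. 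The pull-back of a pinching is therefore again a "sandwich" map, built from the conjugated operators $Q_i\coloneqq S P_i S^\dagger$.

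The second step is to identify the $Q_i$ as the spectral projectors of $O'\coloneqq S O S^\dagger$. Unitary conjugation preserves the relations $Q_i^\dagger=Q_i$, $Q_iQ_j=\delta_{ij}Q_i$ and $\sum_i Q_i=\id$. Moreover, $O'=\sum_i\lambda_i Q_i$ with the same distinct eigenvalues $\lambda_i$. By uniqueness of the spectral decomposition of a normal operator (the map from distinct eigenvalues to projectors is bijective), the $Q_i$ are exactly the eigenprojectors of $O'$. Hence the sandwich map above coincides with $\mathcal{P}_{O'}$, which proves \cref{eq:pinch_trafo}. If $O$ is only assumed to have a spectral decomposition with possibly repeated labels, I would first merge projectors with equal eigenvalues, so that the $\lambda_i$ are distinct.

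The only subtle point is the uniqueness of the spectral projectors. Without it, one might worry that "pinching in the eigenbasis" of $O'$ depends on a choice of basis inside degenerate eigenspaces. Since the pinching is defined through eigenspace projectors rather than through a basis, this worry disappears. As a sanity check for \cref{eq:P_XX}, take $O=\pov{X_C}{A}$ with $P_\pm=\tfrac12(\id\pm X_C)$. We have $S X_C S^\dagger = \pov{X_A}{C}\otimes\pov{X_B}{C}$ (as in the pull-back of $\algpov{C}{A}$). The pinching in the two eigenspaces of $X_A\otimes X_B$ then gives $\tfrac12[\rho+(X_A\otimes X_B)\rho(X_A\otimes X_B)]$, as claimed.
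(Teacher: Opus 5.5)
Your proof is correct and is essentially the paper's argument: you unfold the definition of the pull-back to get $\sum_i (S_{A\to C}P_iS_{A\to C}^\dagger)\,\rho\,(S_{A\to C}P_iS_{A\to C}^\dagger)$, then observe that the conjugated projectors are orthogonal, sum to the identity, and give the spectral decomposition of $S_{A\to C}OS_{A\to C}^\dagger$. Your explicit appeal to uniqueness of the spectral decomposition (with distinct $\lambda_i$) makes precise a step the paper states without comment.
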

\begin{proof}
Since $S_{A\to C}$ is unitary, the operator $S_{A\to C}OS_{A\to C}^\dagger$ has the same spectrum as $O$. Indeed,
    \begin{equation}
        \begin{split}
            S_{A\to C}OS_{A\to C}^\dagger &= \sum_i \lambda_i S_{A\to C}P_iS_{A\to C}^\dagger = \sum_i \lambda_i \tilde{P}_i,\\
        \end{split}
    \end{equation}
    where $\tilde{P}_i$ are orthogonal projectors summing to the identity. Indeed,
    \begin{equation}
        \begin{split}
            \tilde{P}_i\tilde{P}_j = S_{A\to C}P_iS_{A\to C}^\dagger S_{A\to C}P_jS_{A\to C}^\dagger =S_{A\to C}P_iP_jS_{A\to C}^\dagger =\delta_{i,j}S_{A\to C}P_iS_{A\to C}^\dagger=\delta_{i,j} \tilde{P}_i
        \end{split}
    \end{equation}
    and
    \begin{equation}
            \sum_i \tilde{P}_i= S_{A\to C}\sum_iP_iS_{A\to C}^\dagger = S_{A\to C}S_{A\to C}^\dagger = \id_A.
    \end{equation}
    Thus, $\sum_i \lambda_i \tilde{P}_i$ is the spectral decomposition of $S_{A\to C}OS_{A\to C}^\dagger$, and the corresponding pinching map is $\mathcal{P}_{(S_{A\to C}OS_{A\to C}^\dagger)}\left(\pov{\rho}{C}\right) = \sum_i \tilde{P}_i\pov{\rho}{C} \tilde{P}_i$. Evaluating the pull-back, we get:
    \begin{equation}
        \mappb{\mathcal{P}}_{O}\left(\pov{\rho}{C}\right) = \sum_i\underset{=:\tilde{P}_i}{\underbrace{\left(S_{A\to C}P_i S_{A\to C}^\dagger\right)}} \pov{\rho}{C}\underset{=:\tilde{P}_i}{\underbrace{\left(S_{A\to C}P_i S_{A\to C}^\dagger\right)}} = \sum_i\tilde{P}_i\rho^{\left(C\right)}\tilde{P}_i =\mathcal{P}_{(S_{A\to C}OS_{A\to C}^\dagger)}\left(\pov{\rho}{C}\right) .\label{eq:pinch_trafo_1}
    \end{equation}
\end{proof}

\Cref{eq:P_XX}, follows since 
\begin{equation}
    S_{A\to C}\left(\pov{\id_B}{A}\otimes \pov{X_C}{A}\right)S_{A\to C}^\dagger = \pov{\ketbra{0}{1}_A}{C}\otimes \pov{X_B}{C} + \pov{\ketbra{1}{0}_A}{C}\otimes \pov{X_B}{C}= \pov{X_A}{C}\otimes \pov{X_B}{C},
\end{equation}
thus
\begin{equation}
    \mappb{\mathcal{P}}_{\pov{X_C}{A}}= \frac{1}{2}\left[\pov{\id_{AB}}{C}\left(\cdot\right)\pov{\id_{AB}}{C} + \left(\pov{X_A}{C}\otimes \pov{X_B}{C}\right) \left(\cdot\right) \left(\pov{X_A}{C}\otimes \pov{X_B}{C}\right)\right]=\mathcal{P}_{\pov{X_A}{C}\otimes\pov{X_B}{C} }.
\end{equation}
Using that the orientation of the reference frame and the $\mathbb{Z}_2$-momentum are invariant under the QRF transformation, i.e.,~$S_{A\to C} \left(\id_B^{\left(C\right)}\otimes Z_C^{\left(A\right)}\right) S_{A\to C}^\dagger=Z_A^{\left(C\right)}\otimes \id_B^{\left(C\right)}$ and $S_{A\to C} \left(X_B^{\left(C\right)}\otimes \id_C^{\left(A\right)}\right) S_{A\to C}^\dagger=\id_A^{\left(C\right)}\otimes X_B^{\left(C\right)}$, we further have
\begin{equation}
    \mappb{\mathcal{P}}_{Z_C^{\left(A\right)}} = \mathcal{P}_{Z_A^{\left(C\right)}}, \qquad
    \mappb{\mathcal{P}}_{X_B^{\left(A\right)}} = \mathcal{P}_{X_B^{\left(C\right)}}.
\end{equation}

\subsection{Algebraic construction of the left branch of~\Cref{fig:encoded-decomposition}}
\label{app:conditional_exp_construction_proofs}

Consider the joint algebra of $\pov{A}{C}$ and $\pov{C}{A}$
\begin{equation}
    \algpovpb{C}{A} \vee \algpov{A}{C} =\langle  \pov{\id_A}{C}\otimes \pov{X_B}{C}, \pov{X_A}{C}\otimes \pov{\id_{B}}{C}, \pov{Z_A}{C}\otimes \pov{\id_{B}}{C}\rangle = \cB\left(\pov{\cH_{A}}{C}\right)\otimes \braket{\pov{X_B}{C}},
\end{equation}
where the last equality follows by noticing that the full algebra of operators acting on $\pov{A}{C}$ is included, but only the Pauli-X on $\pov{B}{C}$ is in $ \algpovpb{C}{A} \vee \algpov{A}{C}$. Thus, the conditional expectation from the full algebra $\cB\left(\pov{\cH_{AB}}{C}\right)$ to the algebra $\algpovpb{C}{A} \vee \algpov{A}{C}$ is the pinching map $\mathcal{P}_{\pov{X_B}{C}}$, i.e.,
\begin{equation}
    \cexparg{\pov{AB}{C}\to \pov{A}{C}\pov{C}{C}}\coloneqq \mathcal{P}_{\pov{X_B}{C}}.
\end{equation}
Likewise, we can find the conditional expectations onto the subalgebras
\begin{align}
    \algpov{A}{C} = \langle \pov{X_A}{C}\otimes \pov{\id_B}{C},\pov{Z_A}{C}\otimes \pov{\id_B}{C}\rangle, \qquad \algpovpb{C}{A} = \langle \pov{X_A}{C}\otimes \pov{X_B}{C}, \pov{Z_A}{C}\otimes \pov{\id_{B}}{C}\rangle.
\end{align}
starting from the joint algebra $\algpovpb{C}{A} \vee \algpov{A}{C}$. 
Pinching in $\pov{Z_B}{C}$ trivialises the term $\pov{\id_A}{C}\otimes \pov{X_B}{C}$, giving the algebra $\algpov{A}{C}$. Thus,
\begin{equation}
    \cexparg{{\pov{A}{C}\pov{C}{A}}\to \pov{A}{C}} =\mathcal{P}_{Z_B^{\left(C\right)}}.
\end{equation}
The conditional expectation to $\pov{C}{A}$'s algebra can be obtained through the symmetry between $A$ and $C$:
\begin{equation}
    \cexparg{{\pov{A}{C}\pov{C}{A}}\to \pov{C}{A}} = \mappb{\mathcal{P}}_{\pov{Z_B}{A}} = \mathcal{P}_{\pov{Z_A}{C} \otimes \pov{Z_B}{C}},
\end{equation}
where in the last line, we used~\cref{lem:pullback_pinching} and
\begin{equation}
    S_{A\to C}\left(Z_B^{\left(A\right)}\otimes \id_C^{\left(A\right)}\right)S_{A\to C}^\dagger = \ketbra{0}{0}_A^{\left(C\right)}\otimes Z_B^{\left(C\right)}+ \ketbra{1}{1}_A^{\left(C\right)}\otimes X_B^{\left(C\right)}Z_B^{\left(C\right)}X_B^{\left(C\right)}= Z_A^{\left(C\right)}\otimes Z_B^{\left(C\right)}.
\end{equation}

\subsection{Conditional expectation onto overlap}

Here, we show~\cref{eq:cond_exp_overlap_Z2_1}, i.e.,~that the conditional expectation onto the overlap $\algpovpb{B}{A}\wedge \algpov{B}{C}$ can be obtained by composing the conditional expectations onto the subalgebras involved. Let us first observe that for an Hermitian observable $O$ such that $O^2 =\id$ it holds $\mathcal{P}_O\left(\rho\right) = \frac{1}{2}\left(\rho + O \rho O\right).$\footnote{ Indeed, $O$ has eigenvalues $\pm 1$ and eigenspaces with projectors $P_{\pm}$, using that $P_++P_- = \id$ and $O=P_+ - P_-$ we have: $\rho+O\rho O = (P_+ +P_-) \rho (P_+ +P_-) + (P_+ -P_-)\rho (P_+ -P_-) = 2 (P_+\rho P_++P_-\rho P_-)$, thus $\frac{1}{2}(\rho+O\rho O) = \mathcal{P}_O(\rho)$.} Thus, we have
\begin{align*}
    \mathcal{P}_{\pov{X_A}{C}}\circ \mathcal{P}_{\pov{X_A}{C}\otimes \pov{X_B}{C}}\left(\rho\right)&= \small \frac{1}{4}\left(\rho + X_A^{\left(C\right)}\rho\, X_A^{\left(C\right)}+  X_B^{\left(C\right)}\rho\,  X_B^{\left(C\right)}+X_A^{\left(C\right)}\otimes X_B^{\left(C\right)}\rho \, X_A^{\left(C\right)}\otimes X_B^{\left(C\right)}\right)\\
    &= \mathcal{P}_{X_B^{\left(C\right)}}\left(\frac{1}{2}\left[\rho+X_A^{\left(C\right)}\rho \,X_A^{\left(C\right)}\right]\right)=  \mathcal{P}_{X_B^{\left(C\right)}}\circ\mathcal{P}_{X_A^{\left(C\right)}}\left(\rho\right).
\end{align*}
But if we start from $\rho\in\algpovpb{B}{A} \vee \algpov{B}{C}$, we have $\rho =\mathcal{P}_{Z_A^{\left(C\right)}}\left(\rho\right)$ and, by~\cref{eq:full_pinchingA}, $\mathcal{P}_{X_A^{\left(C\right)}}\left(\rho\right) = \mathcal{P}_{X_A^{\left(C\right)}}\circ \mathcal{P}_{Z_A^{\left(C\right)}}\left(\rho\right) = \tfrac{1}{2}\id_A^{\left(C\right)}\otimes \Tr_{A^{\left(C\right)}}\left(\rho\right)$. Hence,
\begin{align}
\mathcal{P}_{\pov{X_A}{C}}\circ \mathcal{P}_{\pov{X_A}{C}\otimes \pov{X_B}{C}}\left(\rho\right) = \tfrac{1}{2}\id_A^{\left(C\right)}\otimes \mathcal{P}_{X_B^{\left(C\right)}}\left(\Tr_{A^{\left(C\right)}}\rho\right) \text{ for } \rho \in \algpovpb{B}{A} \vee \algpov{B}{C}.
\end{align}
The same reasoning leads to
\begin{align*}
    \mathcal{P}_{\pov{X_A}{C}\otimes \pov{X_B}{C}}\circ \mathcal{P}_{\pov{X_A}{C}}\left(\rho\right)=\tfrac{1}{2}\id_A^{\left(C\right)}\otimes \mathcal{P}_{X_B^{\left(C\right)}}\left(\rho\right) \text{ for } \rho \in \algpovpb{B}{A} \vee \algpov{B}{C}.
\end{align*}
This also proves that the conditional expectations commute. The proof for~\cref{eq:cond_exp_overlap_Z2_2} proceeds completely analogously.

\section{Algebraic Subsystems for Finite-Dimensional Symmetry Groups}
\label{app:algebraicSubsystemFDG}

\subsection{Joint Algebra: $\pov{B}{A}$ and $\pov{B}{C}$}
\label{app:joint_algebras}
We first prove the statement that the joint algebra of $\pov{B}{A}$ and $\pov{B}{C}$ is obtained from the total algebra relative to $C$ by a pinching of $A$ in the group basis~(\cref{eq:joint_algebra_BABC}), that is
\begin{equation}
\algpovpb{B}{A} \vee \algpov{B}{C}=\Big \langle \pov{\id_A}{C}\otimes O_B , \sum_g \ketbra{g}{g}_A\otimes U_B\left(g\right)O_BU_B^\dagger\left(g\right) \Big\rangle_{O_B\in \cB\left(\pov{\cH_B}{C}\right)} = \mathcal{P}_{\pov{G_A}{C}}\left(\algpov{A}{C}\vee \algpov{B}{C}\right)
   \label{eqApp:pinchingG}
\end{equation}
where $\mathcal{P}_{\pov{G_A}{C}}\left(\cdot\right) \coloneqq \sum_{g\in G} \pov{\ketbra{g}{g}}{C}_A\otimes \langle g| \cdot \pov{|g \rangle}{C}_A$. In what follows, we sometimes omit the superscript $(C)$ denoting that we are working in the perspective of $C$, unless ambiguous. 

Let us first define 
\begin{equation}
    \mathcal{D} = \left\{ \sum_g \ketbra{g}{g}_A\otimes O_g \right\}_{O_g\in\cB(\cH_B)},
\end{equation}
we will prove that both algebras of interest, $\algpovpb{B}{A} \vee \algpov{B}{C}$ and $\mathcal{P}_{\pov{G_A}{C}}\left(\algpov{A}{C}\vee \algpov{B}{C}\right)$ are equal to $\mathcal{D}$.
One can easily see that $\mathcal{D}$ itself is a unital $*$-subalgebra of $\algpov{A}{C}\vee \algpov{B}{C}$. Indeed, it contains the identity (take $O_g=\id_B$) and it is closed under products, linear combinations and taking the adjoint. In addition, it is equal to the algebra
\begin{equation}
\label{eq:tmp_algebra}
    \Big\langle \ketbra{g}{g}_{A}\otimes \id_B, \id_A \otimes O_B\Big\rangle_{g\in G, O_B\in \cB(\cH_{B})}.
\end{equation}
On one hand, the generators in~\cref{eq:tmp_algebra} are in $\mathcal{D}$ (take $O_g = \delta_{g,g'}\id_B$ and $O_g = O$). Since we have proven that $\mathcal{D}$ is an algebra, and it contains $\ketbra{g}{g}_A\otimes \id_B$ and $\id_A\otimes O_B$ for all $g\in G$ and $O\in\cB(\cH_B)$, then it also contains the algebra generated by them, i.e.,~\cref{eq:tmp_algebra}. Conversely, one can see that the elements of $\mathcal{D}$ can be written as products of generators~\cref{eq:tmp_algebra}: $\sum_g \ketbra{g}{g}_A\otimes O_g = \sum_g (\ketbra{g}{g}_A\otimes\id_B) (\id_A\otimes O_g)$, thus $\mathcal{D}$ is contained in~\cref{eq:tmp_algebra}, implying that
\begin{equation}
    \mathcal{D} = \left\{ \sum_g \ketbra{g}{g}_A\otimes O_g \right\}_{O_g\in\cB(\cH_B)}= \Big\langle \ketbra{g}{g}_{A}\otimes \id_B, \id_A \otimes O_B\Big\rangle_{g\in G, O_B\in \cB(\cH_{B})}.
\end{equation}

We now show that $\mathcal{D}=\mathcal{P}_{\pov{G_A}{C}}\left(\algpov{A}{C}\vee \algpov{B}{C}\right)$. Let $O_{AB}\in\algpov{A}{C}\vee \algpov{B}{C}$, we have $\mathcal{P}_{\pov{G_A}{C}}\left(O_{AB}\right) = \sum_g \ketbra{g}{g}_A\otimes \bra{g}O_{AB}\ket{g}_A = \sum_g \ketbra{g}{g}_A\otimes O(g)_{AB}\in \mathcal{D}$. Thus, $\mathcal{P}_{\pov{G_A}{C}}\left(\algpov{A}{C}\vee \algpov{B}{C}\right)\subseteq \mathcal{D}$. Conversely, $\mathcal{P}_{\pov{G_A}{C}}(\mathcal{D})=\mathcal{D}$, thus since $\mathcal{D}$ is a subalgebra of $\algpov{A}{C}\vee \algpov{B}{C}$. We have $\mathcal{D}=\mathcal{P}_{\pov{G_A}{C}}(\mathcal{D}) \subseteq \mathcal{P}_{\pov{G_A}{C}}(\algpov{A}{C}\vee \algpov{B}{C})$.
Thus, proving equality between the two algebras.

Let us now prove that $\mathcal{D}=\algpovpb{B}{A} \vee \algpov{B}{C}$. Clearly, $\algpovpb{B}{A} \vee \algpov{B}{C} \in \mathcal{D}$, since both generators are in $\mathcal{D}$ (take $O_g = U_B(g)OU_B^\dagger(g)$). For the other direction, we need to show that $\ketbra{g}{g}_A\otimes \id_B \in \algpovpb{B}{A} \vee \algpov{B}{C}$ for all $g\in G$, since the other generator of $\mathcal{D}$ is already in $\algpovpb{B}{A} \vee \algpov{B}{C}$. First consider the seed state $\ket{e}_B$ with $e$ being the identity in $G$, satisfying $U_B(g)\ket{e}_B=\ket{g}_B$, we have:
\begin{equation}
    \sum_g \ketbra{g}{g}_A \otimes U_B(g)\ketbra{e}{e}_BU^\dagger_B(g) =  \sum_g \ketbra{g}{g}_A \otimes \ketbra{g}{g}_B \in \algpovpb{B}{A} \vee \algpov{B}{C},
\end{equation}
then for every $g,k\in G$, recalling that algebras are closed under products, we have:
\begin{equation}
    (\id_A\otimes \ketbra{k}{g}_B) (\sum_{g'} \ketbra{g'}{g'}_A \otimes \ketbra{g'}{g'}_B)(\id_A\otimes \ketbra{g}{k}_B) = \ketbra{g}{g}_A \otimes \ketbra{k}{k}_B \in \algpovpb{B}{A} \vee \algpov{B}{C}. 
\end{equation}
Thus, summing over $k$ and using that algebras are closed under linear combinations, we have $\sum_k \ketbra{g}{g}_A \otimes \ketbra{k}{k}_B = \ketbra{g}{g}_A \otimes \id_B$ for all $g\in G$. Thus, $\mathcal{D} \subseteq \algpovpb{B}{A} \vee \algpov{B}{C}$, implying $\mathcal{D} = \algpovpb{B}{A} \vee \algpov{B}{C}$. 

By transitivity, $\algpovpb{B}{A} \vee \algpov{B}{C} = \mathcal{D} = \mathcal{P}_{\pov{G_A}{C}}\left(\algpov{A}{C}\vee \algpov{B}{C}\right)$.

Finally, we want to convince ourselves that the joint algebra does not depend on whether it is constructed from Alice's or Charlie's point of view, that is $\mappb{\mathcal{P}}_{\pov{G_A}{C}}=\mathcal{P}_{\pov{G_C}{A}}$.

To see this, note first that
\begin{align}
    \mappb{\mathcal{P}}_{\pov{G_C}{A}}\left(\pov{\rho_{AB}}{C}\right) &= S_{A\to C} \left[\sum_g \left(\pov{\id_B}{A}\otimes \ketbra{g}{g}_C\right)S_{A\to C}^\dagger \pov{\rho_{AB}}{C}S_{A\to C}\left(\pov{\id_B}{A}\otimes \ketbra{g}{g}_C\right)\right]S_{A\to C}^\dagger.
\end{align}
For each $g\in G$, we have
\begin{align}
   S_{A\to C} &\left(\pov{\id_B}{A}\otimes \ketbra{g}{g}_C\right)S_{A\to C}^\dagger\\
   &=\left(\sum_h \ket{h^{-1}}_A\bra{h}_C\otimes  U_B^\dagger(h)\right)\left(\pov{\id_B}{A}\otimes \ketbra{g}{g}_C\right) \left(\sum_{h'} \ket{h'}_C\bra{h^{-1}}_A\otimes  U_B(h')\right)\\
   &= \ketbra{g^{-1}}{g^{-1}}_A\otimes \id_B.
\end{align}
Renaming $g\to g^{-1}$, we obtain
\begin{align}
    \mappb{\mathcal{P}}_{\pov{G_C}{A}}\left(\pov{\rho_{AB}}{C}\right)= \sum_g\left(\ketbra{g}{g}_A\otimes \id_B\right)\pov{\rho_{AB}}{C}\left(\ketbra{g}{g}_A\otimes \id_B\right)= \mathcal{P}_{\pov{G_A}{C}}\left(\pov{\rho_{AB}}{C}\right).
\end{align}

\subsection{Joint algebra: $\pov{C}{A}$ and $\pov{A}{C}$}
\label{app:joint_alg_AC}
We now want to show that the joint algebra of $\pov{A}{C}$ and $\pov{C}{A}$ is obtained from the total algebra relative to $C$ by the right incoherent G-twirl of $B$, that is
\begin{equation}
    \algpov{A}{C} \vee \algpovpb{C}{A} \coloneqq \Big\langle O \otimes \pov{\id_{B}}{C}, \sum_{g,g'}\ketbra{g}{g'}\langle g^{-1}|O|g'^{-1}\rangle \otimes U_B\left(gg'^{-1}\right) \Big\rangle_{O\in \cB\left(\pov{\cH_A}{C}\right)} = \mathcal{G}_{\inc_R,B}\left(\algpov{A}{C}\vee \algpov{B}{C}\right),\label{eqApp:incGtwirl}
\end{equation}
where $\mathcal{G}_{\inc_R,B} = \frac{1}{|G|}\sum_g (\id_A\otimes R\left(g\right))\cdot (\id_A\otimes R^\dagger\left(g\right))$ with $R\left(h\right)\ket{g} = \ket{gh^{-1}}$ the right-regular representation of $G$. In what follows, we omit the superscript $(C)$ denoting that we are working in the perspective of $C$. 

Let us define
\begin{equation}
    \mathcal{E} =  \left\{\sum_k O_k \otimes U_B(k): O_k\in\cB(\pov{\cH_A}{C})\right\}.
\end{equation}
This is a unital $*$-subalgebra of $\algpov{A}{C}\wedge \algpov{B}{C}$. Indeed, it is easy to see that it contains the identity ($U_B(e)=\id_B$) and it is closed under products and adjoint ($U^\dagger_B(k)=U_B(k^{-1})$). Considering the algebra
\begin{equation}
\label{eq:tmp_algebra2}
    \Big\langle O_A\otimes \id_B, \id_A \otimes U_B(k)\Big\rangle_{k\in G, O_A\in \cB(\cH_{A})},
\end{equation}
we notice that the generators of~\cref{eq:tmp_algebra2} are in $\mathcal{E}$ (take, $U_B(e)=\id_B$ and $O_k=\id_A$), thus, since $\mathcal{E}$ is an algebra,~\cref{eq:tmp_algebra2} is contained in $\mathcal{E}$. Conversely $\sum_k O_k\otimes U_B(k)=\sum_k (O_k\otimes\id_B)(\id_A\otimes U_B(k))$ is a linear combination of products of generators in~\cref{eq:tmp_algebra2}. Thus, $\mathcal{E}$ is contained in~\cref{eq:tmp_algebra2}, implying the equality between the two.

Consider any $O_{AB}\in\algpov{A}{C}\vee \algpov{B}{C}$, we have $O_{AB}=\sum_{h,k} O^{(h,k)}_A\otimes \ketbra{h}{k}_B$
\begin{equation}
\begin{split}
     \mathcal{G}_{\inc_R,B}(O_{AB}) &= \frac{1}{|G|}\sum_g \sum_{h,k} (\id_A\otimes R\left(g\right))(O^{(h,k)}_A\otimes \ketbra{h}{k}_B) (\id_A\otimes R^\dagger\left(g\right))\\
     &=\frac{1}{|G|}\sum_{g,h,k} O^{(h,k)}_A \otimes R\left(g\right)\ketbra{h}{k}R^\dagger\left(g\right)\\
     &=\frac{1}{|G|}\sum_{g,h,k} O^{(h,k)}_A \otimes \ketbra{hg^{-1}}{kg^{-1}}\\
     &=\frac{1}{|G|}\sum_{h,k} O^{(h,k)}_A \otimes U_B(h)\sum_{g}\ketbra{g^{-1}}{g^{-1}}U^\dagger_B(k) \\
     &= \frac{1}{|G|}\sum_{h,k} O^{(h,k)}_A \otimes U_B(h)U^\dagger_B(k) \\
     &= \frac{1}{|G|}\sum_{h,k} O^{(h,k)}_A \otimes U_B(hk^{-1})= \frac{1}{|G|}\sum_{h'} O_{h'} \otimes U_B(h') \in \mathcal{E}
\end{split}
\end{equation}
where we defined $\sum_k O^{(h'k,k)}_A = O_{h'}\in\cB(\pov{\cH_A}{C})$.
Conversely, 
\begin{equation}
\begin{split}
    \mathcal{G}_{\inc_R,B}\left(\sum_k O_k \otimes U_B(k)\right) &=  \frac{1}{|G|}\sum_g  (\id_A\otimes R\left(g\right))\left(\sum_k O_k \otimes U_B(k)\right) (\id_A\otimes R^\dagger\left(g\right))\\
    &=\frac{1}{|G|}\sum_k O_k \otimes \sum_g  R\left(g\right)U_B(k) R^\dagger\left(g\right)\\
    &=\frac{1}{|G|}\sum_k O_k \otimes \sum_{g,h}  R\left(g\right)\ketbra{kh}{h}R^\dagger\left(g\right)\\
    &=\frac{1}{|G|}\sum_k O_k \otimes \sum_{g,h}  \ketbra{khg^{-1}}{hg^{-1}}\\
    &=\frac{1}{|G|}\sum_k O_k \otimes \sum_{g,h}  U_B(kh)\ketbra{g^{-1}}{g^{-1}}U_B^\dagger(h) \\
    &= \frac{1}{|G|}\sum_k O_k \otimes \sum_{h}  U_B(kh)U_B^\dagger(h) = \sum_k O_k \otimes U_B(k).
\end{split}
\end{equation}
Since $\mathcal{E}$ is a subalgebra of $\algpov{A}{C}\vee \algpov{B}{C}$ and $ \mathcal{G}_{\inc_R,B}\left(\mathcal{E}\right) = \mathcal{E}$, we have $\mathcal{E}\subseteq \mathcal{G}_{\inc_R,B}\left(\algpov{A}{C}\vee \algpov{B}{C}\right)$. Implying $\mathcal{E}= \mathcal{G}_{\inc_R,B}\left(\algpov{A}{C}\vee \algpov{B}{C}\right)$.

To see that $\algpov{A}{C} \vee \algpovpb{C}{A}\subseteq\mathcal{E}$, we can simply regroup $gg^{-1}=k$ in the second generator of $\algpov{A}{C}\vee\algpovpb{C}{A} $ and find
\begin{equation}
    \sum_{g,g'}\ketbra{g}{g'}\langle g^{-1}|O|g'^{-1}\rangle \otimes U_B\left(gg'^{-1}\right) = \sum_k \left(\sum_{g'}\ketbra{kg'}{g'}\langle g^{'-1}k^{-1}|O|g'^{-1}\rangle \right) \otimes U_B\left(k\right) = \sum_k O_k \otimes U_B\left(k\right)
\end{equation}
which is in the algebra $\mathcal{E}$, thus also contains all elements generated by it.
Conversely, to show that $\mathcal{E}\subseteq\algpov{A}{C}\vee\algpovpb{C}{A} $, we simply need to show that the generator $\id_A\otimes U_B(k)$ is in $\algpov{A}{C}\vee\algpovpb{C}{A} $ for all $k\in G$, since the other generator is included by definition. Let us first take $O = \ketbra{a}{b}$:
\begin{equation}
    \sum_{g,g'}\ketbra{g}{g'}\langle g^{-1}\ketbra{a}{b}g'^{-1}\rangle \otimes U_B\left(gg'^{-1}\right) = \ketbra{a^{-1}}{b^{-1}} \otimes U_B(a^{-1}b)\in \algpov{A}{C}\vee \algpovpb{C}{A},
\end{equation}
renaming $a=k^{-1}$ and choosing $b=e$ (the identity on $G$), and multiplying by two other elements of $\algpov{A}{C} \vee \algpovpb{C}{A} $, we have:
\begin{equation}
    (\ketbra{m}{k}_A\otimes \id_B) \left(\ketbra{k}{e^{-1}} \otimes U_B(ke)\right)(\ketbra{e}{m}_A\otimes \id_B) = \ketbra{m}{m}\otimes U_B(k) \in \algpov{A}{C} \vee \algpovpb{C}{A} 
\end{equation}
thus summing over $m\in G$ we obtain $\id_A\otimes U_B(k)\in  \algpov{A}{C}\vee\algpovpb{C}{A}$. We can conclude $\algpov{A}{C}\vee \algpovpb{C}{A}  = \mathcal{E}$

By transitivity, $ \algpov{A}{C}\vee \algpovpb{C}{A}=\mathcal{E} = \mathcal{G}_{\inc_R,B}\left(\algpov{A}{C}\vee \algpov{B}{C}\right)$.

To see that this is independent of whether we construct the algebra from Alice's or Charlie's point of view, note that
\begin{equation}
    \mappb{\mathcal{G}}_{\inc_R,B^{\left(A\right)}}\left(\rho\right) = \frac{1}{|G|}\left(S_{A\to C}R_{B}\left(g\right) S_{A\to C}^\dagger \right)\rho \left(S_{A\to C}R_{B}\left(g\right) S_{A\to C}^\dagger \right)^\dagger
\end{equation}
with
\begin{align}
    S_{A\to C}\left(R_{B}\otimes \id_C\right) S_{A\to C}^\dagger &= \sum_{h,h'\in G}|h^{-1}\rangle\langle h'^{-1}|_A\underset{\delta_{hh'}}{\underbrace{\langle h|h'\rangle}}\otimes U_B^\dagger\left(h\right) R_B\left(g\right) U_B\left(h'\right)\\ &= \sum_{h\in G}|h^{-1}\rangle \langle h^{-1}|_A\otimes R_B\left(g\right) =\id_A\otimes R_B\left(g\right).
\end{align}
Thus, leaving the identity implicit again,
\begin{equation}
\mappb{\mathcal{G}}_{\inc_R,\pov{B}{A}}(\rho)=\frac{1}{|G|}\sum_{g\in G}R_{B}(g) \rho R_{B}(g) = \mathcal{G}_{\inc_R,\pov{B}{C}}(\rho).
\end{equation}

\subsection{Conditional expectation onto the overlap of algebras}
\label{app:overlap_algebras}
Let us start proving that 
\begin{align}
    \cexparg{\pov{B}{A} \wedge \pov{B}{C}}(O) \coloneqq \cexparg{\pov{B}{A}}^{\triangleleft}\circ \cexparg{\pov{B}{C}}(O) = \cexparg{\pov{B}{C}}\circ \cexparg{\pov{B}{A}}^{\triangleleft} (O)= \frac{1}{|G|}\pov{\id_A}{C}\otimes \mathcal{G}_{\inc,\pov{B}{C}}(\tr_A(O)).
\end{align}
To do so, we us first establish two features of QRF transformations. We have
    \begin{align}
        \Tr_C\left(\mathcal{S}_{C\to A}\left(\tfrac{1}{|G|}\pov{\id_A}{C}\otimes \pov{\rho_B}{C}\right)\right) &= \Tr_C\left(\frac{1}{|G|}\sum_{g}\ketbra{g^{-1}}{g^{-1}}_C \otimes U_B^\dagger\left(g\right)\pov{\rho_B}{C}U_B\left(g\right) \right)  \\
        &=  \frac{1}{|G|}\sum_{g}U_B\left(g\right)\pov{\rho_B}{C}U_B^\dagger\left(g\right)=
        \mathcal{G}_{\inc,\pov{B}{C}}\left(\pov{\rho_B}{C}\right)\label{eq:G_inc_trace}
    \end{align}
and the same for $A\leftrightarrow C$. Then we also have
\begin{equation}
\begin{split}
    \mathcal{G}_{\inc,B}\left(\pov{\rho_B}{A}\right) &= \mathcal{G}_{\inc, B}\left(\Tr_C\left(S_{C\to A }\pov{\rho_{AB}}{C}S_{C\to A}^\dagger\right)\right)\\
    &= \mathcal{G}_{\inc,B}\left(\sum_g \left(\bra{g}_A\otimes U_B^\dagger\left(g\right)\right)\pov{\rho_{AB}}{C}\left(\ket{g}_A\otimes U_B\left(g\right)\right)\right)\\
    &= \sum_g \bra{g}_A\left(\mathcal{G}_{\inc,B}\left(U_B^\dagger\left(g\right) \pov{\rho_{AB}}{C}U_B\left(g\right)\right)\right)\ket{g}_A\\
    &= \frac{1}{|G|}\sum_{g,g'} \bra{g}_A\left(U_B(g')U_B^\dagger\left(g\right) \pov{\rho_{AB}}{C}U_B\left(g\right)U^\dagger_B(g')\right)\ket{g}_A\\
    &= \frac{1}{|G|}\sum_{g,h} \bra{g}_A\left(U_B(h)\pov{\rho_{AB}}{C}U^\dagger_B(h)\right)\ket{g}_A\\
    &= \mathcal{G}_{\inc,B}\left(\sum_g \langle g|\pov{\rho_{AB}}{C}|g\rangle \right) = \mathcal{G}_{\inc,B}\left(\pov{\rho_B}{C}\right).\label{eq:G_inc_invariant}
\end{split}
\end{equation}
Using these two properties as well as $\cexparg{\pov{B}{C}} \coloneqq \tfrac{1}{|G|}\pov{\id_A}{C}\otimes \Tr_A$ and $\cexparg{\pov{B}{A}} \coloneqq \Tr_C \otimes \tfrac{1}{|G|}\id_C^{\left(A\right)}$, we have
\begin{align}
    \cexparg{\pov{B}{A}}^{\triangleleft}\left(\cexparg{\pov{B}{C}}\left(\rho\right)\right) = &S_{A\to C}\left(\frac{1}{|G|}\pov{\id_C}{A}\otimes \Tr_C\left(S_{C\to A}\left(\frac{1}{|G|}\pov{\id_A}{C}\otimes \pov{\rho_B}{C}\right)S_{C\to A}^\dagger\right)\right)S_{A\to C}^\dagger\\
    \overset{\labelcref{eq:G_inc_trace}}{=} &S_{A\to C}\left(\frac{1}{|G|}\pov{\id_C}{A}\otimes \mathcal{G}_{\inc,B}\left(\pov{\rho_B}{C}\right)\right) S_{A\to C}^\dagger\\ = &\frac{1}{|G|}\sum_g \ketbra{g}{g}_A\otimes U_B\left(g\right) \mathcal{G}_{\inc,B}\left(\pov{\rho_B}{C}\right)U_B^\dagger\left(g\right) = \frac{1}{|G|}\pov{\id_A}{C}\otimes\mathcal{G}_{\inc,B}\left(\pov{\rho_B}{C}\right).
\end{align}
Similarly, we have
\begin{align}
    \cexparg{\pov{B}{C}}\left(\cexparg{\pov{B}{A}}^{\triangleleft}\left(\rho\right)\right) = &\frac{1}{|G|}\pov{\id_A}{C}\otimes \Tr_A \left(S_{A\to C} \left(\frac{1}{|G|}\pov{\id_C}{A}\otimes \Tr_C\left(\pov{\rho_{BC}}{A}\right)\right)S_{A\to C}^\dagger\right)\\
    \overset{\labelcref{eq:G_inc_trace}}{=} &\frac{1}{|G|}\pov{\id_A}{C}\otimes \mathcal{G}_{\inc,B}\left(\pov{\rho_B}{A}\right)  \overset{\labelcref{eq:G_inc_invariant}}{=} \frac{1}{|G|}\pov{\id_A}{C}\otimes \mathcal{G}_{\inc,B}\left(\pov{\rho_B}{C}\right).
\end{align}
Next, we show that 
\begin{equation}
    \left(\cexparg{\pov{A}{C}\wedge \pov{C}{A}}\right)\left(\rho\right) \coloneqq \left(\cexparg{\pov{A}{C}}\circ \cexparg{\pov{C}{A}}^{\triangleleft}\right)\left(\rho\right) = \left(\cexparg{\pov{C}{A}}^{\triangleleft}\circ \cexparg{\pov{A}{C}}\right)\left(\rho\right) = \mathcal{P}_{G,\pov{A}{C}}\left(\rho\right)\otimes \frac{1}{|G|}\pov{\id_{B}}{C}.
\end{equation}
To do so, let us first prove the following
\begin{align}
    \Tr_{\pov{B}{C}}\left(S_{A\to C} \left(\frac{1}{|G|}\pov{\id_B}{A}\otimes \pov{\rho_C}{A}\right)S_{A\to C}^\dagger\right) =& \frac{1}{|G|} \sum_{g,g'}\langle g|\pov{\rho_C}{A}|g'\rangle_C \ketbra{g^{-1}}{g'^{-1}}_A   \otimes \tr_B\left(U_B^\dagger\left(g\right) U_B\left(g'\right)\right)\\
    = &\sum_{g}\pov{\ketbra{g}{g}_A}{C} \langle g^{-1}|\pov{\rho_C}{A}|g^{-1}\rangle_C\label{eq:pinching_trace},
\end{align}
where we used that $\tr_B\left(U_B^\dagger\left(g\right) U_B\left(g'\right)\right) =\sum_{h,h'} \tr_B (\ketbra{h}{gh}g'h'\rangle\bra{h'}) = \sum_h \delta_{g,g'} = |G|\delta_{g,g'}$. Notice that this also holds under exchanging $A$ and $C$. 
In addition, we have
\begin{align}
    \braket{g|\pov{\rho_A}{C}|g}_A = &\Tr_B\left(\braket{g|\pov{\rho_{AB}}{C}|g}_A\right) = \Tr_B\left(\braket{g|S_{A\to C}\pov{\rho_{BC}}{A}S_{A\to C}^\dagger|g}_A\right)\\
    =& \sum_{h,h'}\Tr_B\left[\left(\braket{ g|h^{-1}}_A \bra{h}_C \otimes U_B^\dagger(h) \right) \pov{\rho_{BC}}{A}\left(\ket{h'}_C \braket{h'^{-1}|g}_A\otimes U_B(h')\right) \right]\\
    = &\bra{ g^{-1}}\Tr_B(U_B^\dagger(g) \pov{\rho_{BC}}{A}U_B(g))\ket{g^{-1}}_C = \braket{g^{-1}|\pov{\rho_C}{A}|g^{-1}} _C.\label{eq:pinching_invariance}
\end{align}
Using this as well as the definition of the conditional expectations, $\cexparg{\pov{A}{C}}\coloneqq \Tr_B\otimes \frac{1}{|G|}\pov{\id_{B}}{C}$ and $\cexparg{\pov{C}{A}} = \frac{1}{|G|}\pov{\id_B}{A}\otimes \Tr_B$, we obtain
\begin{align}
    \left(\cexparg{\pov{A}{C}}\circ \cexparg{\pov{C}{A}}^{\triangleleft}\right)\left(\rho\right) = &\Tr_{\pov{B}{C}}\left(S_{A\to C} \left(\frac{1}{|G|}\pov{\id_B}{A}\otimes \pov{\rho_C}{A}\right)S_{A\to C}^\dagger\right)\otimes \frac{1}{|G|}\pov{\id_{B}}{C}\\
    \overset{\labelcref{eq:pinching_trace}}{=}&\sum_{g}\ketbra{g}{g}_A\otimes \langle g^{-1}|\pov{\rho_C}{A}|g^{-1}\rangle\otimes \frac{1}{|G|}\pov{\id_{B}}{C}\\
    \overset{\labelcref{eq:pinching_invariance}}{=}&\sum_g\ketbra{g}{g}_A\otimes \langle g|\pov{\rho_A}{C}|g\rangle \otimes \frac{1}{|G|}\pov{\id_{B}}{C}
\end{align}
as well as 
\begin{align}
    \left(\cexparg{\pov{C}{A}}^{\triangleleft}\circ \cexparg{\pov{A}{C}}\right)\left(\rho\right)=&S_{A\to C}\left(\frac{1}{|G|}\pov{\id_B}{A}\otimes \Tr_{\pov{B}{A}}\left(S_{C\to A}\left(\pov{\rho_A}{C}\otimes \frac{1}{|G|}\pov{\id_{B}}{C}\right)S_{C\to A^\dagger}\right)\right)S_{A\to C}^\dagger\\
    \overset{\labelcref{eq:pinching_trace}}{=} & S_{A\to C}\left(\frac{1}{|G|}\pov{\id_B}{A}\otimes \sum_g \ketbra{g}{g}_C\langle g^{-1}|\pov{\rho_A}{C}|g^{-1}\rangle_C\right) S_{A\to C}^\dagger  \\
    =&\frac{1}{|G|}\sum_{g,h,h'}|h^{-1}\rangle_A\underset{=\delta_{hg}}{\underbrace{\langle h|g\rangle_C}}\underset{=\delta_{h'g}}{\underbrace{\langle g|h'\rangle_C}}\langle h'^{-1}|_A \; \langle g^{-1}|\pov{\rho_A}{C}|g^{-1}\rangle_C \otimes U_B^\dagger\left(h\right) U_B\left(h'\right)\\
    =& \sum_g\ketbra{g}{g}_A\langle g|\pov{\rho_A}{C}|g\rangle_C\otimes \frac{1}{|G|}\pov{\id_{B}}{C}.
\end{align}

\section{Perspective-Independence of Mutual Information}
\label{app:mutual_info}
In~\cref{def: mutual_info_QRFs}, we claimed that the mutual information across different QRF perspectives is independent of whether we construct the subalgebras and conditional expectations in Charlie's perspective or in Alice's perspective. This is formally proven here.
\begin{proposition}
    The mutual information $\minfo\left(\pov{B}{A}:\pov{B}{C}\right)$ is independent of whether we construct the subalgebras and conditional expectations relative to Alice or Charlie.
\end{proposition}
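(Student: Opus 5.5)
The plan is to show that every term of the modified chain rule is invariant under the QRF transformation, using unitary invariance of relative entropy. Constructing everything in Alice's perspective means working in $\cB(\pov{\cH_{BC}}{A})$ with the algebras $\algpov{B}{A}$, $\algpovpf{B}{C}$, their join and meet, the corresponding conditional expectations, the state $\pov{\rho}{A}=S_{C\to A}\pov{\rho}{C}S_{C\to A}^\dagger$, and the maximally mixed reference state $\pov{\pi_{BC}}{A}$. The key structural fact is that $\mathcal{S}_{A\to C}$ is a $*$-isomorphism $\cB(\pov{\cH_{BC}}{A})\to\cB(\pov{\cH_{AB}}{C})$. It therefore maps generated algebras to generated algebras and intersections to intersections. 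Concretely, $(\algpov{B}{A}\vee\algpovpf{B}{C})_\triangleleft=\algpovpb{B}{A}\vee\algpov{B}{C}$ and $(\algpov{B}{A}\wedge\algpovpf{B}{C})_\triangleleft=\algpovpb{B}{A}\wedge\algpov{B}{C}$, with the single algebras handled by \cref{def:pullback_alg}, since the pull-back and push-forward are mutually inverse.

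Next I will show that the conditional expectations transform covariantly: if $E$ is the conditional expectation onto $\cA\subseteq\cB(\pov{\cH_{BC}}{A})$, then $E^{\triangleleft}$ (\cref{def:pullback_map}) is the conditional expectation onto $\cA_\triangleleft$. To do this I will check the four properties of \cref{def:cond_expect1} for $E^\triangleleft$. Positivity and trace preservation follow because unitary conjugation preserves both. Idempotence on $\cA_\triangleleft$ is immediate. The bimodule property holds because $S_{A\to C}^\dagger(OA')S_{A\to C}=(S^\dagger OS)(S^\dagger A'S)$, where the second factor lies in $\cA$. Uniqueness from \cref{def:cond_expect1} then identifies $E^\triangleleft$ with the conditional expectation onto $\cA_\triangleleft$. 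For the join this is consistent with the explicit check $\mappb{\mathcal{P}}_{\pov{G_C}{A}}=\mathcal{P}_{\pov{G_A}{C}}$ in \cref{app:joint_algebras}. For the meet it is consistent with the commuting compositions \cref{eq:CE_overlap_B}, since pulling back a composition gives the composition of the pull-backs.

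With this in hand, each information term satisfies
\[
D\bigl(E(\pov{\rho}{A}),\pov{\pi_{BC}}{A}\bigr)=D\bigl(S_{A\to C}E(\pov{\rho}{A})S_{A\to C}^\dagger,\,S_{A\to C}\pov{\pi_{BC}}{A}S_{A\to C}^\dagger\bigr)=D\bigl(E^\triangleleft(\pov{\rho}{C}),\pov{\pi_{AB}}{C}\bigr).
\]
This uses unitary invariance of relative entropy, the identity $S_{A\to C}\pov{\rho}{A}S_{A\to C}^\dagger=\pov{\rho}{C}$, and the fact that the maximally mixed state is fixed by unitaries, with equal dimensions $|G|^2$ in both perspectives. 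Applying this to the four terms of \cref{def: mutual_info_QRFs} gives equality term by term, and hence equality of $\minfo(\pov{B}{A}:\pov{B}{C})$ in the two constructions.

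The main obstacle is bookkeeping rather than analysis. One has to make sure that the join and meet constructed natively in Alice's perspective really are the push-forwards of those constructed in Charlie's. This rests on $\mathcal{S}_{A\to C}$ commuting with "generated algebra" and "intersection", which holds for any algebra isomorphism. One also has to confirm that the reference state in \cref{def:algebraic_information} is the global maximally mixed state in each perspective, and not a perspective-dependent one.
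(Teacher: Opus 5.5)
Your proposal is correct and follows essentially the same route as the paper: unitary invariance of the relative entropy, invariance of the maximally mixed state, and transport of the subalgebras, their joins and meets, and their conditional expectations by the QRF transformation, applied term by term to the modified chain rule. Your identification of the transported map as the conditional expectation onto the transported algebra, by checking the four defining properties and invoking uniqueness, is a slightly tighter version of the paper's argument, which only shows that the transported map has range equal to the pushed-forward algebra.
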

\begin{proof}
    The relative entropy is invariant under unitaries, i.e.,
    \begin{equation}
        D\left(U\rho U^\dagger, U\sigma U^\dagger\right) = D\left(\rho,\sigma\right)
    \end{equation}
    for any unitary $U$. Since the QRF transformation $S_{C\to A}$ is unitary, for any subalgebra $\mathcal{M}\subseteq\cB\left(\pov{\cH_{AB}}{C}\right)$ with conditional expectation $E$, we have
    \begin{equation}
    \begin{split}
        \info\left(\mathcal{M}\right)_{\pov{\rho}{C}} &= D\left(\cexp(\pov{\rho}{C}),\pov{\pi}{C}\right) = D\left(S_{C\to A}\cexp\left(\pov{\rho}{C}\right)S_{C\to A}^\dagger,S_{C\to A}\pov{\pi}{C} S_{C\to A}^\dagger\right)\\
    &= D\left(S_{C\to A}\cexp\left(S_{C\to A}^\dagger \pov{\rho}{A}S_{C\to A}\right)S_{C\to A}^\dagger, \pov{\pi}{A}\right)= D\left(\mappb{\cexp}(\pov{\rho}{A}), \pov{\pi}{A}\right).
    \end{split}
    \end{equation}
     Notice that $\mappb{\cexp}$ is precisely the conditional expectation onto the subalgebra relative to Alice, expressed through the push-forward $\mathcal{M}_\triangleright =\left\{S_{C\to A}OS_{C\to A}^\dagger|O\in \mathcal{M}\right\}$. For every element $O\in \cB\left(\pov{\cH_{AB}}{C}\right)$, there is an element $O_\triangleright = S_{C\to A}OS_{C\to A}^\dagger\in \cB\left(\pov{\cH_{BC}}{A}\right)$ which gets mapped to
    \begin{equation}
    \begin{split}
        \mappb{\cexp}\left(O_\triangleright\right) &= S_{C\to A}\cexp\left(S_{C\to A}^\dagger O_\triangleright S_{C\to A}\right)S_{C\to A}^\dagger \\
        &=  S_{C\to A}\cexp\left(S_{C\to A}^\dagger S_{C\to A}O S_{C\to A}^\dagger S_{C\to A}\right)S_{C\to A}^\dagger = S_{C\to A} \cexp\left(O\right)S_{C\to A}^\dagger \in \mathcal{M}_\triangleright
    \end{split}
    \end{equation}
    and together they span $\mathcal{M}_\triangleright$.
    Thus, we have 
    \begin{equation}
        \info(\mathcal{\mathcal{M}})_{\pov{\rho}{C}} = D\left(\mappb{\cexp}(\pov{\rho}{A}),\pov{\pi}{A}\right) = \info(\mathcal{M}_\triangleright)_{\pov{\rho}{A}}.
    \end{equation}
    Finally, we can use the fact that, since the QRF transformation is unitary, we have $\left(\mathcal{M} \vee \mathcal{N}\right)_\triangleright = \mathcal{M}_\triangleright \vee \mathcal{N}_\triangleright$, and $\left(\mathcal{M}_\triangleright\right)_\triangleleft = \mathcal{M}$ for all algebras $\mathcal{M}, \mathcal{N} \subseteq \cB\left(\pov{\cH_{AB}}{C}\right)$, to write
    \begin{equation}
    \begin{split}
         \minfo\left(\pov{B}{A}:\pov{B}{C}\right)_\rho &=\info\left(\algpovpb{B}{A}\vee \algpov{B}{C}\right)_{\pov{\rho}{C}} - \info\left(\algpovpb{B}{A}\right)_{\pov{\rho}{C}}-\info\left(\algpov{B}{C}\right)_{\pov{\rho}{C}} + \info\left(\algpov{B}{A}\wedge\algpovpb{B}{C} \right)_{\pov{\rho}{C}}\\
         &=\info\left((\algpovpb{B}{A}\vee \algpov{B}{C})_\triangleright\right)_{\pov{\rho}{A}} - \info\left((\algpovpb{B}{A})_{\triangleright}\right)_{\pov{\rho}{A}}-\info\left((\algpov{B}{C})_\triangleright\right)_{\pov{\rho}{A}} + \info\left((\algpovpb{B}{A}\wedge \algpov{B}{C})_\triangleright \right)_{\pov{\rho}{A}}\\
        &= \info\left(\algpov{B}{A}\vee \algpovpf{B}{C}\right)_{\pov{\rho}{A}} - \info\left(\algpov{B}{A}\right)_{\pov{\rho}{A}} -\info\left(\algpovpf{B}{C}\right)_{\pov{\rho}{A}} + \info\left(\algpov{B}{A}\wedge \algpovpf{B}{C}\right)_{\pov{\rho}{A}}.
    \end{split}
    \end{equation}
    This is precisely the expression we would have obtained, had we done the entire construction by embedding all the subalgebras in Alice's perspective.
\end{proof}

\section{Non-Entangling States}\label{Non_Ent_Sta_App}

In this appendix, we show that for the non-entangling states in \cref{defNES}, the entropy and coherence are preserved. They thus satisfy the condition $\Delta\mathcal{C}\Delta \mathcal{E} = 0$ from \cite{cepollaro2024}.

\begin{theorem}\label{thm:nonentangling}
Let $\ket{\Psi}_{AB}^{(C)} = \ket{\psi}_A^{(C)} \otimes \ket{\phi}_B^{(C)}$ be a product state. Let $\pov{\mathcal{E}}{C}= H(\pov{A}{C})_\rho$ and $\pov{\mathcal{E}}{A}= H(\pov{C}{A})_\rho$ be the entanglement entropies respective to each QRF. Let also $\pov{\mathcal{C}}{C}= H(\pov{A}{C})_{\mathcal{P}_G(\rho)}- H(\pov{A}{C})_\rho$ and $\pov{\mathcal{C}}{A}= H(\pov{C}{A})_{\mathcal{P}_G(\rho)}- H(\pov{C}{A})_\rho$ be the coherence entropies in respect to each QRF, where $\mathcal{P}_G(\cdot)$ is the pinching in the group basis. Then $\Delta \mathcal{C}=\pov{\mathcal{C}}{C}-\pov{\mathcal{C}}{A}$ and $\Delta \mathcal{E}=\pov{\mathcal{E}}{C}-\pov{\mathcal{E}}{A}$ satisfy the condition $\Delta \mathcal{C}\Delta \mathcal{E}=0$ if and only if $\ket{\Psi}_{AB}^{(C)}$ is a non-entangling state.
\end{theorem}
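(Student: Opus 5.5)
The plan is to reduce the condition $\Delta\mathcal{C}\,\Delta\mathcal{E}=0$ to the single statement that the state is still a product state relative to $A$, and then to characterise when that happens by writing out the QRF transformation explicitly.

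First, I would use the conservation law of \cite{cepollaro2024} recalled in \cref{sec:conservation_laws}: for pure states and ideal QRFs, $\pov{\mathcal{C}}{C}+\pov{\mathcal{E}}{C}=\pov{\mathcal{C}}{A}+\pov{\mathcal{E}}{A}$. This holds because both sides equal the entropy of the group-basis pinched state $\mathcal{P}_{G}(\pov{\rho_A}{C})$, and by \cref{eq:pinching_invariance} that pinched state coincides, after relabelling $g\to g^{-1}$, with $\mathcal{P}_G(\pov{\rho_C}{A})$. The law gives $\Delta\mathcal{C}=-\Delta\mathcal{E}$, so $\Delta\mathcal{C}\,\Delta\mathcal{E}=-(\Delta\mathcal{E})^2$. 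This vanishes if and only if $\Delta\mathcal{E}=0$. Since $\ket{\Psi}^{(C)}_{AB}$ is a pure product state, $\pov{\mathcal{E}}{C}=0$, and the condition becomes $\pov{\mathcal{E}}{A}=H(\pov{C}{A})_\rho=0$. The state is pure in Alice's perspective too, because $S_{C\to A}$ is unitary, so this is equivalent to $S_{C\to A}\ket{\Psi}^{(C)}_{AB}$ being a product state on $\pov{\cH_B}{A}\otimes\pov{\cH_C}{A}$.

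Second, I would compute the transformed state. Write $\ket{\psi}_A^{(C)}=\sum_s a_s\ket{s}_A$. Using $S_{C\to A}=\sum_g \ket{g^{-1}}_C\bra{g}_A\otimes U_B^\dagger(g)$,
\begin{equation*}
S_{C\to A}\ket{\Psi}^{(C)}_{AB}=\sum_{s:\,a_s\neq 0} a_s\, U_B^\dagger(s)\ket{\phi}_B\otimes\ket{s^{-1}}_C .
\end{equation*}
The vectors $\ket{s^{-1}}_C$ are orthonormal, and each $U_B^\dagger(s)\ket{\phi}$ is a unit vector.

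For the direction from non-entangling to product: if \cref{4.1} holds, then for a fixed reference $s_0$ in the support, $U_B^\dagger(s)\ket{\phi}=e^{i(\theta(s_0)-\theta(s))}U_B^\dagger(s_0)\ket{\phi}$. The state then factorises as
\begin{equation*}
U_B^\dagger(s_0)\ket{\phi}\otimes\sum_s a_s e^{i(\theta(s_0)-\theta(s))}\ket{s^{-1}}_C .
\end{equation*}

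For the converse, suppose the transformed state equals $\ket{\chi}_B\otimes\ket{\eta}_C$. Applying $\bra{s^{-1}}_C$ gives $a_s U_B^\dagger(s)\ket{\phi}=\braket{s^{-1}|\eta}\ket{\chi}$. So for every $s$ in the support, $U_B^\dagger(s)\ket{\phi}$ is proportional to $\ket{\chi}$. Since all these vectors have unit norm, they are equal up to phases, which is exactly \cref{4.1}.

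The main obstacle is the first step. I must make sure the conservation law genuinely applies with the coherence and entanglement definitions used in the theorem (the relative entropy of coherence of $\pov{A}{C}$ and of $\pov{C}{A}$ in the group basis). I would verify this directly rather than just cite it: $\pov{\mathcal{C}}{C}+\pov{\mathcal{E}}{C}=H(\mathcal{P}_G(\pov{\rho_A}{C}))$, and \cref{eq:pinching_invariance} shows that the diagonal of $\pov{\rho_A}{C}$ is a permutation of the diagonal of $\pov{\rho_C}{A}$, so the two pinched entropies agree. The rest is the elementary orthogonal-decomposition argument above.
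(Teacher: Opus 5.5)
Your proof is correct and follows essentially the same route as the paper: the conservation law $\Delta\mathcal{C}=-\Delta\mathcal{E}$ reduces the condition to $\pov{\mathcal{E}}{A}=0$, i.e.\ to $S_{C\to A}\ket{\Psi}^{(C)}_{AB}$ being a product state, which is then matched to \cref{4.1} by expanding the transformed state over the support of $\ket{\psi}^{(C)}_A$. The differences are minor. You derive the conservation law directly from \cref{eq:pinching_invariance}; this is a more apt justification than the paper's pointer to \cref{Thm_3.8_Luppi}, which concerns the coherence of $B$. You also use the law in both directions rather than only the converse, and you prove the converse by projecting onto $\bra{s^{-1}}_C$ instead of arguing via the purity of $\pov{\rho_B}{A}=\sum_s c_s\ketbra{\phi_s}{\phi_s}$.
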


\begin{proof} Let us first prove that if $\pov{\ket{\Psi}_{AB}}{C} =\pov{\ket{\psi}_A}{C} \otimes\pov{\ket{\phi}_B}{C}$ is a non-entangling state, then $\Delta \mathcal{C} \Delta\mathcal{E} = 0$.
We first notice that
\begin{equation}
    \pov{\mathcal{E}}{C}=H\left(\pov{A}{C}\right)_{\ketbra{\psi}{\psi}_{A}^{(C)}}=0
\end{equation}
since $\pov{\ket{\Psi}_{AB}}{C}$ is a pure product state. We apply the QRF transformation $S_{C\to A}$, obtaining
\begin{align*} 
 \pov{\ket{\Psi}_{BC}}{A}=S_{C\to A}\left(\pov{\ket{\psi}_A}{C} \otimes\pov{\ket{\phi}_B}{C}\right) = \sum_{g\in G}  U^{\dagger}_B (g)\ket{\phi}_B  \otimes \braket{g|\psi} \, \ket{g^{-1}}_C. 
 \end{align*}
Since only the elements of $G$ such that $\pov{\braket{g|\psi}_A}{C}\ne0$ will contribute to the sum, we restrict it to the set $S = \left\{ g \in G  \Big|\braket{g|\psi}_A^{(C)} \ne 0\right\}$, obtaining
\begin{align} \label{S_1}
 \pov{\ket{\Psi}_{BC}}{A}=S_{C\to A}( \pov{\ket{\psi}_A}{C} \otimes\pov{\ket{\phi}_B}{C} ) = \sum_{s\in S} U^{\dagger}_B (s)\pov{\ket{\phi}_B}{C} \otimes \pov{\braket{s|\psi}_A}{C} \; \ket{s^{-1}}_C.
 \end{align}
 From the definition of a non-entangling state, $U_B^{\dagger}(s)\ket{\phi}_B = e^{i(\theta(s')-\theta(s))}U_B^{\dagger}(s')\ket{\phi}_B $ for all $ s,s' \in S$. Therefore:
 \begin{align*} 
\sum_{s\in S} U^{\dagger}_B (s)\pov{\ket{\phi}_B }{C}\otimes\pov{\braket{s|\psi}_A}{C} \,\ket{s^{-1}}_C    = U^{\dagger}_B (s')\ket{\phi}_B \otimes \left( \sum_{s\in S} e^{i(\theta(s')-\theta(s))}  \braket{s|\psi}_A \, \ket{s^{-1}}_C  \right)= \pov{\ket{\Psi}_{BC}}{A}.
 \end{align*}
Thus, $\pov{\ket{\Psi}_{BC}}{A}$ is also a pure product state, meaning that $\pov{\mathcal{E}}{A}=0$ and implying $\Delta \mathcal{C} \Delta\mathcal{E} = 0$.\\

Let us now prove the converse, namely that $\Delta \mathcal{C} \Delta\mathcal{E} = 0$ implies that $\pov{\ket{\Psi}_{AB}}{C}$ is non-entangling. From~\cref{Thm_3.8_Luppi}, we have that $\Delta \mathcal{C}= -\Delta\mathcal{E}$, thus $\Delta \mathcal{C} \Delta\mathcal{E} = 0$ implies that $\Delta \mathcal{C} = \Delta\mathcal{E} = 0$. Thus, $\pov{\mathcal{E}}{A}= \pov{\mathcal{E}}{C}=0$, implying that the marginal $\pov{\rho_B}{A}$ of the pure state $\pov{\ket{\Psi}_{BC}}{A}$ in~\cref{S_1} is pure. This is the case if and only if $\pov{\ket{\Psi}_{BC}}{A} = \pov{\ket{\phi'}_B}{A}\otimes \pov{\ket{\psi'}_C}{A}$. From~\cref{S_1}, we have
\begin{equation*}
    \pov{\rho_B}{A} =\Tr_C \left[\ket{\Psi}\bra{\Psi}_{CB}^{(A)} \right]= \sum_{s\in S} |\braket{s|\psi}|^2 \,   U^{\dagger}_B (s)\ketbra{\phi}{\phi}_BU_B (s) = \sum_{s\in S} c_s \ketbra{\phi_s}{\phi_s}_B
\end{equation*}
where we defined $c_s = |\braket{s|\psi}|^2 \in (0,1]$, satisfying $\sum_s c_s =1$, and $\ket{\phi_s} = U^{\dagger}_B (s)\ket{\phi}$ is a normalised state since $U_B(s)$ is unitary. By the argument above, we have
\begin{equation}
    1= \bra{\phi'}\pov{\rho_B}{A}\ket{\phi'} = \sum_{s\in S} c_s |\bra{\phi'}\phi_s\rangle|^2 \leq \sum_s c_s = 1.
\end{equation}
Since $|\bra{\phi'}\phi_s\rangle|^2\in [0,1]$, the inequality is strict if and only if there exists $s$ such that $|\bra{\phi'}\phi_s\rangle|^2 < 1$. Thus, for all $s$ it must hold $|\bra{\phi'}\phi_s\rangle|^2=1$. Since $\ket{\phi'}$ and $\ket{\phi_s}$ are normalised vectors, their inner product is 1 if and only if they are the same vector, up to a phase. Therefore, it holds
\begin{equation}
   \ket{\phi} =  e^{i\theta(s)}\ket{\phi_s} =  e^{i\theta(s)} U^{\dagger}_B (s)\ket{\phi},
\end{equation}
for some phase $\theta(s) \in [0,2\pi)$ dependent on $s\in S$. Hence, $e^{i\theta(s)} U^{\dagger}_B (s)\ket{\phi}=e^{i\theta(s')} U^{\dagger}_B (s')\ket{\phi}$.
\end{proof}

As a remark, this theorem also holds if we use instead the linear entanglement entropy $\mathcal{E}_l(\rho_{AB}) \coloneqq 1 - \Tr[\rho_A^2]$ and the $l^2$-norm of coherence $\mathcal{C}_{l^2}[\rho] \coloneqq \sum_{i \ne j} \left| \bra{i}\rho\ket{j}\right|^2$ (cf.~\cite{cepollaro2024}). This is because, also in this case, the entanglement entropy vanishes if and only if $\rho_A^{(C)}$ is pure.

\end{document}

%% file: packages.tex
\usepackage[english]{babel}
\usepackage[letterpaper,top=2cm,bottom=2cm,left=2.5cm,right=2.5cm,marginparwidth=1.75cm]{geometry}

\usepackage{setspace}
\usepackage{soul}
\usepackage{xcolor}
\sethlcolor{yellow}

\usepackage{amsmath,amssymb,physics,braket,mathtools}
\usepackage{graphicx}
\usepackage{enumitem}
\usepackage{tikz}
\usepackage{tikz-cd}
\usepackage{cancel}
\usetikzlibrary{arrows.meta, positioning}

\usepackage{amsthm}
\usepackage{epsdice}
\usepackage{comment}
\usepackage{csquotes}
\usepackage{thmtools}
\usepackage{subcaption}
\usepackage[]{appendix}
\usepackage[normalem]{ulem}

\newcommand{\showcomments}{true}

\newcommand{\lv}[1]%
{\ifthenelse{\equal{\showcomments}{true}}{{\color{blue}{[LV: #1]}}}{\xspace}}%
\newcommand{\cf}[1]%
{\ifthenelse{\equal{\showcomments}{true}}{{\color{magenta}{[CF: #1]}}}{\xspace}}%
\newcommand{\fg}[1]%
{\ifthenelse{\equal{\showcomments}{true}}{{\textcolor[rgb]{0.1,0.5,0.5}{[FG: #1]}}}{\xspace}}%
\newcommand{\fedit}[1]%
{\ifthenelse{\equal{\showcomments}{true}}{{\textcolor[rgb]{0.1,0.5,0.5}{#1}}}{\xspace}}%
\newcommand{\vk}[1]%
{\ifthenelse{\equal{\showcomments}{true}}{{\color{orange}{[VK: #1]}}}{\xspace}}%
\newcommand{\vktext}[1]%
{\ifthenelse{\equal{\showcomments}{true}}{{\color{orange}{#1}}}{\xspace}}%

\newcommand{\lvtext}[1]%
{\ifthenelse{\equal{\showcomments}{true}}{{\color{blue}{#1}}}{\xspace}}%

\newcommand{\cftext}[1]%
{\ifthenelse{\equal{\showcomments}{true}}{{\color{magenta}{#1}}}{\xspace}}%

\usepackage[colorlinks=true, allcolors=blue]{hyperref}
\usepackage{nameref}
\usepackage[capitalise]{cleveref} 

\usepackage{graphicx}
\usepackage{caption}

\declaretheorem[name=Proposition,
	refname={Proposition,Propositions},
	Refname={Proposition,Propositions},
    numberwithin=section]{proposition}

\declaretheorem[name=Theorem,
	refname={Theorem,Theorems},
	Refname={Theorem,Theorems},
	numberlike=proposition]{theorem}
\declaretheorem[name=Lemma,
	refname={Lemma,Lemmas},
	Refname={Lemma,Lemmas},
	numberlike=proposition]{lemma}

\declaretheorem[name=Definition,
	refname={Definition,Definitions},
	Refname={Definition,Definitions},
	numberlike=proposition, style=definition]{definition}
\declaretheorem[name=Example,
	refname={Example,Examples},
	Refname={Example,Examples},
	numberlike=proposition, style=definition]{example}

\usepackage{mathrsfs}   
\usetikzlibrary{arrows.meta}

%% file: aliases.tex
\newcommand{\id}{\mathbb{I}} 
\newcommand{\inc}{\mathrm{inc}}

\newcommand{\cA}{\mathcal{A}}
\newcommand{\cB}{\mathcal{L}} 
\newcommand{\cH}{\mathcal{H}} 

\newcommand{\cexp}{E}  
\newcommand{\cexparg}[1]{\cexp_{#1}} 

\newcommand{\info}{I} 
\newcommand{\minfo}{I} 

\newcommand{\pov}[2]{#1^{(#2)}} 

\newcommand{\alg}[1]{\mathcal{#1}}
\newcommand{\algpov}[2]{\mathcal{#1}^{(#2)}}

\newcommand{\algpovemb}[2]{\mathcal{#1}_\mathrm{emb}^{(#2)}} 

\newcommand{\algpovpb}[2]{\mathcal{#1}^{(#2)}_{\triangleleft}} 

\newcommand{\algpovpf}[2]{\mathcal{#1}^{(#2)}_{\triangleright}}

\newcommand{\algkin}[1]{\mathcal{#1}_{\textup{kin}}}

\newcommand{\mappb}[1]{{#1}^{\triangleleft}}

\definecolor{orA}{RGB}{233,164,61}
\definecolor{orAlight}{RGB}{244,214,160}
\definecolor{blC}{RGB}{44,88,152}
\definecolor{blClight}{RGB}{168,195,224}
\definecolor{topgray}{RGB}{232,232,232}

\definecolor{marA}{RGB}{122,26,72}
\definecolor{grC}{RGB}{46,120,64}

\usetikzlibrary{arrows.meta,bending}

\definecolor{marA}{RGB}{140,40,88}
\definecolor{blLight}{RGB}{190,207,228}
\definecolor{blDark}{RGB}{115,152,196}
\definecolor{orLight}{RGB}{247,229,196}
\definecolor{orDark}{RGB}{242,199,133}

\definecolor{marA}{RGB}{140,40,88}
\definecolor{blFull}{RGB}{44,98,168}
\definecolor{blLight}{RGB}{190,207,228}
\definecolor{blDark}{RGB}{115,152,196}
\definecolor{orFull}{RGB}{238,170,70}
\definecolor{orLight}{RGB}{247,229,196}
\definecolor{orDark}{RGB}{242,199,133}

\definecolor{myorange}{HTML}{F5A427}
\definecolor{mymagenta}{HTML}{941751}
\definecolor{mygreen}{HTML}{009051}
\definecolor{myblue}{HTML}{2A63A2}
\definecolor{mygray}{HTML}{929292}

\colorlet{myoverlap}{mygray!65}                 
\colorlet{mycirclefill}{mygreen!30!myoverlap}

%% file: fig2.tex
\begin{tikzpicture}[line cap=round, line join=round,scale=0.9, transform shape]

\draw[thick,line width =1pt] (8.00,3.5) -- (8.00,0.5);
\coordinate (top) at (8.00,5.28);
\coordinate (tl)  at (7.19,4.81);
\coordinate (tr)  at (8.81,4.81);
\coordinate (ctr) at (8.00,4.31);
\coordinate (bl)  at (7.19,3.91);
\coordinate (bot) at (8.00,3.41);
\coordinate (br)  at (8.81,3.91);

\coordinate (eyeA) at (4.88,2.19);
\coordinate (eyeC) at (11.25,2.19);

\fill[orAlight] (eyeA) -- (tl) -- (bot) -- cycle;
\fill[blClight] (eyeC) -- (tr) -- (bot) -- cycle;

\fill[topgray]  (top) -- (tl) -- (ctr) -- (tr) -- cycle;
\fill[orAlight] (tl)  -- (ctr) -- (bot) -- (bl) -- cycle;
\fill[blClight] (tr)  -- (ctr) -- (bot) -- (br) -- cycle;
\draw[thick,line width= 1pt] (top) -- (tl) -- (ctr) -- (tr) -- cycle;
\draw[thick,orA!60!black,line width= 1pt] (tl) -- (bl) -- (bot) -- (ctr);
\draw[thick,blC!90!black,line width= 1pt]          (tr) -- (br) -- (bot) -- (ctr);

\begin{scope}[cm={0.808389304633, -0.4990057436005, 0, 0.95,(7.19,3.91)}]
    \node[orA!80!black] at (0.475, 0.44)
        {$\mathrm{\pov{B}{A}}$};
\end{scope}

\begin{scope}[cm={0.808389304633, 0.4990057436005, 0, 0.95,(8,3.41)}]
    \node[blC!95!black] at (0.525, 0.485)
        {$\mathrm{\pov{B}{C}}$};
\end{scope}



\draw[thick,-{Stealth[length=2.5mm]},line width =1pt] (5.05,4.70) to[bend left=38] (11.25,4.70);
\node at (8.05,6.2) {$S_{A\to C}$};


\draw[thick, line width =1pt] (4.25,1.9) circle (0.28);                                      
\draw[thick, line width =1pt] (4.04,2.08) .. controls (3.75,2.28) and (3.60,1.95) .. (3.78,1.72); 
\draw[black, line width=1.5pt] (3.962,2.063) -- (4.034,2.150); 
\draw[thick,fill=orA, line width =1pt] (4.25,1.56) -- (3.81,0.81) -- (4.69,0.81) -- cycle;    
\draw[thick, line width =1pt] (4.05,1.05) -- (4.33,1.33);                                     
\draw[thick, line width =1pt] (4.10,0.81) -- (4.10,0.58)  (4.40,0.81) -- (4.40,0.58);         
\draw[thick, line width =1pt] (4.40,1.40) -- (4.62,1.45);                                     
\draw[thick,orA,-{Stealth[length=2mm]}, line width =1pt] (4.62,1.28) -- (4.80,1.90);          
\node at (5.05,1.42) {$A$};

\begin{scope}[shift= {(0,-0.2)}]
    \draw[thick,line width=1pt] (11.75,2.03) circle (0.28);                                     
\draw[thick,fill=blC,line width=1pt] (11.50,1.69) rectangle (12.03,1.00);                   
\draw[thick,line width=1pt] (11.93,1.45) -- (11.93,1.20);                                   
\draw[thick,line width=1pt] (11.62,1.00) -- (11.62,0.76)  (11.90,1.00) -- (11.90,0.76);     
\draw[thick,line width=1pt] (11.50,1.40) -- (11.32,1.45);                                   
\draw[thick,blC,-{Stealth[length=2mm]},line width=1pt] (11.38,1.30) -- (11.15,1.88);        
\node at (10.88,1.32) {$C$};
\end{scope}

\node[orA,anchor=east] at (5.40,4.10)
  {$\algpov{B}{A}=\langle \pov{X_B}{A}\otimes\id,\,\pov{Z_B}{A}\otimes\id\rangle$};
\node[orA,anchor=west] at (11.10,4.10)
  {$\algpovpb{B}{A}=\langle \id\otimes \pov{X_B}{C},\,\pov{Z_A}{C}\otimes \pov{Z_B}{C}\rangle$};
\node[blC,anchor=west] at (11.30,3.10)
  {$\algpov{B}{C}=\langle \id\otimes \pov{X_B}{C},\,\id\otimes \pov{Z_B}{C}\rangle$};

\node[anchor=west] at (0.00,0.55)
  {$\cH^{(A)}=\cH^{(A)}_B\otimes\cH^{(A)}_C$};
\node[anchor=east] at (16.00,0.55)
  {$\cH^{(C)}=\cH^{(C)}_A\otimes\cH^{(C)}_B$};

\end{tikzpicture}

%% file: fig7a.tex
\begin{tikzpicture}[
  line cap=round, line join=round,
  axis/.style={line width=1pt, -{Stealth[length=3mm,width=2.2mm]}},
  sphere/.style={line width=1.2pt},
  eq/.style={line width=1.1pt, densely dotted},
  vec/.style={marA, line width=2.5pt, -{Stealth[length=3.5mm,width=3mm]}},
  parrow/.style={line width=1.5pt, -{Stealth[length=2.5mm,width=2mm]}},
  font=\LARGE
]
\def\R{1.9}    
\def\ry{0.75}  
\def\rx{0.75}  

\begin{scope}[shift={(0,0)}]
  \draw[axis] (0,-2.65) -- (0,3.0);
  \draw[axis] (-2.3,0) -- (2.9,0);
  \draw[sphere] (0,0) circle (\R);
  \draw[eq] (0,0) ellipse ({\R} and \ry);
  \draw[vec] (0,0) -- (1.0,1.45);
\end{scope}

\begin{scope}[shift={(-5.1,-5.1)}]
  \fill[blLight] (0,0) circle (\R);
  \fill[blDark] (-\R,0) arc[start angle=180,end angle=360,radius=\R]
                arc[start angle=0,end angle=-180,x radius=\R,y radius=\ry] -- cycle;
  \draw[axis] (0,-2.65) -- (0,3.0);
  \draw[axis] (-2.3,0) -- (2.9,0);
  \draw[sphere] (0,0) circle (\R);
  \draw[eq] (0,0) ellipse ({\R} and \ry);
  \draw[vec] (0,0) -- (0,1.55);
\end{scope}

\begin{scope}[shift={(5.1,-5.1)}]
  \fill[orLight] (0,0) circle (\R);
  \fill[orDark] (0,\R) arc[start angle=90,end angle=270,radius=\R]
              arc[start angle=270,end angle=90,x radius=\rx,y radius=\R] -- cycle;
  \draw[axis] (0,-2.65) -- (0,3.0);
  \draw[axis] (-2.3,0) -- (2.9,0);
  \draw[sphere] (0,0) circle (\R);
  \draw[eq] (0,0) ellipse ({\rx} and \R);
  \draw[vec] (0,0) -- (1.2,0);
\end{scope}

\begin{scope}[shift={(0,-10.2)}]
  \fill[blLight] (0,0) circle (\R);
  \fill[blDark] (-\R,0) arc[start angle=180,end angle=360,radius=\R]
                arc[start angle=0,end angle=-180,x radius=\R,y radius=\ry] -- cycle;
  \begin{scope}[transparency group, opacity=0.55]
    \fill[orLight] (0,0) circle (\R);
    \fill[orDark] (0,\R) arc[start angle=90,end angle=270,radius=\R]
              arc[start angle=270,end angle=90,x radius=\rx,y radius=\R] -- cycle;
  \end{scope}
  \draw[axis] (0,-2.65) -- (0,3.0);
  \draw[axis] (-2.3,0) -- (2.9,0);
  \draw[sphere] (0,0) circle (\R);
  \draw[eq, black] (0,0) ellipse ({\R} and \ry);
  \draw[eq] (0,0) ellipse ({\rx} and \R);
  \fill[marA] (0,0) circle (0.14);
\end{scope}

\draw[parrow] (-3.4,-0.1) to[bend right=35] (-4.4,-1.1);
\node[font=\huge] at (-4.9,-0.0) {$\mathcal{P}_Z$};

\draw[parrow] (3.6,-0.1) to[bend left=35] (4.5,-1.1);
\node[font=\huge] at (5.0,-0.0) {$\mathcal{P}_X$};

\draw[parrow] (-4.4,-8.0) to[bend right=35] (-3.6,-9.0);
\node[font=\huge] at (-5.1,-9.0) {$\mathcal{P}_X$};

\draw[parrow] (4.5,-8.0) to[bend left=35] (3.6,-9.0);
\node[font=\huge] at (5.2,-9.0) {$\mathcal{P}_Z$};
\end{tikzpicture}

%% file: fig7b.tex
\begin{tikzpicture}[
  line cap=round, line join=round,
  axis/.style={line width=1pt, -{Stealth[length=3mm,width=2.2mm]}},
  sphere/.style={line width=1.2pt},
  eq/.style={line width=1.1pt, densely dotted},
  vec/.style={marA, line width=2.5pt, -{Stealth[length=3.5mm,width=3mm]}},
  parrow/.style={line width=1pt, -{Stealth[length=2.5mm,width=2mm]}},
  font=\LARGE,
]
\def\R{1.9}     
\def\ry{0.75}   
\def\tilt{-38}  

\newcommand{\equatorsplit}[2]{%
  \fill[#1] (0,0) circle (\R);
  \fill[#2] (-\R,0) arc[start angle=180,end angle=360,radius=\R]
            arc[start angle=0,end angle=-180,x radius=\R,y radius=\ry] -- cycle;}
\newcommand{\tiltedsplit}[2]{%
  \fill[#1] (0,0) circle (\R);
  \begin{scope}[rotate=\tilt]
    \fill[#2] (-\R,0) arc[start angle=180,end angle=360,radius=\R]
              arc[start angle=0,end angle=-180,x radius=\R,y radius=\ry] -- cycle;
  \end{scope}}
\newcommand{\axes}{\draw[axis] (0,-2.65) -- (0,3.0); \draw[axis] (-2.3,0) -- (2.9,0);}

\begin{scope}
  \axes
  \draw[sphere] (0,0) circle (\R);
  \draw[eq] (0,0) ellipse ({\R} and \ry);
  \draw[vec] (0,0) -- (1.0,1.45);
\end{scope}

\begin{scope}[shift={(-5.1,-5.1)}]
  \equatorsplit{blLight}{blDark}
  \axes
  \draw[sphere] (0,0) circle (\R);
  \draw[eq] (0,0) ellipse ({\R} and \ry);
  \draw[vec] (0,0) -- (0,1.55);
\end{scope}

\begin{scope}[shift={(5.1,-5.1)}]
  \tiltedsplit{orLight}{orDark}
  \axes
  \draw[sphere] (0,0) circle (\R);
  \draw[eq, rotate=\tilt] (0,0) ellipse ({\R} and \ry);
  \draw[vec] (0,0) -- (1.25,1.15);
\end{scope}

\begin{scope}[shift={(-3.4,-11)}]
  \equatorsplit{blLight}{blDark}
  \begin{scope}[transparency group, opacity=0.55]
    \tiltedsplit{orLight}{orDark}
  \end{scope}
  \axes
  \draw[sphere] (0,0) circle (\R);
  \draw[eq] (0,0) ellipse ({\R} and \ry);
  \draw[eq, rotate=\tilt] (0,0) ellipse ({\R} and \ry);
  \draw[vec] (0,0) -- (0.9,0.85);
\end{scope}

\begin{scope}[shift={(3.4,-11)}]
  \equatorsplit{blLight}{blDark}
  \begin{scope}[transparency group, opacity=0.55]
    \tiltedsplit{orLight}{orDark}
  \end{scope}
  \axes
  \draw[sphere] (0,0) circle (\R);
  \draw[eq] (0,0) ellipse ({\R} and \ry);
  \draw[eq, black, rotate=\tilt] (0,0) ellipse ({\R} and \ry);
  \draw[vec] (0,0) -- (0,1.15);
\end{scope}

\draw[parrow] (-3.4,-0.1) to[bend right=35] (-4.4,-1.1);
\node[font=\huge] at (-4.9,0) {$\mathcal{P}_Z$};
\draw[parrow] (3.6,-0.1) to[bend left=35] (4.5,-1.1);
\node[font=\huge] at (5.0,0) {$\mathcal{P}_H$};
\draw[parrow] (-5.1,-8.0) to[bend right=35] (-4.5,-9.2);
\node[font=\huge] at (-6,-8.8) {$\mathcal{P}_H$};
\draw[parrow] (5.1,-8.0) to[bend left=35] (4.5,-9.2);
\node[font=\huge] at (6,-8.8) {$\mathcal{P}_Z$};
\node at (0.1,-11) {\scalebox{2}{$\neq$}};
\end{tikzpicture}

%% file: fig9.tex
\begin{tikzpicture}[scale=0.7, line width=1.5pt, line join=round]
 
  \def\rad{0.6}   
 
  \fill[myorange!30, rounded corners=\rad cm] (0.05,0) rectangle (7.5,3.32);
  \fill[myblue!30,   rounded corners=\rad cm] (3.5,0)  rectangle (11.2,3.32);
 
  \fill[myoverlap, rounded corners=\rad cm] (3.5,0) rectangle (7.5,3.32);
 
  \draw[myorange, rounded corners=\rad cm] (0.05,0) rectangle (7.5,3.32);
  \draw[myblue,   rounded corners=\rad cm] (3.5,0)  rectangle (11.2,3.32);
 
  \fill[mycirclefill] (5.57,1.66) circle (1.29);
  \draw[mygreen]      (5.57,1.66) circle (1.29);
 
  \draw[-{Latex[length=3mm]}, black, line width=1.2pt]
        (8.2,3.7) to[out=-50,in=30,looseness=1] (7.1,2.30);

\node[text=myorange] at (1.7,1.66) {\Large $\mathcal{A}\vee\mathcal{C}$};

\node[text=myblue] at (9.3,1.66)
    {\Large $\mathcal{B}\vee\mathcal{C}$};

\node[text=mygreen] at (5.57,1.66)
    {\Large $\mathcal{C}$};

\node[black]    at (5.57,4.15)  {\Large $(\mathcal{A}\vee\mathcal{C})\wedge(\mathcal{B}\vee\mathcal{C})$};

\end{tikzpicture}